\documentclass[11pt,letterpaper,english,notitlepage]{article}
\usepackage[latin9]{inputenc}
\usepackage{xcolor}
\usepackage[toc,page]{appendix}
\usepackage{pdfcolmk}
\usepackage{babel}
\usepackage{apalike}
\usepackage{amsmath}
\usepackage{array}
\usepackage{amsthm}
\usepackage{amssymb}
\usepackage{comment}
\usepackage{setspace}
\newtheorem{assumption}{Assumption}
\usepackage{mathrsfs,pgf,pgfplots,tabularx}
\usepackage{longtable,booktabs}
\usepackage{amsthm}
\usepackage{algorithm}
\usepackage{algpseudocode}

\usepackage[authoryear]{natbib}
\PassOptionsToPackage{normalem}{ulem}
\usepackage{ulem}
\DeclareMathOperator*{\argmin}{\arg\!\min}
\usepackage[unicode=true,pdfusetitle,
 bookmarks=true,bookmarksnumbered=false,bookmarksopen=false,
 breaklinks=true,pdfborder={0 0 0},pdfborderstyle={},backref=false,colorlinks=true]
 {hyperref}
\makeatletter

\pdfpageheight\paperheight
\pdfpagewidth\paperwidth

\providecolor{lyxadded}{rgb}{0,0,1}
\providecolor{lyxdeleted}{rgb}{1,0,0}

\DeclareRobustCommand{\lyxsout}[1]{\ifx\\#1\else\sout{#1}\fi}

\theoremstyle{plain}

\theoremstyle{plain}

\theoremstyle{plain}
\newtheorem*{assumption*}{\protect\assumptionname}
\theoremstyle{plain}
\newtheorem{theorem}{Theorem}[section]
\newtheorem{lemma}{Lemma}[section]
\newtheorem{corollary}{Corollary}
\@ifundefined{date}{}{\date{}}
\usepackage[margin=1in]{geometry}
\usepackage[multiple]{footmisc}
\usepackage{mathpazo}
\usepackage[flushleft]{threeparttable}
\usepackage{graphicx}
\usepackage{float}

\definecolor{webgreen}{rgb}{0,.5,0}
\definecolor{webbrown}{rgb}{.6,0,0}
\definecolor{webpurple}{rgb}{0.7,0,0.7}

\hypersetup{breaklinks = true,  letterpaper=false,
colorlinks=true, anchorcolor= webbrown, citecolor= blue,
filecolor= webbrown, linkcolor= webbrown, menucolor= webbrown,
urlcolor= webbrown, citebordercolor= 1 0 0, menubordercolor=1 0
0, urlbordercolor=1 0 0, runbordercolor=1 0 0 }

\makeatother

\providecommand{\assumptionname}{Assumption}
\providecommand{\corollaryname}{Corollary}

\providecommand{\theoremname}{Theorem}
\newenvironment{myproof}[1][\proofname]{\noindent\textbf{#1.}\ }{\qed\par}

\begin{document}
\title{
Bootstrap Inference in Nonlinear Panel Data Models with Interactive Fixed Effects\thanks{
We are grateful to participants at the KU Leuven Applied Micro PhD Workshop (2024), the Netherlands Econometric Study Group (2025), the International Association for Applied Econometrics Conference (2025), the International Panel Data Conference (2025), and the Asian Summer School in Econometrics and Statistics (2025).
We thank Andrei Zeleneev and Weisheng Zhang for sharing the codes from \citet{zeleneev2026tractable}. Haoyuan Xu acknowledges funding from the Research Foundation--Flanders (FWO) fellowship no.~1179925N. Wei Miao acknowledges financial support from the China Scholarship Council (grant no.~202306130036). Jad Beyhum acknowledges financial support from the Research Fund KU Leuven (grant STG/23/014).
}
}
\author{
Haoyuan Xu$^\dag$,
Wei Miao$^\ddag$,
Geert Dhaene$^\dag$,
Jad Beyhum$^\dag$
\\[6pt]
$^\dag$Department of Economics, KU Leuven\\
$^\ddag$ORSTAT, KU Leuven
}

\date{\today}

\maketitle
\vspace{-2em}
\begin{abstract}
\noindent
The maximum likelihood estimator in nonlinear panel data models with interactive fixed effects is biased. Several bias correction methods, such as analytical and jackknife approaches, have been proposed to enable valid inference. This paper shows that the parametric bootstrap also enables valid inference in such models. In particular, we show that the parametric bootstrap replicates the asymptotic distribution of the maximum likelihood estimator. Therefore, it yields asymptotically unbiased estimates and confidence sets with asymptotically correct coverage. We also propose a transformation-based bootstrap confidence interval that delivers improved finite-sample performance. Simulation results support the theoretical findings. Finally, we apply the proposed method to examine technological and product market spillover effects on firms' innovation behavior.
\end{abstract}
\textbf{Keywords:} Panel data, interactive fixed effects, incidental parameter bias, bootstrap inference, skewness correction.

\clearpage

\section{Introduction}
The importance of unobserved heterogeneity for economic analysis and the design of effective public policies has long been recognized by economists and policymakers. Interactive fixed effects models provide a flexible and powerful framework for capturing such heterogeneity. Unlike classical one-way fixed-effect models \citep{chamberlain1984panel}, which rely on agent-specific scalar parameters to absorb unobserved heterogeneity, interactive fixed-effect models allow for multidimensional latent components, thereby accommodating substantially richer forms of unobserved heterogeneity. Moreover, these models naturally incorporate aggregate time-varying shocks through a factor structure, a feature that is particularly important in macroeconomic and financial applications \citep{stock2002forecasting}.

Since the seminal contributions of \citet{pesaran2006estimation} and \citet{Bai2009}, a large body of research has developed the theoretical foundations of linear panel data models with interactive fixed effects. These advances have, in turn, facilitated the widespread empirical adoption of such models. Nevertheless, linear specifications are often inadequate in applied work. In many empirical settings, the outcome variable is discrete, rendering linear models potentially misspecified. Recent work, such as \citet{mao2025adaptive}, demonstrates that nonlinear factor models can provide a more appropriate framework in a wide range of empirical applications.

In contrast to the extensive literature on linear factor models \citep{Bai2009,MoonWeidner2015,MoonWeidner2017}, the theoretical foundations of nonlinear panel data models with interactive fixed effects are less developed. A notable recent contribution is \citet{ChenFVWeidner2021}, who study the maximum likelihood estimator (MLE) in nonlinear panel models with interactive fixed effects. Their approach treats the individual-specific and time-specific effects as fixed parameters and jointly estimates these nuisance parameters together with the common parameters of interest. This fixed-effect framework is attractive because it imposes no restrictions on the joint distribution of the unobserved effects and the covariates, thereby reducing the risk of model misspecification.

\citet{ChenFVWeidner2021} derive the asymptotic distribution of the MLE of the common parameters and show that it is asymptotically biased due to the incidental parameter problem. Importantly, this source of bias differs fundamentally from that encountered in linear factor models estimated by least squares, where bias in the slope parameters arises from cross-sectional and time-series dependence in the error terms \citep{Bai2009}. To address the incidental parameter bias, they propose analytical and jackknife corrections that eliminate the leading bias term and enable asymptotically valid inference. Relatedly, \citet{GaoLiuPengYan2023} study the MLE in binary choice models with interactive fixed effects and heterogeneous slope parameters.

Although analytical and jackknife bias corrections are asymptotically valid, their finite-sample performance may be limited in empirically relevant settings with moderate sample sizes and short time dimensions, as commonly encountered in microeconomic panel datasets. Analytical bias corrections rely on estimating the asymptotic bias and subtracting the estimated bias from the original estimator. However, as shown in \citet{ChenFVWeidner2021}, the asymptotic bias is a highly intricate expression involving up to third-order derivatives of the likelihood function, which may be numerically unstable. The split-panel jackknife method is comparatively straightforward to implement but can suffer from efficiency losses in finite samples \citep{hahn2024efficient}. Moreover, analytical and jackknife corrections rely on normal approximations to construct confidence intervals, which may be inaccurate.

Recently, \citet{HigginsJochmans2024} show that the parametric bootstrap delivers asymptotically valid confidence intervals without explicit bias correction in one-way fixed-effect nonlinear panel data models, and that bootstrap-based inference can exhibit superior finite-sample performance when the time dimension is short. In this paper, we extend the parametric bootstrap approach to nonlinear panel data models with interactive fixed effects. Unlike analytical bias-correction methods, the parametric bootstrap avoids the explicit calculation of the asymptotic bias by approximating it through parametric bootstrap simulations. An important feature of the parametric bootstrap, emphasized by \citet{HigginsJochmans2024}, is that it yields asymptotically valid inference even without preliminary bias correction. Confidence intervals are constructed directly from the quantiles of the bootstrap distribution rather than relying on normal approximations, which can improve accuracy in finite samples. We show that these properties of the parametric bootstrap also hold in nonlinear panel data models with interactive fixed effects.

Simulations corroborate our theoretical results. Furthermore, simulations indicate that bootstrap-based confidence intervals tend to be conservative in finite samples, with coverage probabilities often exceeding the nominal level and thus reflecting a systematic overestimation of uncertainty. This finding is in line with \citet{HigginsJochmans2024}, who document similar behavior in one-way fixed-effect models. They also note that iterating the bootstrap procedure can improve coverage accuracy, but at a substantial computational cost.

To mitigate this issue, we follow the bootstrap literature on skewness correction \citep{hall_removal_1992} and apply a monotone transformation to the estimator prior to constructing confidence intervals. The transformation reduces skewness in the bootstrap distribution, and intervals are then obtained from the transformed scale and mapped back. By the bootstrap delta method \citep{van1998asymptotic}, this procedure preserves asymptotic validity while improving finite-sample performance. Simulations show that the resulting intervals are substantially shorter and achieve more accurate coverage.

All bias correction methods considered in this literature rely on the MLE, i.e., the global maximizer of the loglikelihood function, being obtainable. However, in practice, the loglikelihood function in nonlinear factor models is generally non-concave, implying that standard optimization algorithms may converge to a local instead of the global maximizer. \citet{ChenFVWeidner2021} propose an EM-type algorithm, which guarantees convergence only to a local maximizer. They suggest using multiple starting values to increase the probability of reaching the global maximizer; nevertheless, this strategy does not ensure global maximization and can substantially increase the computational burden. More recently, \citet{zeleneev2026tractable} propose a computationally efficient two-step estimator based on nuclear-norm penalization, which is shown to attain the global maximizer with probability one and is asymptotically equivalent to the MLE. In this paper, we adopt their two-step estimator in conjunction with the bootstrap procedure, thereby ensuring global maximization at low computational cost and preserving asymptotically valid inference.
\\[0.5em]
\noindent \textbf{Related literature}

\noindent
This paper contributes to the literature on large-$T$ bias correction in nonlinear panel data models with fixed effects. Since the seminal work of \citet{HahnNewey2004}, many studies have aimed to correct the first-order incidental parameter bias in a large-$T$ framework. This includes research on one-way fixed effects panel models \citep{fernandez2009fixed,HahnKuersteiner2011,DhaeneJochmans2015,arellano2016likelihood,HigginsJochmans2024}, additive two-way fixed effects panel models \citep{FVWeidner2016}, and network models \citep{Graham2017,dzemski2019empirical,yan2019statistical,HUGHES2026106130}; see \citet{FVWeidner2018} for a review. More recently, some studies have begun to address second-order bias \citep{DHAENE2021227, schumann2023second}
as well as higher-order bias correction \citep{dhaene2023approximate,bonhomme2024neyman}.

We also contribute to the literature on fixed-effect estimation in models with interactive fixed effects. Linear panel models with interactive fixed effects have been extensively studied, including the determination of the number of factors \citep{bai2002determining, onatski2009testing, ahn2013eigenvalue}, the estimation of common parameters \citep{Bai2009, MoonWeidner2015, MoonWeidner2017, BeyhumGautier2019, BeyhumGautier2022}, inference in the presence of weak factors \citep{Onatski2012, BaiNg2023, ChoiYuan2025, jiang2025biascorrectionfactoraugmentedregression, ArmstrongWeidnerZeleneev2025}, and extensions to network models \citep{sassi2024linear}; see \citet{BaiWang2016} for a comprehensive review. The theoretical development for nonlinear factor models remains comparatively limited. Notable contributions in this area include \citet{WANG2022180}, who studies the fixed-effect MLE in nonlinear pure factor models; \citet{GaoLiuPengYan2023}, who analyze binary choice models with interactive fixed effects and heterogeneous slope parameters and propose an information criterion for selecting the number of factors; and \citet{ChenFVWeidner2021}, who address bias correction in general nonlinear factor models and introduce an eigenvalue ratio test for factor number selection. More recently, \citet{zeleneev2026tractable} and \citet{yao2025low} propose a nuclear-norm penalized estimator that provides computational guarantees, facilitating reliable estimation.
\citet{boneva2017discrete} and \citet{chen2025common} generalize the CCE estimator to nonlinear settings.

Our work is also closely related to bootstrap-based bias correction and inference methods in the panel data literature. \citet{GONCALVES2015407} study bootstrap inference for linear dynamic panel models with individual fixed effects. \citet{KimSun2016} and \citet{HigginsJochmans2024} adopt parametric bootstrap methods for nonlinear panel models with one-way fixed effects. \citet{gonccalves2011moving} and \citet{higgins2025inference} introduce moving-block bootstrap procedures for dynamic panel data models.  \citet{LI2024105684} employ bootstrap methods for inference for treatment effects in interactive fixed-effect panel models. \citet{cavaliere2024bootstrap} develop a general framework for bootstrap inference in the presence of bias, showing that even when the bias cannot be consistently estimated, appropriately designed bootstrap methods can provide valid inference. Finally, our work is also related to the literature that uses transformation-based methods to improve bootstrap confidence intervals  \citep{efron1987better,konishi_normalizing_1991,hall_removal_1992}.

\section{Methodology}
\subsection{Model setup}
We consider nonlinear panel data models with interactive fixed effects. The model specification is
\begin{equation}\label{model 2.1}
Y_{it}|X_{it},\beta_{0},\alpha_0,\gamma_0\sim f\left(\cdot|X_{it}^{\prime}\beta_0+\alpha_{i0}^{\prime}\gamma_{0t}\right)
\end{equation}
for $i=1,\ldots,N$ and $t=1,\ldots,T$, where we observe 
$\{Y,X\}=\{Y_{it},X_{it}\}_{i=1\ldots,N;t=1,\ldots,T}$, $Y_{it}$ is the scalar response variable, $X_{it}$ is a $d_x$-dimensional covariate vector, the function $f$ is a known density with respect to some dominating measure, $\beta_0$ is the common parameter vector, and $\alpha_{i0}$ and $\gamma_{0t}$ are unobserved $d_f$-dimensional individual and time effects that appear in the model through a factor structure. We treat the number of factors $d_f$ as known.\footnote{In practice, we can estimate $d_f$ consistently; see \citet{ChenFVWeidner2021} and \citet{GaoLiuPengYan2023}.} 

We aim to estimate $\beta_0$, treating 
$\alpha_{0} = (\alpha_{1,0}, \ldots, \alpha_{N,0})$ and 
$\gamma_{0} = (\gamma_{1,0}, \ldots, \gamma_{T,0})$ as nuisance parameter matrices.
We make no assumptions on the joint distribution of $(X,\alpha_{0},\gamma_{0})$.
This flexibility is important since it reduces the likelihood
of model misspecification. Unlike the individual fixed-effect model \citep{HahnNewey2004, HahnKuersteiner2011}, which only allows the individual effect 
to be a time-invariant scalar, the interactive fixed-effect model 
accommodates multiplicative fixed effects. 
The additive two-way fixed-effect model \citep{FVWeidner2016}, which includes time effects 
to capture aggregate shocks---a common feature in economic applications---can, in fact, be viewed as a special case of model \eqref{model 2.1}. 
Interactive fixed effects allow for a much richer structure 
of unobserved heterogeneity \citep{Freyberger2018}.

\subsection{Interactive fixed-effect estimator}

Let $\mathcal{B}\subset \mathbb{R}^{d_x}$, $\mathcal{A}\subset  \mathbb{R}^{d_f}$, and $\mathcal{G}\subset  \mathbb{R}^{d_f}$ denote the parameter spaces of $\beta_0$, $\alpha_{i0}$, and $\gamma_{t0}$, respectively, and let $\Theta = \mathcal{B} \times \mathcal{A}^N \times \mathcal{G}^T$ denote the full parameter space of $\theta_0 = (\beta_0, \alpha_0, \gamma_0)$. 
The MLE of $\theta_0$ is
\begin{equation}\label{mle problem}
\hat\theta=
(\hat{\beta},\hat{\alpha},\hat{\gamma})
=\underset{(\beta, \alpha, \gamma) \in \mathcal{B} \times \mathcal{A}^N \times \mathcal{G}^T}{\arg\max}\,\mathcal{L}(\beta, \alpha, \gamma; Y, X),
\end{equation}
where $\mathcal{L}(\beta, \alpha, \gamma; Y, X)$ is the loglikelihood function,
\begin{equation}
\mathcal{L}(\beta, \alpha, \gamma; Y, X)
= \sum_{i=1}^N \sum_{t=1}^T 
\log f\!\left(Y_{it} \mid X_{it}^{\prime}\beta + \alpha_{i}^{\prime}\gamma_{t}\right).
\label{eq:loglikelihood}
\end{equation}
Since the dimension of the nuisance parameter matrices $\alpha_0$ and $\gamma_0$ increase with $N$ and $T$,
the estimation of $\beta_0$ depends on a large number of nuisance parameter estimates, which can induce substantial bias in $\hat\beta$. Even as $N,T \to \infty$, this bias persists in the limiting distribution of $\hat{\beta}$ and thus affects inference. Our main contribution is to show that the parametric bootstrap eliminates the asymptotic bias and delivers asymptotically valid confidence intervals.

As in the linear factor model \citep{Bai2009}, the unobserved effects $\alpha_{i0}$ and $\gamma_{t0}$ can only be identified up to a rotational indeterminacy since, for any nonsingular $d_f\times d_f$ matrix $A$, we have
$\alpha_{i0}'\gamma_{t0}=(A'\alpha_{i0})'(A^{-1}\gamma_{t0})$ for all $i$ and $t$.
Different normalizations do not affect the estimation of $\beta_0$ nor the estimation of average partial effects (APEs). For further discussion on normalization in factor models, see 
\citet{BaiWang2016}.  

We are also interested in APEs. For a chosen function $\mu_{it}(\beta,\alpha_{i},\gamma_{t})$ that characterizes the causal effect of interest to the researcher, we define the APE as
\begin{align}\label{APE}
\varDelta(\theta_0) = \frac{1}{NT}\sum_{i=1}^{N}\sum_{t=1}^{T}\mathbb{E}_{\theta_0}\!\left[\mu_{it}(\beta_0,\alpha_{i0},\gamma_{t0})\right],
\end{align}
where $\mathbb{E}_{\theta_0}$ denotes the conditional expectation, given the true values of the unobserved effects.\footnote{We focus on conditional APEs, given the realizations of the unobserved effects. Studying unconditional APEs requires additional assumptions on the distribution of $\{\alpha_{i0}\}_{i=1,\ldots,N}$ and $\{\gamma_{t0}\}_{t=1,\ldots,T}$, such as i.i.d.\ sampling or weak dependence. For further discussion, see \citet{FVWeidner2016}.}
For example, if $Y_{it}$ is binary and we consider the effect of a binary covariate $X_{it,k}$ on the probability that $Y_{it}=1$, we set $\mu_{it}(\beta,\alpha_{i},\gamma_{t})= f(1|Z_{it}+\beta_k)-f(1|Z_{it})$, where 
$Z_{it}=X_{it,-k}^{\prime}\beta_{-k}+\alpha_i'\gamma_t$, and $X_{it,-k}$ and $\beta_{-k}$ denote $X_{it}$ and $\beta$, respectively, with their $k$-th element removed.
If $Y_{it}$ is binary and $X_{it,k}$ is continuous, the average marginal effect of interest is obtained by setting $\mu_{it}(\beta,\alpha_{i},\gamma_{t}) = \beta_k\, f(1|X_{it}^{\prime}\beta+\alpha_{i}^{\prime}\gamma_{t})$.
A natural estimator of $\varDelta(\theta_0)$ is the MLE,
\begin{align}\label{plugin-ape}
\varDelta(\hat{\theta})
= \frac{1}{NT}\sum_{i=1}^{N}\sum_{t=1}^{T}
\mu_{it}(\hat{\beta},\hat{\alpha}_{i},\hat{\gamma}_{t}),
\end{align}
which inherits the bias of $(\hat{\beta},\hat{\alpha},\hat{\gamma})$.

\citet{ChenFVWeidner2021} study the MLE of model~\eqref{model 2.1} and propose a PCA-based algorithm to compute the estimator. They also provide analytical and jackknife corrections for the incidental parameter bias of
$\hat\beta$ and $\varDelta(\hat{\theta})$, thereby restoring asymptotically valid inference.
However, when $T$ is small, the analytical and jackknife bias corrections
may not always perform well; see our simulation results in Section 5. Recently, \citet{HigginsJochmans2024}
showed that the parametric bootstrap corrects the first-order
bias and provides asymptotically valid confidence intervals for $\beta_0$ in nonlinear
panel models with individual effects. Their parametric bootstrap confidence intervals
do not rely on normal approximations, and perform better compared to analytical corrections.
In this paper, we extend this idea to interactive fixed-effect models
and show in simulations that the parametric bootstrap delivers better small-$T$ performance in nonlinear interactive fixed-effect panel models compared with existing bias-correction methods. 
We further propose an improved bootstrap confidence interval based on a monotone transformation of the estimator.

\subsection{Bootstrap inference}
In this section, we describe the parametric bootstrap and informally explain why it yields a bias-corrected estimate of $\beta_0$ and asymptotically correct confidence intervals for $\beta_0$.
We begin with the data-generating process (DGP) used to generate 
$B$ bootstrap samples, each denoted by \(Y^*= \{Y^*_{it}\}_{i=1\ldots,N;t=1,\ldots,T} \).
Under the assumption that the covariates $X$ are strictly exogenous, the parametric bootstrap DGP is obtained 
from model \eqref{model 2.1} by plugging in the MLE \( \hat{\theta} \) and keeping \( X \) fixed:
\begin{equation}\label{bootstrap DGP}
Y^*_{it}|X_{it},\hat{\beta},\hat{\alpha},\hat{\gamma}\sim f\left(\cdot|X_{it}^{\prime}\hat{\beta}+\hat{\alpha_{i}}^{\prime}\hat{\gamma_{t}}\right).
\end{equation}
The bootstrap DGP in \eqref{bootstrap DGP} is the same as in \citet{HigginsJochmans2024} except that we have interactive fixed effects instead of one-way individual effects.

For each bootstrap dataset \( \{Y^*, X\} \), with $Y^*$ independently drawn according to (\ref{bootstrap DGP}), we compute the MLE
\[
\hat{\theta}^*
=(\hat\beta^*, \hat{\alpha}^*, \hat{\gamma}^*) =
\underset{(\beta, \alpha, \gamma) \in \mathcal{B} \times \mathcal{A}^N \times \mathcal{G}^T}{\arg\max}
\mathcal{L}( \beta, \alpha, \gamma;Y^*, X).
\]
Since \( \hat{\theta} \) is a consistent estimator---in the sense that, as $N,T\to\infty$, all of $\hat{\beta},\hat{\alpha}_i,\hat{\gamma}_t$, for every $i$ and $t$, are consistent for $\beta_0,\alpha_{i0},\gamma_{t0}$---\( \hat{\theta} \) 
should be close to \( \theta_0 \) with high probability when $N$ and $T$ are sufficiently large. Hence, under certain smoothness conditions, the parametric bootstrap DGP in \eqref{bootstrap DGP} will also be close to the true DGP, that is, to the DGP in \eqref{model 2.1}, which has $\theta=\theta_0$. Furthermore, the bootstrap datasets \( \{Y^*, X\} \) and the corresponding MLEs generated by these two very similar DGPs are expected to exhibit similar properties.

Under suitable regularity conditions, $\hat{\beta}^*$ admits an expansion analogous to that of $\hat{\beta}$. Moreover, the first-order term in the expansion of $\hat{\beta}^*$ converges in probability to the corresponding term in the expansion of $\hat{\beta}$. This implies that the bias
$\mathbb{E}_{\theta_0}(\hat{\beta}) - {\beta_0}$ can be accurately appoximated by
$\mathbb{E}^*(\hat{\beta}^*) - \hat{\beta}$, 
where $\mathbb{E}^*$ denotes the expectation under the parametric bootstrap DGP \eqref{bootstrap DGP}. 
In turn, $\mathbb{E}^*(\hat{\beta}^*) - \hat{\beta}$ can be approximated arbitrarily well by averaging over a large number, $B$, of parametric bootstrap replications. This average can be used to eliminate the first-order bias of $\hat{\beta}$, yielding the parametric bootstrap bias-corrected estimator
\begin{equation*}
\hat{\beta}_{\rm boot}=
        \hat{\beta}    -          \left(\frac{1}{B}\sum_{b=1}^B \hat{\beta}_b^* - \hat{\beta}  \right)
    = 2 \hat\beta - \frac{1}{B}\sum_{b=1}^B \hat{\beta}_b^*,
  \end{equation*}
where $\hat{\beta}_b^*$ is $\hat{\beta}^*$ computed from the $b$-th bootstrap dataset. Since the first-order bias is eliminated, 
\begin{equation*}
    \mathbb{E}_{\theta_0}\!(
        \hat{\beta}_{\rm boot} 
    ) - \beta_0 
    =
    o\!\left(
        \frac{1}{N} + \frac{1}{T}
    \right).
\end{equation*}
Note that, like other bias correction methods, the parametric bootstrap bias correction only provides asymptotic guarantees of bias reduction. With finite $N$ and $T$, it may amplify higher-order bias terms and, therefore, amplify the bias. In Section~4, we examine the finite-sample performance of the parametric bootstrap through simulations.

Another important observation is that the parametric bootstrap replicates the distribution of \( \hat{\beta} \) with a vanishingly small error.
For every \( a< b \in \mathbb{R}^{d_x} \),
\begin{equation*}
    \mathbb{P}_{\theta_0}(a \leq \sqrt{NT}(\hat{\beta}^* - \hat{\beta}) \leq b) - \mathbb{P}_{\theta_0}(a \leq \sqrt{NT}(\hat{\beta} - \beta_0) \leq b) \to 0,
\end{equation*}
where $\mathbb{P}_{\theta_0}(\cdot)$ denotes probabilities taken under the DGP \eqref{model 2.1}. 
As shown in \citet{HigginsJochmans2024}, this yields asymptotically valid confidence intervals for $\beta_0$ without the need for bias correction. 
To illustrate, we consider the simple case where $d_x= 1$. In case $d_x > 1$, a valid confidence interval for any linear combination $c^\prime \beta_0$ can be constructed similarly. 
Let $F^*$ denote the conditional distribution function of $\hat{\beta}^* - \hat{\beta}$ under DGP~\eqref{bootstrap DGP} given the original sample;  $F^*$ can be estimated to arbitrary precision by the empirical distribution of the parametric bootstrap replicates $ \hat{\beta}_b^*$ by setting $B$ large enough. In the next section, we formally show that \( F^* \) weakly converges to the limit distribution of \( \hat{\beta} -\beta_0\). Similarly, the quantile function corresponding to \( F^* \), defined as
\[
Q^*(\alpha) := \inf\{a: F^*(a) \geq \alpha\}, \qquad \alpha\in\mathbb{R},
\]
also converges to the quantile function of the limiting distribution of \( \hat{\beta}-\beta_0 \). Hence, we can simply use \( Q^*(\alpha) \) as the estimator of the \( \alpha \)-quantile of the distribution of \( \hat{\beta} - \beta_0 \), leading to an asymptotically valid confidence interval:  for every $\alpha\in(0,1)$,
\begin{equation}\label{CI}
\mathbb{P}_{\theta_0}\left(\beta_0 \in \left[ \hat{\beta} - Q^*(1-\alpha/2), \hat{\beta} - Q^*(\alpha/2) \right]\right) \to 1-\alpha,
\end{equation}
where the interval in brackets is a $(1-\alpha)$-level confidence interval for a scalar \( \beta_0 \). The interval can also be written as
\begin{equation*}
\left[ 2\hat{\beta} - Q^*_{\hat\beta^*}(1-\alpha/2), 2\hat{\beta} - Q^*_{\hat\beta^*}(\alpha/2) \right],
\end{equation*}
where $Q^*_{\hat\beta^*}$ is the conditional quantile function of $\hat\beta^*$ given the original sample.

Other methods to construct confidence intervals often rely on the asymptotic normality of \( \hat{\beta} \). They require a consistent estimator, \( \hat{V} \), of the asymptotic variance $V$ of \( \hat{\beta} \),
and an asymptotically unbiased estimator $\hat{\beta}^{bc}$ with the same asymptotic variance $V$. This gives,
for example, \( [\hat{\beta}^{bc} - 1.96 \sqrt{\hat{V}}, \hat{\beta}^{bc} + 1.96 \sqrt{\hat{V}}] \) as a 95\% confidence interval for a scalar $\beta_0$. The parametric bootstrap
does not require a bias-corrected estimator of $\beta_0$ nor a consistent estimator of $V$. Instead, it simply uses the quantile function of \( \hat{\beta}^* - \hat{\beta} \) and it does not rely on the Gaussian approximation to the distribution
of $\hat{\beta}$.

While the parametric bootstrap delivers asymptotically valid inference, its finite-sample performance may be affected by skewness in the distribution of $\hat{\beta}$, especially when the time dimension is small. As discussed above, the bootstrap procedure reproduces and may even amplify this skewness, which can result in asymmetric confidence intervals with overly wide bounds and thus lead to conservative inference.

To mitigate this issue, we propose to apply a monotone transformation to the bootstrap estimator prior to constructing confidence intervals. Let $\varphi(\cdot)$ be a strictly increasing transformation. We then construct the transformed bootstrap confidence interval as
\begin{equation}\label{transform CI}
\left[
\varphi^{-1}\!\left(2\varphi(\hat{\beta}) - Q^*_{\varphi(\hat{\beta}^*)}(1-\alpha/2)\right), \;
\varphi^{-1}\!\left(2\varphi(\hat{\beta}) - Q^*_{\varphi(\hat{\beta}^*)}(\alpha/2)\right)
\right],
\end{equation}
where $Q^*_{\varphi(\hat{\beta}^*)}$ is the conditional quantile function of $\varphi(\hat{\beta}^*)$ given the original sample.

Since $\varphi(\cdot)$ is strictly increasing, this transformation preserves the ordering of the estimator and thus maintains asymptotic validity. At the same time, by reducing skewness in the bootstrap distribution, it yields more symmetric and tighter confidence intervals, thereby improving finite-sample performance.

An alternative approach to account for skewness is the bias-corrected and accelerated (BCa) bootstrap of \citet{efron1987better}. However, this method is not well suited to our setting. The BCa procedure adjusts confidence intervals based on an estimate of bias and skewness inferred from the bootstrap distribution. When the estimator exhibits non-negligible asymptotic bias, as in our case, the bootstrap distribution inherits this bias and is effectively shifted relative to the true sampling distribution.

As a result, the BCa correction may confound bias with skewness and mischaracterize the shape of the distribution. For example, if the asymptotic bias shifts the distribution to the right, the BCa method may interpret this shift as evidence of skewness and consequently adjust the confidence interval in the opposite direction, leading to an excessive leftward shift. This can distort coverage and produce misleading inference. In contrast, our transformation-based approach directly addresses the asymmetry of the distribution while leaving its location unaffected, thereby providing a more reliable correction in the presence of skewness.

By analogy to (\ref{CI}) and (\ref{transform CI}), we can also construct a parametric bootstrap confidence interval for $\varDelta({\theta_0})$. By plugging $\hat{\theta}^*$ into the APE, we obtain 
$$\ \varDelta(\hat{\theta}^*) = \frac{1}{NT} \sum_{i=1}^N \sum_{t=1}^T \mu(X_{it},\hat{\beta}^*,\hat{\alpha}_i^*,\hat{\gamma}_t^*),$$
which allows us to construct a bootstrap confidence interval for
$\varDelta({\theta_0})$, since the distribution of \( \sqrt{NT}(\varDelta(\hat{\theta}^*) - \varDelta(\hat{\theta})) \) approximates that of \( \sqrt{NT}(\varDelta(\hat{\theta}) - \varDelta(\theta_0)) \) as \( N,T \) becomes large.

\section{Asymptotic theory}

In this section, we study the asymptotic properties of the parametric bootstrap in an asymptotic regime where $N$ and $T$ go to infinity with $T/N \to \kappa \in (0,\infty)$. For simplicity, we focus on the original bootstrap estimator. The results for the transformed bootstrap can be obtained by combining bootstrap consistency for the transformed estimator, established via the bootstrap delta method\footnote{This requires that $\varphi$ be continuously differentiable at the true parameter $\beta_0$ with $\varphi'(\beta_0)\neq 0$, so that the delta method and its bootstrap analogue apply; see, e.g., Chapter 23 of \citet{van1998asymptotic}.}, with the fact that, under monotonicity, the coverage of the confidence interval for $\beta_0$ is equivalent to that for $\varphi(\beta_0)$.

Our proofs closely follow those of \citet{KimSun2016} and \citet{HigginsJochmans2024}. We first extend the asymptotic expansion for the general M-estimator established in \citet{FVWeidner2016} to ensure that it holds not only at the true parameter value but also uniformly in a neighborhood of the true value. Achieving this requires that the assumptions in \citet{FVWeidner2016} hold uniformly in a small neighbourhood of $\theta_{0}$. Similar uniformity assumptions are imposed in \citet{HigginsJochmans2024} for the one-way fixed-effect model.
Then we verify that nonlinear panel models with interactive fixed effects satisfy these high-level conditions, thereby establishing the validity of uniform expansions for the MLEs of $\beta_0$ and $\varDelta(\theta_0)$. These uniformity results, combined with the consistency of the MLE and a lemma from \citet{Andrews2005}, show the validity of the parametric bootstrap to approximate the distribution of the MLE.

We treat the unobserved effects as fixed parameters. Alternatively, our analysis can be interpreted as being conditional on the realization of the unobserved effects, as in \citet{FVWeidner2016} and \citet{ChenFVWeidner2021}, provided that the assumptions are modified by conditioning on $(\alpha_0, \gamma_0)$. See Remark~1 of \citet{HahnKuersteiner2011} for more discussion.

Since we generate the bootstrap sample $\{Y^{*}, X\}$ based on model~\eqref{bootstrap DGP}, the covariates are kept fixed when constructing the bootstrap sample. Throughout the paper, we let $\mathbb{E}_{\theta}$ and $\mathbb{P}_{\theta}$ denote the expectation and probability operators under the bootstrap data-generating process.\footnote{Strictly speaking, since the joint distribution of the covariates $X$ and the unobserved effects $\alpha_0,\gamma_0$ is unrestricted, the bootstrap data-generating process is indexed by $(\theta,\alpha_0,\gamma_0)$. Accordingly, the corresponding expectation and probability operators should be written as $\mathbb{E}_{\theta,\alpha_0,\gamma_0}$ and $\mathbb{P}_{\theta,\alpha_0,\gamma_0}$. For notational simplicity, we suppress the dependence on $\alpha_0$ and $\gamma_0$.}

We make the following assumptions. They are similar to, but slightly stronger than, Assumption~1 in \citet{ChenFVWeidner2021}, as some of our assumptions are required to hold uniformly over a neighborhood $\Theta_0$ of the true parameter value.

\begin{assumption}[]\label{assu:parameterspace}
The parameter space $\Theta$ is a compact set. Furthermore, there  
exists a constant $\varepsilon > 0$  and an open neighborhood $\Theta_0 \subset \Theta$ that contains $\theta_0$ and satisfies $d(\theta, \theta_0) < \varepsilon$ 
for every $\theta \in \Theta_0$, where
\begin{align}\label{metric}
d(\theta, \theta_0)
= \big\Vert 
\left(
\beta^{\prime},\, \mathrm{vec}(\alpha)^{\prime},\, \mathrm{vec}(\gamma)^{\prime}
\right)^{\prime}
-
\left(
\beta_0^{\prime},\, \mathrm{vec}(\alpha_0)^{\prime},\, \mathrm{vec}(\gamma_0)^{\prime}
\right)^{\prime}
\big\Vert
\end{align}
and $\Vert \cdot \Vert$ denotes the Euclidean norm.
\end{assumption}

The following assumption involves a mixing condition, which we use to bound the moments. Let $\mathcal{A}_t^i$ and $\mathcal{B}_t^i$ be the $\sigma$-algebras generated by $\{ X_{i1}, Y_{i1}. \ldots, X_{it}, Y_{it} \}$
and $\{ X_{it}, Y_{it}, \allowbreak X_{i,t+1}, Y_{i,t+1} , \ldots \}$, respectively, and define the 
$\alpha$-mixing coefficients
\[
\alpha_i(m,\theta) := \sup_t \sup_{A \in \mathcal{A}_t^i,\, B \in \mathcal{B}_{t+m}^i} 
\left| \mathbb{P}_\theta(A \cap B) - \mathbb{P}_\theta(A)\mathbb{P}_\theta(B) \right|.
\]

\begin{assumption}[Mixing]\label{assu:correct}
(i) Conditional on \( X \), the data $Y=\{Y_{it}\}_{i=1\ldots,N;t=1,\ldots,T}$ 
are generated from model \eqref{model 2.1}.  (ii) For each $t$, the data $\{X_{it},Y_{it}\}_{i=1,\ldots,N}$ is independent across $i$, and
there exists a constant $\mu>0$ such that, for each $i$, $\{X_{it},Y_{it}\}_{t=1,\ldots,T}$  is $\alpha$-mixing with mixing coefficients $\alpha_i(m,\theta)$ satisfying $\sup_{\theta\in\Theta_0}\max_i \alpha_i(m,\theta)=O(m^{-\mu})$ as $m\to\infty$.
\end{assumption}

We assume that the covariates are strictly exogenous.\footnote{However, our approach could be extended to include lagged dependent variables affecting $Y_{it}$. More general feedback mechanisms---such as cases where $Y_{it}$ affects $X_{is}$ for $s>t$---are excluded. For further discussion, see \citet{ChenFVWeidner2021}.} The conditional independence in Assumption \ref{assu:correct}(i) is commonly adopted in the fixed-effects panel data literature \citep{HahnNewey2004, Bonhomme2012, FVWeidner2016, ChenFVWeidner2021}, as it implies that the conditional distribution of $Y_{it}$ does not depend on $\alpha$, $\gamma$, or $X$ beyond what is captured by the specified model.\footnote{Recently, several papers studying average partial effects in nonlinear panel data models with individual fixed effects relax this assumption \citep{GrahamPowell2012, liu2024identification, BotosaruMuris2024}.} 
The mixing condition, also assumed in \citet{FVWeidner2016}, is used to bound the moments and covariances, and to apply the WLLN and CLT. Moreover, we require the mixing coefficients to decay uniformly over $\theta \in \Theta_0$ to uniformly bound the moments, which is necessary to establish uniform convergence of the bias term and uniform asymptotic normality. The uniform decay condition is also adopted in \citet{HigginsJochmans2024}.

\begin{assumption}[Strong factors]\label{ass:strong factor}
There exist positive definite matrices $\Sigma_{\alpha}$ and $\Sigma_{\gamma}$ such that
$\frac{1}{N}\sum_{i=1}^{N} \alpha_{i0}\alpha_{i0}^{\prime} \;\to\; \Sigma_{\alpha}$ and
$\frac{1}{T}\sum_{t=1}^{T} \gamma_{t0}\gamma_{t0}^{\prime} \;\to\; \Sigma_{\gamma}$.
\end{assumption}

Assumption~\ref{ass:strong factor}, commonly referred to as the strong factors assumption, is widely used in panel interactive fixed effects models \citep{Bai2009, MoonWeidner2015, MoonWeidner2017,   ChenFVWeidner2021, GaoLiuPengYan2023}.\footnote{For discussion of weak factors, see \citet{Onatski2012, BeyhumGautier2019, BeyhumGautier2022, BaiNg2023, ChoiYuan2025, jiang2025biascorrectionfactoraugmentedregression, ArmstrongWeidnerZeleneev2025}.} In our setting, this assumption is required to establish a uniform convergence rate of the MLE, which is needed to apply the asymptotic expansion in Appendix~\ref{Appendix C}, a uniform version of the expansion in \citet{FVWeidner2016}. A necessary step is to ensure that the matrices $\frac{1}{N}\sum_{i=1}^{N}\alpha_{i}\alpha_{i}^{\prime}$ and $\frac{1}{T}\sum_{t=1}^{T}\gamma_{t}\gamma_{t}^{\prime}$ converge uniformly, over $\theta \in \Theta_{0}$, to positive definite limits. Given the fact that $\Theta_{0}$ is an $\epsilon$-neighborhood of $\theta_0$, the differences
$\frac{1}{N}\sum_{i=1}^{N}\alpha_{i}\alpha_{i}^{\prime}-\frac{1}{N}\sum_{i=1}^{N}\alpha_{i0}\alpha_{i0}^{\prime}$ and
$\frac{1}{N}\sum_{i=1}^{N}\gamma_{i}\gamma_{i}^{\prime}-\frac{1}{N}\sum_{i=1}^{N}\gamma_{i0}\gamma_{i0}^{\prime}$ 
converge to zero for every $\theta\in\Theta_0$. Therefore, Assumption~\ref{ass:strong factor} is sufficient to establish the required uniform consistency result. 
For a formal proof, see Lemma~\ref{lemma:A1}.

Let $X_k$ be the $N \times T$ matrix elements $X_{it,k}$ ($i = 1, \ldots, N$; $t = 1, \ldots, T$).
For any matrix $A$, define the residual-maker matrix 
$\mathcal{M}_A = \mathbb{I} - A (A^{\prime} A)^{\dagger} A^{\prime}$
where ${}^\dagger$ denotes the Moore-Penrose inverse and $\mathbb{I}$ is the identity matrix of appropriate dimension. Let $\operatorname{Tr}(\cdot)$ denote the trace of a matrix.

\begin{assumption}[Generalized non-collinearity]\label{Generalized non-collinearity}
 There exists a constant $c>0$ such that, for every $\theta\in\Theta_0$ and every $d_x \times T$ matrix $\widetilde\gamma$,
the $d_x \times d_x$ matrix $D(\alpha, \widetilde\gamma)$ with elements
\[
D_{k_1 k_2}(\alpha, \widetilde\gamma)
= (NT)^{-1} \operatorname{Tr}\!\left(
\mathcal{M}_{\alpha'} X_{k_1} \mathcal{M}_{\widetilde\gamma'} X_{k_2}^{\prime}
\right),
\quad k_1, k_2 \in \{1, \ldots, d_x\},
\]
satisfies 
$\inf_{\theta \in \Theta_0} \lambda_{\min}\!\left(D(\alpha, \widetilde\gamma)\right) > c$
wpa1, where $\lambda_{\min}(\cdot)$ denotes the smallest eigenvalue.
\end{assumption}

Assumption~\ref{Generalized non-collinearity} requires that the unobserved factors and factor loadings cannot fully capture the variation of any covariate, and that this condition holds not only for the true factors, but also for all $\alpha$ in a neighborhood of $\alpha_0$. It ensures that $\beta_0$ is point-identified.

Let $Z_{it}(\theta)=X_{it}^{\prime}\beta + \alpha_i^{\prime}\gamma_t$ and define
$\ell_{it}(z) = \log f\!(Y_{it}|Z_{it}(\theta)=z)$.
Let $\partial_{z^q} \ell_{it}(z)$ denote the $q$-th derivative of $\ell_{it}(z)$.
When evaluating these functions at $z=Z_{it}(\theta_0)$, we omit the argument for simplicity.
For instance, we write $\partial_{z^{q}} \ell_{it}$ instead of 
$\partial_{z^{q}} \ell_{it}(Z_{it}(\theta_0))$.

For $z \in \mathbb{R}^q$ and $\delta > 0$, let 
$\mathcal{B}(\delta, z) \subset \mathbb{R}^q$ denote the open ball of radius 
$\delta$ centered at $z$, taken with respect to the topology induced by the 
Euclidean metric on $\mathbb{R}^q$.

\begin{assumption}[Smoothness and moments]\label{assu:smooth}
 There exists $\delta > 0$ such that for every $i,t$ and every $\theta \in \Theta_0$,  the following conditions hold almost surely:
\begin{itemize}
    \item[(i)] The function $\ell_{it}(\cdot)$ is four times continuously differentiable on 
    $\mathcal{B}(\delta,Z_{it}(\theta))$.
    \vspace{0.4em}
    \item[(ii)] The derivatives of $\ell_{it}(\cdot)$ up to fourth order are 
    uniformly bounded in absolute value on $\mathcal{B}(\delta,Z_{it}(\theta))$ by a random envelope 
    $L_{it}(Z_{it}(\theta),\theta) > 0$ satisfying
    \[
    \sup_{\theta \in \Theta_0} 
    \max_{i,t} 
    \mathbb{E}_{\theta}\!\left[L_{it}(Z_{it}(\theta),\theta)^{\,8+\nu} \right] < c,
    \]
    for some constants $\nu>0$ and $c>0$, uniformly in $N,T$.
    \item[(iii)] The regressors $X_{it}$ are uniformly bounded in absolute value over $i,t,N,T$.
\end{itemize}
\end{assumption}

\begin{assumption}[Concave loglikelihood]\label{assu:convexity}
The function $\ell_{it}(z)$ is strictly concave in $z$ a.s.  
Furthermore, there exist constants $\underline{b}>0$, $\overline{b}>0$, and $\delta>0$ such that, for every $\theta \in \Theta_0$,
\[
\underline{b} \leq  \min_{i,t} 
\{-\partial_{z^2} \ell_{it}(z)\}
\leq  \max_{i,t} 
\{-\partial_{z^2} \ell_{it}(z)\} \leq \overline{b} \quad \text{a.s.},
\]
for all $z \in \mathcal{B}(\delta, Z_{it}(\theta))$ and all $N,T$.
\end{assumption}


 In Assumption \ref{assu:smooth}, the dominating function \(L_{it}(Z_{it},\theta) \) may depend on \( \theta\). Unlike \citet{FVWeidner2016}, who only require boundedness at
 the true value \( \theta_0 \), we impose the condition uniformly over $ \Theta_0$. The boundedness condition on $X_{it}$ is also imposed in \citet{ChenFVWeidner2021} and \citet{GaoLiuPengYan2023}.\footnote{In \citet{GaoLiuPengYan2023}, it is required that 
\( Z_{it0} = X_{it}^{\prime}\beta_0 + \alpha_{i0}^{\prime}\gamma_{t0} \)
be uniformly bounded over \( i,t \) wpa1.}
The uniform boundedness condition is necessary for establishing uniform asymptotic expansions and uniform asymptotic normality.

Assumption \ref{assu:convexity} implies point identification for any \( \theta \in \Theta_0 \) that might be the true value, as shown in \citet{FVWeidner2016}. Additionally, we require that the Hessian is uniformly bounded away from zero and infinity, which is necessary for uniform consistency and a uniform expansion of the MLE. Importantly, this assumption is satisfied for many popular nonlinear models, e.g, logit and probit models.

Given Assumptions \ref{assu:parameterspace}--\ref{assu:convexity}, we show in Appendix~\ref{Appendix B} that the following expansion holds uniformly in $\theta\in\Theta_0$:
\begin{equation}\label{eq:uniform_expansion}
\sqrt{NT}(\hat{\beta} - \beta)
    = \overline{W}_{NT}^{-1}(\theta)\bigl(U_{NT}(\theta) + \overline{B}_{NT}(\theta)\bigr)
    + r_{NT}(\theta).
\end{equation}
Here, the leading term $U_{NT}(\theta)$ has mean zero and variance 
$\overline{W}_{NT}(\theta)$.
As $N,T\to\infty$, and provided that the required limits exist, the term $U_{NT}(\theta)$ 
determines the asymptotic variance, whereas the term $\overline{B}_{NT}(\theta)$ governs the asymptotic bias. The remainder term $r_{NT}(\theta)$ is asymptotically negligible uniformly in $\theta \in \Theta_0$. For brevity, we omit the explicit expressions for $U_{NT}(\theta)$, $\overline{B}_{NT}(\theta)$, and $\overline{W}_{NT}(\theta)$, which are involved; detailed formulas are provided in Appendix~\ref{Appendix B}.

To establish uniform asymptotic normality, we require that the first term in the above expansion converges in distribution to a normal distribution and that the bias term converges to a constant uniformly. However, these quantities depend on the realization of the unobserved effects $\alpha$ and $\gamma$. For instance, if the time effect $\gamma_t$ is a nonstationary time series, the bias term $\overline{B}_{NT}(\theta)$ may not converge. Moreover, to ensure uniform asymptotic normality, the approximation error must decay uniformly for every $\theta \in \Theta_0$. Formally, we state the following assumption:
\begin{assumption}
\label{assu:limit}
For every $\theta \in \Theta_0$, the limits
$\overline{W}_\infty(\theta)=\lim_{N,T \to \infty} \overline{W}_{NT}(\theta)$ and 
$\overline{B}_\infty(\theta)=\lim_{N,T \to \infty} \overline{B}_{NT}(\theta)$ exist, and
$\overline{W}_\infty(\theta)$ is positive definite. Furthermore, the converge of 
$\overline{W}_{NT}(\theta)$ and 
$\overline{B}_{NT}(\theta)$ to their limits is uniform over $\Theta_0$.
\end{assumption}
\citet{FVWeidner2016} assume that 
$\overline{W}_{NT}(\theta_{0})$ and $\overline{B}_{NT}(\theta_{0})$ 
converge in probability to nonrandom limits as $N,T\to\infty$, which implies the 
asymptotic normality of $\hat{\beta}$ unconditionally w.r.t.\ the unobserved effects. 
Our assumption is stronger in that it requires the convergence to 
hold for any realization of unobserved effects (almost surely), and uniformly 
over $\Theta_{0}$. 
However, it also allows the limits to depend on $\theta$. A similar assumption is implicitly imposed in \citet{HigginsJochmans2024}.

Given Assumptions \ref{assu:parameterspace}--\ref{assu:limit}, 
we show that bootstrap consistency holds for any realization of the unobserved effects. 
This result provides the theoretical foundation for the parametric bootstrap. 
The corresponding result for one-way fixed-effect models is given in 
\citet{HigginsJochmans2024}.

\begin{theorem}[Uniform asymptotic normality]
\label{theorem3.1} Suppose Assumptions \ref{assu:parameterspace}--\ref{assu:limit} hold. Then, for every $a\in\mathbb{R}^{d_x}$,
\[
\sup_{\theta\in\Theta_0}|\mathbb{P}_{\theta}(\sqrt{NT}(\hat{\beta}-\beta)\leq a)-G_{\theta}(a)| \to 0,
\]
where $G_{\theta}$ is the distribution function corresponding to
$\mathcal{N}\left(\overline{B}_{\infty}(\theta),\overline{W}_\infty(\theta)^{-1}\right)$.
\end{theorem}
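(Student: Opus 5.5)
The proof starts from the uniform expansion \eqref{eq:uniform_expansion}, which is established in Appendix~\ref{Appendix B} under Assumptions \ref{assu:parameterspace}--\ref{assu:convexity}:
\[
\sqrt{NT}(\hat\beta-\beta)=\overline{W}_{NT}^{-1}(\theta)\bigl(U_{NT}(\theta)+\overline{B}_{NT}(\theta)\bigr)+r_{NT}(\theta),
\]
where $\sup_{\theta\in\Theta_0}\mathbb{P}_\theta(\|r_{NT}(\theta)\|>\eta)\to 0$ for every $\eta>0$. The plan is to show three things: a uniform CLT for the leading stochastic term, uniform convergence of the deterministic parts, and uniform negligibility of the remainder. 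A uniform Slutsky-type argument then combines them.

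\textbf{Step 1: reduce to sequences.} Uniform convergence over $\Theta_0$ is equivalent to convergence along every sequence $\theta_{NT}\in\Theta_0$. So it suffices to show $\mathbb{P}_{\theta_{NT}}(\sqrt{NT}(\hat\beta-\beta_{NT})\le a)-G_{\theta_{NT}}(a)\to 0$ for arbitrary such sequences. By Assumption \ref{assu:parameterspace}, $\Theta_0$ is relatively compact, so we may pass to subsequences along which the relevant limits $\overline{W}_\infty(\theta_{NT})$ and $\overline{B}_\infty(\theta_{NT})$ converge. This is the route of \citet{Andrews2005} and \citet{HigginsJochmans2024}. The argument requires $G_\theta$ to be continuous in $a$ uniformly in $\theta$. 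This holds because $\overline{W}_\infty(\theta)$ is positive definite, with eigenvalues bounded away from $0$ and $\infty$ uniformly on $\Theta_0$; the bound follows from Assumptions \ref{Generalized non-collinearity} and \ref{assu:convexity} together with the uniform convergence in Assumption \ref{assu:limit}.

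\textbf{Step 2: uniform CLT for the score.} $U_{NT}(\theta)$ is a normalized sum $(NT)^{-1/2}\sum_{i,t}$ of projected scores $\partial_z\ell_{it}$ times the residualized regressors. These summands have mean zero under $\mathbb{P}_\theta$ and are independent across $i$ and $\alpha$-mixing over $t$ (Assumption \ref{assu:correct}). I would apply a Lyapunov CLT for triangular arrays of mixing sequences, of the kind used in \citet{FVWeidner2016}, along the sequence $\theta_{NT}$. The Lyapunov and mixing-moment conditions are verified uniformly in $\theta$ using three ingredients: the $8+\nu$ moment bound on the envelope $L_{it}$, the bounded regressors in Assumption \ref{assu:smooth}, and the polynomial decay $\sup_{\theta}\max_i\alpha_i(m,\theta)=O(m^{-\mu})$. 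Every constant in these bounds is free of $\theta$, so the resulting Berry--Esseen-type or Lindeberg errors vanish uniformly. Normalizing by the variance $\overline{W}_{NT}(\theta_{NT})$ and using Assumption \ref{assu:limit} then gives $U_{NT}(\theta_{NT})-$ approximately $\mathcal{N}(0,\overline{W}_\infty(\theta_{NT}))$ in Kolmogorov distance.

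\textbf{Step 3: combine.} By Assumption \ref{assu:limit}, $\overline{W}_{NT}^{-1}(\theta)\to\overline{W}_\infty^{-1}(\theta)$ and $\overline{B}_{NT}(\theta)\to\overline{B}_\infty(\theta)$ uniformly. Together with the uniformly negligible $r_{NT}$ and the uniform equicontinuity of $G_\theta$, a sandwich argument completes the proof. One bounds $\mathbb{P}_\theta(\sqrt{NT}(\hat\beta-\beta)\le a)$ between $\mathbb{P}_\theta(\text{leading}\le a\pm\eta)\pm\sup_\theta\mathbb{P}_\theta(\|r\|>\eta)$, then lets $\eta\downarrow0$.

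I read the stated limit $G_\theta=\mathcal{N}(\overline{B}_\infty,\overline{W}_\infty^{-1})$ with the bias already premultiplied by $\overline{W}_\infty^{-1}$, matching the paper's convention. I expect the main obstacle to be Step 2. The underlying measure changes with $\theta_{NT}$, so the CLT must hold for a genuine triangular array. In addition, the summands of $U_{NT}$ involve projections onto incidental-parameter directions that themselves depend on $\theta$. I would handle this by writing $U_{NT}$ as a fixed linear combination of the scores $\partial_z\ell_{it}$, with weights bounded uniformly via Lemma~\ref{lemma:A1} and Assumption \ref{ass:strong factor}, and then applying the mixing CLT to that weighted sum.
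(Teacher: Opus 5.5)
Your proposal is correct and follows essentially the paper's route. You start from the uniform expansion, obtain a uniform CLT for the score term from the uniform $\alpha$-mixing and moment bounds, and combine it with the uniform convergence in Assumption~\ref{assu:limit} and the negligible remainder; where the paper invokes a Berry--Esseen bound as in Higgins and Jochmans, you use a triangular-array Lyapunov CLT along arbitrary sequences $\theta_{NT}$ plus a subsequence argument, which spares you explicit uniform rates.

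Two remarks. The parts you take as given are exactly where the paper's proof spends most of its effort: local concavity via Lemmas~\ref{lemma:c1} and~\ref{lemma:c3}, the Hessian approximation of Lemma~\ref{lemma:c2}, and uniform concentration of the random bias terms $B^{(1)},B^{(2)}$ around their expectations. Also, your subsequence extraction should rest on uniform boundedness of the $d_x$-dimensional quantities $\overline{W}_{NT}(\theta_{NT})$ and $\overline{B}_{NT}(\theta_{NT})$ (from bounded regressors, Assumption~\ref{assu:convexity} and the moment bounds), not on compactness of $\Theta_0$, whose dimension grows with $N$ and $T$.
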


The next result establishes that the parametric bootstrap approximates the distribution of the MLE, ensuring the asymptotic validity of confidence intervals constructed using the quantiles of the bootstrap distribution.

\begin{theorem}[Bootstrap consistency for the MLE]
\label{theorem3.2}
Suppose Assumptions \ref{assu:parameterspace}--\ref{assu:limit} hold. Then
\[
\mathbb{P}_{\theta_0}\left[\sup_{a\in\mathbb{R}^{d_x}}
\left|\mathbb{P}_{\hat{\theta}}(\sqrt{NT}(\hat{\beta}^{*}-\hat{\beta})\leq a)-\mathbb{P}_{\theta_{0}}(\sqrt{NT}(\hat{\beta}-\beta_{0})\leq a)\right|>\epsilon\right] \to 0
\]
for all $\epsilon>0$.
\end{theorem}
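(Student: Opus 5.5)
The plan is to derive Theorem~\ref{theorem3.2} from the uniform asymptotic normality in Theorem~\ref{theorem3.1}, combined with consistency of $\hat\theta$ and continuity of the limit distribution in $\theta$. This is the route of \citet{KimSun2016}, \citet{HigginsJochmans2024} and \citet{Andrews2005}. The key observation is that, conditional on the data, $\hat\theta$ is just another parameter value. Hence $\mathbb{P}_{\hat\theta}(\sqrt{NT}(\hat\beta^*-\hat\beta)\le a)$ equals the function $H_{NT}(\theta,a):=\mathbb{P}_\theta(\sqrt{NT}(\hat\beta-\beta)\le a)$ evaluated at $\theta=\hat\theta$. The covariates $X$ are held fixed, as in the definition of $\mathbb{P}_\theta$.

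Decompose, for each $a$,
\[
H_{NT}(\hat\theta,a)-H_{NT}(\theta_0,a)=\bigl[H_{NT}(\hat\theta,a)-G_{\hat\theta}(a)\bigr]+\bigl[G_{\hat\theta}(a)-G_{\theta_0}(a)\bigr]+\bigl[G_{\theta_0}(a)-H_{NT}(\theta_0,a)\bigr].
\]
Consider the event $\{\hat\theta\in\Theta_0\}$. The first term is bounded by $\sup_{\theta\in\Theta_0}|H_{NT}(\theta,a)-G_\theta(a)|$, which tends to zero by Theorem~\ref{theorem3.1}. The third term is the special case $\theta=\theta_0$. So it remains to show $\mathbb{P}_{\theta_0}(\hat\theta\in\Theta_0)\to1$ and that the middle term vanishes in probability.

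The first of these is the uniform consistency of the MLE established in the appendix (Lemma~\ref{lemma:A1} together with the rate result used for the expansion). We need the consistency in the metric $d$ of Assumption~\ref{assu:parameterspace}, or at least in whatever sense $\Theta_0$ is defined, after fixing the rotational normalization. For the middle term, $G_\theta$ is the distribution function of $\mathcal N(\overline B_\infty(\theta),\overline W_\infty(\theta)^{-1})$. So it suffices that $\overline B_\infty(\hat\theta)\to\overline B_\infty(\theta_0)$ and $\overline W_\infty(\hat\theta)\to\overline W_\infty(\theta_0)$ in probability.

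I would get this continuity from two ingredients. The first is that $\overline W_{NT}(\theta)$ and $\overline B_{NT}(\theta)$ are averages of smooth functions (derivatives of $\ell_{it}$ up to third order, evaluated through $Z_{it}(\theta)$). By Assumptions~\ref{assu:smooth}--\ref{assu:convexity} they are equicontinuous in $\theta$ in the relevant sense: each $Z_{it}(\hat\theta)-Z_{it}(\theta_0)$ is uniformly small, given boundedness of $X$ and of the parameter space. The second ingredient is the uniform convergence in Assumption~\ref{assu:limit}, which lets us pass to the limits. Together these are exactly the hypotheses of the Andrews (2005) lemma: uniform convergence to a limit that is continuous at $\theta_0$, evaluated along a consistent sequence.

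Positive definiteness of $\overline W_\infty(\theta_0)$ makes the normal limit nondegenerate, hence continuous in $a$. Pointwise convergence in $a$ then upgrades to the supremum over $a\in\mathbb R^{d_x}$ by P\'olya's theorem, applied in its multivariate form to continuous limit distribution functions, together with the pointwise-in-$a$ uniformity over $\theta$.

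The main obstacle is the middle term: showing that the limiting bias and variance are continuous at $\theta_0$ when $\theta$ is the high-dimensional $\hat\theta$. The subtlety is that $d(\hat\theta,\theta_0)$ in the full Euclidean norm over $N+T$ blocks need not be small. Only the individual components, or averaged norms, converge. I would first try to work with the uniform (max over $i,t$) consistency of $\hat\alpha_i,\hat\gamma_t$ modulo rotation. I would then bound $|\overline B_{NT}(\hat\theta)-\overline B_{NT}(\theta_0)|$ by averages of envelope terms times $\max_{it}|Z_{it}(\hat\theta)-Z_{it}(\theta_0)|$, which is $o_p(1)$. If only rates like $\|\hat\alpha-\alpha_0\|^2/N=o_p(1)$ are available, I would instead use Cauchy--Schwarz with the $(8+\nu)$-moment envelopes.
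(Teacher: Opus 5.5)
Your proposal follows essentially the same route as the paper's proof. It uses the same three-way split through $G_{\hat\theta}$ and $G_{\theta_0}$, applies Theorem~\ref{theorem3.1} on the event $\{\hat\theta\in\Theta_0\}$ (the paper packages this via Lemma~A.1 of Andrews, 2005), relies on continuity of $(\overline B_\infty,\overline W_\infty)$ in $\theta$ for the middle term, and passes to the supremum over $a$ by a P\'olya-type grid argument. Your derivation of that continuity from equicontinuity of $\overline B_{NT},\overline W_{NT}$ plus the uniform convergence in Assumption~\ref{assu:limit} is more careful than the paper, which simply asserts it; your remark that $d(\hat\theta,\theta_0)$ need not vanish also bears on the step $\mathbb{P}_{\theta_0}(\hat\theta\in\Theta_0)\to1$, which both you and the paper take from the appendix rather than prove.
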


The following corollary establishes the asymptotic validity of the transformed bootstrap confidence interval. It follows from the bootstrap delta method \citep{van1998asymptotic} that the bootstrap consistency result extends to smooth transformations of the estimator. A more general result can be found in the proof of Theorem~2 in \citet{higgins2025inference}.
\begin{corollary}[Bootstrap consistency for transformed MLE]
\label{corollary3.3}
Suppose Assumptions \ref{assu:parameterspace}--\ref{assu:limit} hold. Let $c \in \mathbb{R}^{d_x}$ be a fixed vector and define $\phi = \varphi(c^\prime \beta)$, where $\varphi:\mathbb{R}\to\mathbb{R}$ is continuously differentiable at $c^\prime \beta_0$ with $\varphi'(c^\prime \beta_0)\neq 0$. Then
\[
\mathbb{P}_{\theta_0}\left[\sup_{a\in\mathbb{R}}
\left|
\mathbb{P}_{\hat{\theta}}\bigl(\sqrt{NT}(\varphi(c^\prime \hat{\beta}^*)-\varphi(c^\prime \hat{\beta}))\le a\bigr)
-
\mathbb{P}_{\theta_0}\bigl(\sqrt{NT}(\varphi(c^\prime \hat{\beta})-\varphi(c^\prime \beta_0))\le a\bigr)
\right|>\epsilon\right]\to 0
\]
for all $\epsilon>0$.
\end{corollary}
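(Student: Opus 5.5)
The plan is to derive Corollary~\ref{corollary3.3} from Theorem~\ref{theorem3.1} and Theorem~\ref{theorem3.2} via a delta-method argument on each side, followed by Pólya's theorem. Write $b_0=c'\beta_0$, $\hat b=c'\hat\beta$ and $\hat b^*=c'\hat\beta^*$.

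\textbf{Step 1 (distribution of $\hat b$).} Theorem~\ref{theorem3.1} at $\theta=\theta_0$, combined with Theorem~\ref{theorem3.2}, gives that $\sqrt{NT}(\hat\beta-\beta_0)$ converges in law to $\mathcal N(\overline B_\infty(\theta_0),\overline W_\infty(\theta_0)^{-1})$. By the continuous mapping theorem, $\sqrt{NT}(\hat b-b_0)\Rightarrow \mathcal N(c'\overline B_\infty,\,c'\overline W_\infty^{-1}c)$. In particular $\hat b-b_0=O_p((NT)^{-1/2})$, so $\hat b$ lies wpa1 in any neighborhood of $b_0$ where $\varphi$ is continuously differentiable.

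\textbf{Step 2 (bootstrap distribution of $\hat b^*$).} Theorem~\ref{theorem3.2} gives $\sup_{a}|\mathbb P_{\hat\theta}(\sqrt{NT}(\hat\beta^*-\hat\beta)\le a)-\mathbb P_{\theta_0}(\sqrt{NT}(\hat\beta-\beta_0)\le a)|\to_p 0$. The limiting law is Gaussian and hence continuous, so the bootstrap law of $\sqrt{NT}(\hat b^*-\hat b)$ converges weakly, in $\mathbb P_{\theta_0}$-probability, to the same normal limit as in Step 1. Because halfspaces are not orthants, passing from the $\beta$-orthants to the scalar $c'\beta$ is not immediate. I would handle this through the bounded-Lipschitz metric. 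Convergence of distribution functions to a continuous limit, uniformly over orthants, implies weak convergence. Combined with the convergence of the sampling law to $G_{\theta_0}$, this gives bootstrap weak convergence in probability, and continuous mapping then carries it over to $c'\beta$.

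\textbf{Step 3 (delta method on both sides).} Expand
\[
\varphi(\hat b^*)-\varphi(\hat b)=\varphi'(\tilde b)(\hat b^*-\hat b),
\]
where $\tilde b$ lies between $\hat b^*$ and $\hat b$. Step 2 gives $\hat b^*-\hat b=O_{p^*}((NT)^{-1/2})$ in probability. Together with $\hat b\to_p b_0$ from Step 1, this yields $\tilde b\to b_0$ in bootstrap probability, so $\varphi'(\tilde b)\to\varphi'(b_0)$ by continuity of $\varphi'$ at $b_0$. Hence
\[
\sqrt{NT}(\varphi(\hat b^*)-\varphi(\hat b))=\varphi'(b_0)\sqrt{NT}(\hat b^*-\hat b)+o_{p^*}(1).
\]
The same argument gives
\[
\sqrt{NT}(\varphi(\hat b)-\varphi(b_0))=\varphi'(b_0)\sqrt{NT}(\hat b-b_0)+o_p(1).
\]
Both quantities therefore converge (the first in bootstrap law, in probability) to $\mathcal N(\varphi'(b_0)c'\overline B_\infty,\ \varphi'(b_0)^2c'\overline W_\infty^{-1}c)$. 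This is the bootstrap delta method (van der Vaart, Thm.~23.5). Continuity of $\varphi'$ at the point suffices, since the mean-value argument can be replaced by the differentiability expansion $\varphi(x)-\varphi(y)=\varphi'(b_0)(x-y)+o(|x-y|)$ for $x,y\to b_0$, which holds under continuous differentiability in a neighborhood.

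\textbf{Step 4 (uniform distance).} The common limit is normal with variance $\varphi'(b_0)^2c'\overline W_\infty^{-1}c>0$, which is strictly positive because $\varphi'(b_0)\neq0$ and $\overline W_\infty$ is positive definite. Its distribution function is therefore continuous. By Pólya's theorem, each distribution function converges to the limit uniformly in $a$; the bootstrap one does so in $\mathbb P_{\theta_0}$-probability, which can be obtained via the subsequence characterization. The triangle inequality then gives the stated $\sup_a$ bound.

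The main obstacle is Step 2, namely the uniform-over-orthants to halfspace transfer for the bootstrap law in probability. I would handle it with the subsequence argument: along an almost-surely convergent subsequence, orthant-CDF convergence to a continuous limit gives weak convergence, and continuous mapping then applies.
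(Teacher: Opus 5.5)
Your proposal is correct and follows the same route as the paper, which gives no separate proof and simply invokes the bootstrap delta method of \citet{van1998asymptotic} on top of the bootstrap consistency in Theorem~\ref{theorem3.2}. Your Steps~2--4 spell out what the paper leaves implicit: the passage from orthant CDFs to the linear combination $c'\beta$, and the P\'olya step. One small caveat: strict positivity of the limiting variance $\varphi'(c'\beta_0)^2\, c'\overline{W}_\infty(\theta_0)^{-1}c$ requires $c\neq 0$, but for $c=0$ both distributions are point masses at zero and the claim holds trivially.
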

In our simulations, we further restrict $\varphi(\cdot)$ to be strictly increasing and continuously differentiable.

We make the following assumptions to establish the validity of the parametric bootstrap for the APEs defined in \eqref{APE}.

\begin{assumption}[]\label{assu:APEs interactive}
The unobserved effects enter through a factor structure:
\[
\mu_{it}(\beta, \alpha_i, \gamma_t)=\mu^f_{it}(\beta, \pi_{it}), \qquad \pi_{it}=\alpha_i^\prime\gamma_t,
\]
for some function $\mu^f_{it}$.
\end{assumption}

\begin{assumption}[]\label{assu:APEs smooth}
There exists $\delta>0$ such that for every $i,t, N, T$ and every $\theta\in\Theta_0$, (i) \( \mu^f_{it}(\cdot, \cdot) \) is four times continuously differentiable over \( \mathcal{B}(\delta;\beta, \pi_{it}) \) a.s.; (ii) the partial derivatives of \( \mu^f_{it}(\cdot, \cdot) \), up to the fourth order, are uniformly bounded in absolute value over $\mathcal{B}(\delta;\beta, \pi_{it}) $. 
\end{assumption}

Given Assumptions \ref{assu:parameterspace}--\ref{assu:APEs smooth}, the following uniform asymptotic expansion holds for APEs:\footnote{See  Theorem \ref{Thm:C2} for expressions of the terms in the expansion.}
\[
\sqrt{NT}\left(\varDelta(\hat{\theta}) - \varDelta(\theta)\right)
    =  U_{NT}^{\varDelta}(\theta) + \overline{B}_{NT}^{\varDelta}(\theta) + r_{NT}^{\varDelta}(\theta).
\]
Assuming that limits exist, as $N,T\to\infty$,  the leading term $U_{NT}^{\varDelta}(\theta)$ contributes to the asymptotic variance, the second term $\overline{B}_{NT}^{\varDelta}(\theta)$ captures the asymptotic bias, and the remainder term $r_{NT}^{\varDelta}(\theta)$ is asymptotically negligible uniformly over $\Theta_0$.  

To establish uniform asymptotic normality of the MLE of APEs and bootstrap consistency, we also require a uniform convergence condition. Let the conditional variance of the leading term be denoted by
$\overline{W}_{NT}^{\varDelta}(\theta)$.
\begin{assumption}
\label{assu:limit:ape}
For every $\theta \in \Theta_0$, the limits
$\overline{W}^\varDelta_{\infty}(\theta) =\lim_{N,T \to \infty} \overline{W}^\varDelta_{NT}(\theta)$ and
$\overline{B}^\varDelta_{\infty}(\theta) =\lim_{N,T \to \infty} \overline{B}^\varDelta_{NT}(\theta)$
exist, and $\overline{W}^\varDelta_{\infty}(\theta)>0$. Furthermore, the convergence of 
$\overline{W}^\varDelta_{NT}(\theta)$ and $\overline{B}^\varDelta_{NT}(\theta)$ to their limits is uniform over $\Theta_0$.
\end{assumption}

We can now show bootstrap consistency for APEs.
\begin{theorem}[Uniform asymptotic normality for APEs]
\label{theorem3.3} Suppose Assumptions \ref{assu:parameterspace}--\ref{assu:limit:ape} hold. Then, for every $a\in\mathbb{R}$,
\[
\sup_{\theta\in\Theta_0}|\mathbb{P}_{\theta}(\sqrt{NT}(\varDelta(\hat{\theta}) - \varDelta(\theta))\leq a)-G^\varDelta_{\theta}(a)| \to 0,
\]
where $G^\varDelta_{\theta}$ is the distribution function corresponding to
$\mathcal{N}(\overline{B}^\varDelta_{\infty}(\theta),\overline{W}^\varDelta_{\infty}(\theta)^{-1})$.
\end{theorem}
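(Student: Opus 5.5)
The plan mirrors the argument for Theorem~\ref{theorem3.1}, with the uniform APE expansion playing the role of \eqref{eq:uniform_expansion}. The starting point is the uniform decomposition
\[
\sqrt{NT}\bigl(\varDelta(\hat\theta)-\varDelta(\theta)\bigr)=U^{\varDelta}_{NT}(\theta)+\overline{B}^{\varDelta}_{NT}(\theta)+r^{\varDelta}_{NT}(\theta),
\]
from Theorem~\ref{Thm:C2}, which holds under Assumptions~\ref{assu:parameterspace}--\ref{assu:APEs smooth}. Its remainder satisfies $\sup_{\theta\in\Theta_0}\mathbb{P}_\theta(|r^{\varDelta}_{NT}(\theta)|>\eta)\to0$ for every $\eta>0$.

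The proof then has three steps. First, I establish a uniform CLT for the leading term: $\sup_{\theta\in\Theta_0}\sup_a|\mathbb{P}_\theta(U^{\varDelta}_{NT}(\theta)\le a)-\Phi(a/\sigma_{NT}(\theta))|\to0$, where $\sigma^2_{NT}(\theta)$ denotes the variance of $U^\varDelta_{NT}(\theta)$. Second, I replace $\sigma_{NT}(\theta)$ and $\overline{B}^\varDelta_{NT}(\theta)$ by their limits, using the uniform convergence in Assumption~\ref{assu:limit:ape}. Because the limiting variance is bounded away from zero, the normal distribution functions are uniformly equicontinuous in $(a,\text{mean},\text{variance})$ on the relevant compact set, so this replacement costs only $o(1)$ uniformly. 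Third, I absorb the remainder with the standard sandwich argument:
\[
\mathbb{P}_\theta(U+\overline{B}\le a-\eta)-\mathbb{P}_\theta(|r|>\eta)\le\mathbb{P}_\theta(U+\overline{B}+r\le a)\le\mathbb{P}_\theta(U+\overline{B}\le a+\eta)+\mathbb{P}_\theta(|r|>\eta).
\]
Taking suprema over $\Theta_0$ and letting $\eta\downarrow0$, the Lipschitz continuity of $G^\varDelta_\theta$ (uniform, since the variance is bounded below) gives the result. The limiting variance should be the one written in the statement; I would simply match it with the normalization of $\overline{W}^\varDelta$ used in Theorem~\ref{Thm:C2}.

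The main obstacle is the uniform CLT in the first step. The leading term has the form
\[
U^{\varDelta}_{NT}(\theta)=(NT)^{-1/2}\sum_{i,t}\bigl(\Psi_{it}(\theta)\,\partial_z\ell_{it}(\theta)+\text{centered }\mu\text{-term}\bigr),
\]
with nonrandom weights $\Psi_{it}$ built from the projected Hessian and $\partial_\pi\mu^f$. These summands have mean zero under $\mathbb{P}_\theta$, are independent across $i$ and $\alpha$-mixing over $t$. I would establish the uniform CLT by a subsequence argument. If uniformity failed, there would exist $\theta_{NT}\in\Theta_0$ along which the distance to the normal stays above some $\epsilon>0$. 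Along that drifting sequence I would apply a Lyapunov CLT for triangular arrays that are independent across $i$ and mixing over $t$, using a blocking (Bernstein big-block/small-block) argument, exactly as used for $U_{NT}(\theta)$ in Theorem~\ref{theorem3.1}.

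The moment conditions needed for this come from existing assumptions. The uniform $8+\nu$ envelope in Assumption~\ref{assu:smooth} and the bounded derivatives of $\mu^f$ in Assumption~\ref{assu:APEs smooth} bound the Lyapunov moments uniformly in $\theta$. The uniform mixing rate in Assumption~\ref{assu:correct} controls the covariances between blocks. The uniform boundedness of $\Psi_{it}$ follows from Assumption~\ref{assu:convexity} and the strong-factor bounds of Lemma~\ref{lemma:A1}. Applying the CLT along the drifting sequence then contradicts the assumed $\epsilon$-gap, which yields uniformity. Nonsingularity of the normalization comes from $\overline{W}^\varDelta_\infty>0$.
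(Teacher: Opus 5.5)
Your architecture is the paper's. It takes the uniform expansion of Theorem~\ref{Thm:C2}, proves a uniform CLT for the leading term, passes to the limiting bias and variance via Assumption~\ref{assu:limit:ape}, and absorbs the $o_p^u(1)$ remainder. Your explicit sandwich and equicontinuity steps spell out what the paper leaves implicit. Your description of the leading term omits the $\hat\beta$-channel $\bigl[\frac{1}{NT}\sum_{i,t}\mathbb{E}_\theta(D_\beta\mu_{it})\bigr]'\overline{W}_\infty^{-1}\partial_z\ell_{it}\widetilde{X}_{it}$ in $U^{(0)}_\varDelta$. That term has the same weighted-score form, so nothing changes.

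The genuine difference is how you get uniformity of the CLT. The paper refers back to the proof of Theorem~\ref{theorem3.1}, which follows Higgins and Jochmans, and uses a Berry--Esseen bound. The Kolmogorov distance to the normal is at most a constant times third absolute moments divided by the variance to the power $3/2$. Since the uniform envelope and mixing conditions control those moments, the bound holds uniformly in $\theta\in\Theta_0$ at once, with an explicit rate. You instead argue by contradiction along drifting sequences $\theta_{NT}\in\Theta_0$, applying a triangular-array Lyapunov CLT and P\'olya's theorem. Your route needs only a qualitative CLT with $2+\delta$ moments and would survive where no Berry--Esseen bound is available. However, it is non-constructive and gives no rate, and both routes use the same inputs.

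Your Bernstein blocking over $t$ is also unnecessary. Write the leading term as $(NT)^{-1/2}\sum_{i,t}\psi_{it}$ and set $\xi_i=T^{-1/2}\sum_t\psi_{it}$, which are independent across $i$. Bound $\sup_{\theta\in\Theta_0}\max_i\mathbb{E}_\theta|\xi_i|^{3}$ with the Rosenthal-type mixing inequality the paper invokes in the proof of Theorem~\ref{thm:C1:}, then apply Lyapunov or Berry--Esseen across $i$.

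Two points to tighten. First, you use that the limiting variance is bounded away from zero, both for the Lyapunov normalization and for the uniform Lipschitz bound on $G^\varDelta_\theta$. Assumption~\ref{assu:limit:ape} only gives $\overline{W}^\varDelta_\infty(\theta)>0$ pointwise. You need $\inf_{\theta\in\Theta_0}\overline{W}^\varDelta_\infty(\theta)>0$, which the paper also uses tacitly, so state it explicitly.

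Second, do not leave the variance to ``matching normalizations''. Since $\overline{W}^\varDelta_{NT}(\theta)$ is defined as the variance of the leading term, your argument delivers $\mathcal{N}\bigl(\overline{B}^\varDelta_\infty(\theta),\overline{W}^\varDelta_\infty(\theta)\bigr)$. The inverse in the displayed statement is therefore inconsistent with that definition. It appears to be inherited from Theorem~\ref{theorem3.1}, where $\overline{W}_\infty$ is a Hessian-type matrix, and the paper's one-line proof does not resolve it either.
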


\begin{theorem}[Bootstrap consistency for APEs]\label{theorem3.4}
Suppose Assumptions \ref{assu:parameterspace}--\ref{assu:limit:ape} hold. Then
\[
\mathbb{P}_{\theta_0}\left[\sup_{a\in \mathbb{R}}\left|\mathbb{P}_{\hat{\theta}}(\sqrt{NT}(\varDelta(\hat{\theta}^{*})-\varDelta(\hat{\theta}))\leq a)-\mathbb{P}_{\theta_0}
(\sqrt{NT}(\varDelta(\hat{\theta})-\varDelta(\theta_{0}))\leq a)
\right|>\epsilon\right] \to 0
\]
for all $\epsilon>0$.
\end{theorem}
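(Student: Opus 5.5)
The plan is to follow the argument for Theorem~\ref{theorem3.2} and replace the expansion of $\hat\beta$ by the APE expansion. The uniform asymptotic normality in Theorem~\ref{theorem3.3} then plays the role that Theorem~\ref{theorem3.1} plays there. The device is the standard one from \citet{Andrews2005}, as used by \citet{KimSun2016} and \citet{HigginsJochmans2024}. If a sequence of distributions converges uniformly over a neighbourhood $\Theta_0$ to a limit $G^\varDelta_\theta$ that is continuous in $\theta$, and $\hat\theta \in \Theta_0$ with probability approaching one, then evaluating at the random point $\hat\theta$ gives the bootstrap approximation.

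\textbf{Step 1: triangle inequality.} For each $a$, bound the difference inside the probability by three terms:
\[
\bigl|\mathbb{P}_{\hat\theta}(\sqrt{NT}(\varDelta(\hat\theta^*)-\varDelta(\hat\theta))\le a)-G^\varDelta_{\hat\theta}(a)\bigr|
+\bigl|G^\varDelta_{\hat\theta}(a)-G^\varDelta_{\theta_0}(a)\bigr|
+\bigl|G^\varDelta_{\theta_0}(a)-\mathbb{P}_{\theta_0}(\sqrt{NT}(\varDelta(\hat\theta)-\varDelta(\theta_0))\le a)\bigr|.
\]

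\textbf{Step 2: the first and third terms.} On the event $\{\hat\theta\in\Theta_0\}$, the first term is bounded by $\sup_{\theta\in\Theta_0}|\mathbb{P}_\theta(\cdot\le a)-G^\varDelta_\theta(a)|$. Under $\mathbb{P}_\theta$, the bootstrap estimator $\hat\theta^*$ is simply the MLE computed from data generated at $\theta$, with $X$ fixed. Theorem~\ref{theorem3.3} therefore makes this supremum $o(1)$. The third term is the special case $\theta=\theta_0$.

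The event $\hat\theta\in\Theta_0$ must hold with probability approaching one. This needs consistency of $\hat\theta$ in the metric $d$ of Assumption~\ref{assu:parameterspace}, which is the full-vector Euclidean norm and grows with $N,T$. I would take this from the uniform consistency and rate results of the appendix (Lemma~\ref{lemma:A1} and the expansion). I would also check that $\Theta_0$ can be chosen so that the MLE lands in it. This is the delicate point shared with Theorem~\ref{theorem3.2}, and I would reuse that argument verbatim.

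\textbf{Step 3: the middle term.} This needs continuity of $\theta\mapsto(\overline{B}^\varDelta_\infty(\theta),\overline{W}^\varDelta_\infty(\theta))$ at $\theta_0$, together with $\hat\theta\to\theta_0$. I expect this to be the main obstacle, because the limits are defined only pointwise and the parameter dimension grows. I would first try to derive continuity from the uniform convergence in Assumption~\ref{assu:limit:ape}. Each finite-$(N,T)$ quantity $\overline{B}^\varDelta_{NT}(\theta)$ and $\overline{W}^\varDelta_{NT}(\theta)$ is an average of smooth functions of $Z_{it}(\theta)$ and $\pi_{it}$, with derivatives bounded under Assumptions~\ref{assu:smooth} and~\ref{assu:APEs smooth}. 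Hence
\[
|\overline{B}^\varDelta_{NT}(\hat\theta)-\overline{B}^\varDelta_{NT}(\theta_0)|\lesssim \max_{i,t}|Z_{it}(\hat\theta)-Z_{it}(\theta_0)|+\max_{i,t}|\hat\pi_{it}-\pi_{it0}|,
\]
and similarly for $\overline{W}^\varDelta_{NT}$. The right-hand side is $o_P(1)$ by the uniform rates for the index. Combining this with uniform convergence to the limits gives $|G^\varDelta_{\hat\theta}(a)-G^\varDelta_{\theta_0}(a)|\to_p 0$, using $\overline{W}^\varDelta_\infty(\theta_0)>0$.

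\textbf{Step 4: uniformity in $a$.} The steps above give convergence pointwise in $a$. Since $G^\varDelta_{\theta_0}$ is a continuous normal distribution function, a Pólya-type argument on a finite grid of $a$ values upgrades this to uniform convergence in $a$, which gives the stated $\sup_a$ result.
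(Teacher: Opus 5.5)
Your proposal follows the paper's proof: the paper obtains Theorem~\ref{theorem3.4} by combining Theorem~\ref{theorem3.3} with the argument for Theorem~\ref{theorem3.2}. That argument uses the same three-term triangle inequality, localization $\hat\theta\in\Theta_0$ together with Lemma~A.1 of \citet{Andrews2005} for the bootstrap term, and continuity of the normal limit in $\theta$ for the middle term. Your variations (P\'olya's argument applied only at $\theta_0$ rather than a partition uniform over $\Theta_0$, and the middle term handled by finite-sample Lipschitz bounds plus the uniform convergence in Assumption~\ref{assu:limit:ape}) are if anything cleaner. Be aware, though, that the step you defer to, $\mathbb{P}_{\theta_0}(\hat\theta\in\Theta_0)\to 1$, is only asserted in the paper as well (Lemma~\ref{lemma:A1} does not give it). In the full Euclidean metric $d$ it is delicate, because $\sum_i\|\hat\alpha_i-\alpha_{i0}\|^2+\sum_t\|\hat\gamma_t-\gamma_{t0}\|^2$ is typically of order one rather than $o_p(1)$, so a sup-norm neighbourhood, controlled by the $q$-norm consistency of $\hat\phi$, is the safer choice.
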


We focus on APEs of the form~\eqref{APE}, 
which means that we consider the average partial effect evaluated at the realization of the unobserved effects, rather than averaged over the marginal distributions of the unobserved effects (i.e., the unconditional APEs).
There are two reasons for this choice. 
First, studying the properties of unconditional APEs requires additional assumptions on the distributions of $\{\alpha_i\}_{i=1,\ldots,N}$ and $\{\gamma_t\}_{t=1,\ldots,T}$, 
for example, that $\{\alpha_i\}$ are i.i.d.\ across $i$ and $\{\gamma_t\}$ form a stationary time series.
Second, as discussed in \citet{FVWeidner2016}, although the plug-in estimator $\hat{\Delta}$ is still consistent for the unconditional APEs, its convergence rate is slower than $\sqrt{NT}$.

To see this, let $\mathbb{E}(\Delta)$ denote the unconditional APE, where the expectation is taken over the marginal distributions of the unobserved effects.
The difference between the plug-in estimator and the unconditional APE can be decomposed as
\begin{align} \label{eq:decomposition}
{\Delta(\hat\theta)}-\mathbb{E}(\Delta)
= \underbrace{{\Delta(\hat\theta)}-\Delta(\theta)}_{\text{parameter estimation error}}
+ \underbrace{\Delta(\theta)-\mathbb{E}(\Delta)}_{\text{sample mean error}},
\end{align}
where the first term arises from parameter estimation error and the second term reflects the difference between the sample mean and the population mean.

As discussed in \citet{FVWeidner2016}, under mild weak-dependence assumptions on $\{\alpha_i\}_{i=1,\ldots,N}$ and $\{\gamma_t\}_{t=1,\ldots,T}$, the variance of the second term dominates that of the first term. Hence, in this case, there is no asymptotic bias, but the convergence rate becomes slower than $\sqrt{NT}$.\footnote{\citet{FVWeidner2016} show that if $\{\alpha_i\}_{i=1,\ldots,N}$ and $\{\gamma_t\}_{t=1,\ldots,T}$ are both independent sequences and $\alpha_i$ and $\gamma_t$ are independent for all $i,t$, then the convergence rate of the second term in \eqref{eq:decomposition} is $\sqrt{NT/(N+T-1)}$, see Remark 4 in \citet{FVWeidner2016} for further details.} 
Our bootstrap procedure focuses on addressing the asymptotic bias in the first term, although the resulting bias-corrected estimator may still provide improved finite-sample performance for the estimation of unconditional APEs. 
For further details, see Remark~4 in \citet{FVWeidner2016}.

\section{Implementation}


Since the loglikelihood function in \eqref{eq:loglikelihood} is not strictly concave due to the presence of interactive fixed effects, it may be difficult to compute the MLE, i.e., the global maximizer of the loglikelihood. \citet{ChenFVWeidner2021} suggest using multiple initial values to increase the probability of reaching the global maximizer. However, this strategy does not guarantee that the global maximizer is reached, and it may substantially increase the computational burden, particularly when combined with bootstrapping.

To address the computational challenge, \citet{zeleneev2026tractable} extend the nuclear-norm penalized estimator proposed by \citet{moon2026nuclear} to nonlinear factor models. Their approach replaces the non-convex rank constraint arising from the interactive fixed effects with a nuclear-norm penalty, thereby relaxing the original optimization problem to a strictly convex, computationally tractable problem. \citet{zeleneev2026tractable} establish the consistency of the resulting penalized estimator and further show that using this estimator as the initial value in a gradient descent algorithm leads to convergence to the global maximizer of the loglikelihood, making it asymptotically equivalent to the MLE.

Given the computational efficiency and asymptotic equivalence of the two-step estimator proposed by \citet{zeleneev2026tractable}, we adopt this estimator as a computationally convenient proxy for the MLE in the subsequent analysis. Formally, we first solve the following nuclear-norm penalized optimization problem:
\begin{equation}\label{nuc problem}
(\hat{\beta}_{\text{nuc}}, \hat{\Sigma}_{\text{nuc}})
= \underset{(\beta, \Sigma) \in \mathcal{B} \times \mathbb{R}^{N \times T}}{\arg\max} 
\left\{
\frac{1}{NT}\mathcal{L}(\beta, \Sigma)
+ \frac{\varphi}{\sqrt{NT}} \left\lVert \Sigma \right\rVert_{\text{nuc}}
\right\}
,
\end{equation}
where $\mathcal{L}(\beta,\Sigma)
= \sum_{i=1}^{N}\sum_{t=1}^{T}
\!\log f\big(Y_{it}\mid X_{it}^{\prime}\beta+\Sigma_{it}\big)$
is the loglikelihood function without low-rank constraint on $\Sigma$, $\Sigma$ is an $N \times T$ matrix collecting the unobserved effects for each $(i,t)$,  
$\lVert\cdot\rVert_{\text{nuc}}$ is the nuclear norm, 
and $\varphi$ is a tuning parameter.

Section 4.3 of \citet{zeleneev2026tractable} proposes a data-dependent procedure for selecting the tuning parameter $\varphi$. Specifically, they first compute a two-way fixed-effect estimator, obtaining $\tilde{\beta}$ and the scalar effects $\tilde{\alpha}_1,\ldots,\tilde{\alpha}_N$,  and $\tilde{\gamma}_1,\ldots,\tilde{\gamma}_T$, and form an initial guess
\begin{equation}\label{choicevarphi}
\tilde{\varphi}
= 1.05 \left\| \partial_{\Sigma} \mathcal{L}\bigl(\tilde{\beta}, \tilde{\Sigma}\bigr) \right\|_{\mathrm{op}},
\end{equation}
where $\tilde{\Sigma}_{it}=\widetilde{\alpha_i}+\widetilde{\gamma_t}$ and $\partial_{\Sigma} \mathcal{L}(\beta,\Sigma)\in\mathbb{R}^{N\times T}$ is the matrix of partial derivatives of the objective function with respect to $\Sigma$, and $\|\cdot\|_{\mathrm{op}}$ denotes the operator norm (i.e., the largest singular value). 

Solving \eqref{nuc problem} with \(\varphi=\tilde{\varphi}\) yields \(\tilde{\beta}_{\mathrm{nuc}}\) and an initial estimate \(\tilde{\Sigma}_{\mathrm{nuc}}^{\mathrm{init}}\). The singular value decomposition of \(\tilde{\Sigma}_{\mathrm{nuc}}^{\mathrm{init}}\) is then computed, and the first \(d_f\) left and right singular vectors are retained to construct initial estimated factor loadings \(\tilde{\alpha}_{\mathrm{nuc}}^{\mathrm{init}}\in\mathbb{R}^{d_f\times N}\) and factors \(\tilde{\gamma}_{\mathrm{nuc}}^{\mathrm{init}}\in\mathbb{R}^{d_f\times T}\). The tuning parameter is subsequently updated as
\begin{equation}
\label{eq:update}
\hat{\varphi}
=
1.05
\left\|
\partial_{\Sigma}\mathcal{L}\bigl(\tilde{\beta}_{\mathrm{nuc}},\tilde{\Sigma}_{\mathrm{nuc}}\bigr)
\right\|_{\mathrm{op}},
\end{equation}
where the entries of \(\tilde{\Sigma}_{\mathrm{nuc}}\) are given by
$(\tilde{\Sigma}_{\mathrm{nuc}})_{it}
=
(\tilde{\alpha}_{\mathrm{nuc},i}^{\mathrm{init}})^{\prime}(\tilde{\gamma}_{\mathrm{nuc},t}^{\mathrm{init}}).$

In an unreported simulation, we found that the choice in \eqref{choicevarphi} tends to induce excessive shrinkage in the singular values of the initial estimate \(\tilde{\Sigma}_{\mathrm{nuc}}^{\mathrm{init}}\). To mitigate this issue, we instead scale down the initial choice and set
\[
\tilde{\varphi}
= 0.5 \left\| \partial_{\Sigma} \mathcal{L}\bigl(\tilde{\beta},  \tilde{\Sigma}\bigr) \right\|_{\mathrm{op}}
\]
as the initial tuning parameter. We then update the tuning parameter as in \citet{zeleneev2026tractable}, that is, using \eqref{eq:update}, thereby maintaining consistency with their theoretical framework. The final choice of the tuning parameter is then $\phi=\hat{\varphi}$, to be used in \eqref{nuc problem} to obtain the nuclear-norm
penalized estimator 
$(\hat{\beta}_{\text{nuc}}, \hat{\Sigma}_{\text{nuc}})$.
Additional computational details can be found in Sections 4.1 and 4.3 of \citet{zeleneev2026tractable}.

Next, we apply a gradient descent algorithm initialized at the nuclear-norm penalized estimator. Specifically, we first recover the factor structure by computing the singular value decomposition of $\hat{\Sigma}_{\text{nuc}}$, which yields the estimators $\hat{\alpha}_{\text{nuc}}$ and $\hat{\gamma}_{\text{nuc}}$. The gradient descent iterations are then carried out starting from $(\hat{\beta}_{\text{nuc}}, \hat{\alpha}_{\text{nuc}}, \hat{\gamma}_{\text{nuc}})$.
The iterative updates are given by
\begin{align}
\beta^{(k+1)} &= \beta^{(k)} - S_\beta \, \partial_\beta \mathcal{L}(\beta^{(k)}, \alpha^{(k)}, \gamma^{(k)}), \notag \\
\alpha^{(k+1)} &= \alpha^{(k)} - S_\alpha \, \partial_\alpha \mathcal{L}(\beta^{(k)}, \alpha^{(k)}, \gamma^{(k)}), \label{eq:gradient-update} \\
\gamma^{(k+1)} &= \gamma^{(k)} - S_\gamma \, \partial_\gamma \mathcal{L}(\beta^{(k)}, \alpha^{(k)}, \gamma^{(k)}), \notag
\end{align}
where $S_\beta$, $S_\alpha$, and $S_\gamma$ denote the step sizes for the corresponding parameter blocks;
see \citet{zeleneev2026tractable} for details about the step sizes. We repeat this algorithm until convergence of all three parameters. Since the resulting two-step estimator is asymptotically equivalent to the MLE, we denote it simply by $\hat\theta$.\footnote{Theoretical guarantees for the convergence of this algorithm are provided in Theorem 5 of \citet{zeleneev2026tractable}. We compute $\hat{\theta}$ using the R package \texttt{NNRPanel} developed by \citet{zeleneev2026tractable}. For more details, see \url{https://github.com/wszhang-econ/NNRPanel}.}

In our theoretical and simulation results, we treat the number of factors, $d_f$, as known and fixed. In applications, however, the number of factors can be estimated using the eigenvalue-ratio test \citep{ahn2013eigenvalue}; see also
\citet{ChenFVWeidner2021}, \citet{GaoLiuPengYan2023}, and \citet{zeleneev2026tractable}.


The following algorithm summarizes our bootstrap inference procedure for $\beta_0$.

\begin{algorithm}[H]
\caption{Bootstrap inference for $\beta_0$}
\label{alg:bootstrap}
\begin{algorithmic}[1]
\State Compute the two-step MLE $\hat{\theta}$ based on the original sample.
\State Substitute the estimated $\hat{\theta}$ into model~\eqref{eq:loglikelihood} to generate $B$ parametric bootstrap samples. 
\For{$b = 1, \ldots, B$}
    \State Compute the two-step MLE $\hat{\theta}^{(b)}$ based on the $b$-th parametric bootstrap sample.
\EndFor

\State Compute the bias-corrected estimator 
\vspace{-0.5cm}
\[
\hat{\beta}_{\text{BC-Mean}} = 2\hat{\beta} - \operatorname{mean}\{\hat{\beta}^{(1)},\ldots,\hat{\beta}^{(B)}\}, \quad
\hat{\beta}_{\text{BC-Median}} = 2\hat{\beta} - \operatorname{median}\{\hat{\beta}^{(1)},\ldots,\hat{\beta}^{(B)}\}.
\]

\State \textbf{(Standard bootstrap CI)} For a scalar $\beta_0$, given a confidence level $1-\alpha$, compute
$Q^*_{\hat\beta^*}(\alpha/2)$
and
$Q^*_{\hat\beta^*}(1-\alpha/2)$,
the quantiles of $\{\hat{\beta}^{(1)},\ldots,\hat{\beta}^{(B)}\}$. The confidence interval is
\[
[ 2\hat{\beta} - Q^*_{\hat\beta^*}(1-\alpha/2),\;
  2\hat{\beta} - Q^*_{\hat\beta^*}(\alpha/2)].
\]

\State \textbf{(Transformed bootstrap CI)} Let $\varphi(\cdot)$ be a strictly increasing and continuously differentiable transformation. Compute the transformed bootstrap estimates $\{\varphi(\hat{\beta}^{(1)}), \ldots, \varphi(\hat{\beta}^{(B)})\}$ and the corresponding empirical quantile function $Q^*_{\varphi(\hat{\beta}^*)}$. The transformed bootstrap confidence interval is
\[
\left[
\varphi^{-1}\!\left(2\varphi(\hat{\beta}) - Q^*_{\varphi(\hat{\beta}^*)}(1-\alpha/2)\right), \;
\varphi^{-1}\!\left(2\varphi(\hat{\beta}) - Q^*_{\varphi(\hat{\beta}^*)}(\alpha/2)\right)
\right].
\]

\end{algorithmic}
\end{algorithm}

\section{Simulations}

Following \citet{ChenFVWeidner2021}, the two bias-corrected methods can also be applied to the two-step MLE estimator in \citet{zeleneev2026tractable}. One is a plug-in analytical correction based on the explicit form of the first-order bias, and another one is known as the split-panel jackknife method based on the idea of \citet{DhaeneJochmans2015}. We compare the finite-sample performance of our bootstrap bias correction with the two methods.

We focus on the following static model with interactive fixed effects:
\begin{equation} \label{MC}
D_{it} = \mathbf{1}\!\left\{ X_{it}^{\prime}\beta_0 + \alpha_{i0}^{\prime}\gamma_{t0} - \varepsilon_{it} > 0 \right\}, 
\quad i = 1,\dots,N,\; t = 1,\dots,T,
\end{equation}
where $\mathbf{1}\{\cdot\}$ is an indicator function and $\varepsilon_{it}$ follows a standard logistic distribution or a standard normal distribution, corresponding to the logit and the probit model. We let $N=30$, and consider $T \in \{20,30,40\}$. There is only $K=1$ covariate, and the number of factors $d_f = 2$, and the true coefficient is $\beta_0 = 0.5$. For all estimators, we fix $d_f = 2$. The following scenarios are included:

\paragraph{Scenario 1 (Covariates independent of factors):} For each $i$ and $t$, draw factor loadings and factors $\alpha_i,\gamma_t \sim \mathcal{N}(0,I_{d_f})$ and independent regressor factors $\alpha_i^{(x)},\gamma_t^{(x)} \sim \mathcal{N}(0,I_{d_f})$ where $I_{d_f}$ is the identity matrix of dimension $d_f$, and generate covariates as $X_{itk}=\alpha_i^{(x)\prime}\gamma_t^{(x)}+\nu_{itk}$ with $\nu_{itk}\sim\mathcal{N}(0,1)$. 

\paragraph{Scenario 2 (Covariates correlated with factors):} For each $i$ and $t$, draw factor loadings and factors $\alpha_i,\gamma_t \sim \mathcal{N}(0,I_{d_f})$, and generate covariates as $X_{itk}=0.3 \times \alpha_i^{\prime}\gamma_t+\nu_{itk}$ with $\nu_{itk}\sim\mathcal{N}(0,1)$.

\paragraph{Scenario 3 (Outliers in factor loadings):} For each $i$ and $t$, draw factor loadings and factors $\alpha_i,\gamma_t \sim \mathcal{N}(0,I_{d_f})$, fix a random $10\%$ of $\alpha_i = 3$ as outliers, draw independent regressor factors $\alpha_i^{(x)},\gamma_t^{(x)} \sim \mathcal{N}(0,I_{d_f})$, and generate covariates as $X_{itk}=\alpha_i^{(x)\prime}\gamma_t^{(x)}+\nu_{itk}$ with $\nu_{itk}\sim\mathcal{N}(0,1)$.

\paragraph{Scenario 4 (Sparse spikes in factors):} For each $i$ and $t$, draw factor loadings and factors $\alpha_i,\gamma_t \sim \mathcal{N}(0,I_{d_f})$, multiply a random 5\% of $\gamma_t$ by 10 to create sparse spikes, draw independent regressor factors $\alpha_i^{(x)},\gamma_t^{(x)} \sim \mathcal{N}(0,I_{d_f})$, and generate covariates as $X_{itk}=\alpha_i^{(x)\prime}\gamma_t^{(x)}+\nu_{itk}$ with $\nu_{itk}\sim\mathcal{N}(0,1)$.

We repeat the simulation $1000$ times and use $399$ bootstrap samples in each for the bootstrap method. Tables \ref{sim res logit} and \ref{sim res probit} report the relative bias (defined as the bias divided by the true parameter value), standard deviation, $95\%$ coverage rate, and the rejection probabilities under the null hypotheses: $H_0: \beta_0 = 0.7$ for all scenarios. The simulation results are for the logit and probit models, respectively. We compare the uncorrected maximum likelihood estimator, the split panel jackknife, the analytical bias correction, and the bootstrap method, where Boot-Mean uses the mean of bootstrap estimates as the corrected term, while Boot-Median uses the median. 

In most scenarios, the bootstrap-based methods deliver the most effective bias correction, exhibiting uniformly smaller biases than competing approaches. In Scenario 2 of both the logit and probit models, only the bootstrap methods achieve coverage probabilities close to the nominal $95\%$ level, whereas both the split-panel jackknife and the analytical bias correction perform less well in terms of bias reduction and coverage accuracy. As the sample size increases, the gap between the analytical bias correction and the bootstrap methods gradually narrows. The split-panel jackknife also improves with sample size but continues to require relatively large samples to achieve comparable bias reduction. Overall, the results suggest that bootstrap-based bias correction is particularly well suited to relatively small panel data settings, while remaining competitive with analytical bias correction in large samples. Moreover, the bootstrap approach performs well across a range of scenarios, especially when covariates are correlated with factors and loadings.

As discussed above, we find that the bootstrap confidence intervals tend to be overly conservative in finite samples, leading to coverage rates exceeding the nominal confidence levels. This pattern is particularly pronounced when the time dimension is small (e.g., \(T=20\)), as illustrated in Tables~\ref{sim res logit}--\ref{sim res probit}. In both models, the coverage rates of the parametric bootstrap confidence intervals are uniformly above $95\%$ across all scenarios, often substantially so, providing clear evidence of systematic over-coverage.

To address this issue, we consider three monotone transformations to reduce skewness in the bootstrap distribution: the log transformation, the Box--Cox transformation \citep{box1964analysis}, and the Yeo--Johnson transformation \citep{yeo2000new}. 
The log transformation provides a simple adjustment for strictly positive parameters and is particularly effective in mitigating right skewness by compressing the upper tail of the distribution. 
The Box--Cox transformation introduces a tuning parameter that controls the strength and direction of the transformation; in our implementation, we select this parameter in a data-driven manner by minimizing the skewness of the transformed bootstrap estimates.\footnote{
Specifically, we set $\lambda$ equal to
\[
\hat{\lambda}
=
\arg\min_{\lambda \in [-2,2]}
\left|
\frac{
\frac{1}{B}\sum_{b=1}^{B}
\left(\varphi_{\lambda}(\hat{\beta}^{*(b)})-\bar{\varphi}_{\lambda}\right)^3
}{
\left[
\frac{1}{B}\sum_{b=1}^{B}
\left(\varphi_{\lambda}(\hat{\beta}^{*(b)})-\bar{\varphi}_{\lambda}\right)^2
\right]^{3/2}
}
\right|,
\]
where $\hat{\beta}^{*(b)}$, $b=1,\ldots,B$, are the bootstrap estimates, $\varphi_{\lambda}(\cdot)$ is the Box--Cox transformation,  and $\bar{\varphi}_{\lambda} = \frac{1}{B}\sum_{b=1}^{B} \varphi_{\lambda}(\hat{\beta}^{*(b)})$. 
}
The Yeo--Johnson transformation extends this approach to accommodate both positive and negative values, while retaining similar flexibility in adjusting skewness.

The results in Tables~\ref{tab:logit_coverage_all}--\ref{tab:probit_coverage_all} show that, relative to the parametric bootstrap, the transformation-based parametric bootstrap substantially reduces the confidence interval length while bringing the coverage rate closer to the nominal level, particularly in Scenarios 1, 3, and 4, reflecting more efficient inference. This improvement is driven by their ability to mitigate skewness, resulting in more balanced lower and upper miss rates.

An exception is Scenario 2, where all methods perform relatively poorly. This is mainly due to the limited effectiveness of the bootstrap-based bias correction, which leads to a discrepancy between the bootstrap and true finite-sample distributions of the estimator. As a result, the bootstrap fails to accurately approximate the sampling distribution, reducing the effectiveness of the skewness correction and leading to distortions in coverage. 
By contrast, the untransformed parametric bootstrap, which is typically conservative, exhibits relatively better coverage in this case.

The log transformation often yields the shortest intervals, as it compresses right-skewed distributions. Since the MLE is predominantly right-skewed in our simulations, this effect is particularly pronounced. However, its one-sided nature makes it less robust in more complex settings. By contrast, the Yeo-Johnson transformation provides the most reliable overall performance, offering a favorable balance between coverage accuracy and interval length, especially in more asymmetric scenarios.

Overall, these results demonstrate that appropriate monotone transformations can substantially improve the finite-sample performance of bootstrap confidence intervals by reducing conservativeness and yielding tighter, more informative inference.


\begin{table}[H]
\centering
\caption{Simulation results for the logit model, $N=30$}
\label{sim res logit}

\medskip

\scalebox{0.75}{%
\begin{tabular}{llrccc|llrccc}
\hline
\multicolumn{6}{c|}{\textit{Scenario 1: covariates independent of factors}} &
\multicolumn{6}{c}{\textit{Scenario 2: covariates correlated with factors}} \\
$T$ & Method & Bias\,\, & SD & Coverage  & Rejection &
$T$ & Method & Bias\,\, & SD & Coverage  & Rejection \\
\hline
20 & MLE         & 0.326 & 0.122 & 0.615 & 0.148 & 20 & MLE         & 0.668 & 0.189 & 0.371 & 0.208 \\
   & SplitPJ     & $-0.467$ & 0.193 & 0.414 & 0.895 &    & SplitPJ     & $-0.522$ & 0.339 & 0.494 & 0.732 \\
   & Analytical  & 0.155 & 0.103 & 0.892 & 0.308 &    & Analytical  & 0.499 & 0.168 & 0.589 & 0.094 \\
   & Boot-Mean   & $-0.081$ & 0.083 & 0.985 & 0.483 &    & Boot-Mean   & 0.171 & 0.141 & 0.995 & 0.056 \\
   & Boot-Median & $-0.061$ & 0.085 & 0.985 & 0.483 &    & Boot-Median & 0.193 & 0.143 & 0.995 & 0.056 \\
\hline
30 & MLE         & 0.218 & 0.085 & 0.675 & 0.309 & 30 & MLE         & 0.481 & 0.135 & 0.421 & 0.132 \\
   & SplitPJ     & $-0.228$ & 0.121 & 0.595 & 0.921 &    & SplitPJ     & $-0.264$ & 0.214 & 0.575 & 0.724 \\
   & Analytical  & 0.095 & 0.074 & 0.916 & 0.590 &    & Analytical  & 0.361 & 0.123 & 0.595 & 0.099 \\
   & Boot-Mean   & $-0.040$ & 0.066 & 0.979 & 0.756 &    & Boot-Mean   & 0.177 & 0.111 & 0.966 & 0.136 \\
   & Boot-Median & $-0.031$ & 0.066 & 0.979 & 0.756 &    & Boot-Median & 0.186 & 0.111 & 0.966 & 0.136 \\
\hline
40 & MLE         & 0.177 & 0.066 & 0.679 & 0.472 & 40 & MLE         & 0.355 & 0.107 & 0.486 & 0.114 \\
   & SplitPJ     & $-0.150$ & 0.087 & 0.668 & 0.952 &    & SplitPJ     & $-0.229$ & 0.164 & 0.594 & 0.822 \\
   & Analytical  & 0.071 & 0.058 & 0.936 & 0.765 &    & Analytical  & 0.259 & 0.100 & 0.670 & 0.164 \\
   & Boot-Mean   & $-0.024$ & 0.053 & 0.977 & 0.882 &    & Boot-Mean   & 0.128 & 0.092 & 0.952 & 0.265 \\
   & Boot-Median & $-0.019$ & 0.054 & 0.977 & 0.882 &    & Boot-Median & 0.134 & 0.092 & 0.952 & 0.265 \\
\hline
\multicolumn{6}{c|}{\textit{Scenario 3: outliers in factor loadings}} &
\multicolumn{6}{c}{\textit{Scenario 4: sparse spikes in factors}} \\
$T$ & Method & Bias\,\, & SD & Coverage  & Rejection &
$T$ & Method & Bias\,\, & SD & Coverage  & Rejection \\
\hline
20 & MLE         & 0.314 & 0.119 & 0.661 & 0.129 & 20 & MLE         & 0.315 & 0.122 & 0.636 & 0.160 \\
   & SplitPJ     & $-0.458$ & 0.197 & 0.432 & 0.881 &    & SplitPJ     & $-0.447$ & 0.192 & 0.423 & 0.895 \\
   & Analytical  & 0.141 & 0.100 & 0.921 & 0.289 &    & Analytical  & 0.145 & 0.102 & 0.907 & 0.307 \\
   & Boot-Mean   & $-0.097$ & 0.080 & 0.979 & 0.473 &    & Boot-Mean   & $-0.086$ & 0.082 & 0.988 & 0.486 \\
   & Boot-Median & $-0.077$ & 0.081 & 0.979 & 0.473 &    & Boot-Median & $-0.066$ & 0.083 & 0.988 & 0.486 \\
\hline
30 & MLE         & 0.213 & 0.085 & 0.701 & 0.305 & 30 & MLE         & 0.215 & 0.085 & 0.697 & 0.303 \\
   & SplitPJ     & $-0.242$ & 0.116 & 0.567 & 0.921 &    & SplitPJ     & $-0.241$ & 0.117 & 0.559 & 0.932 \\
   & Analytical  & 0.085 & 0.074 & 0.926 & 0.576 &    & Analytical  & 0.090 & 0.075 & 0.922 & 0.579 \\
   & Boot-Mean   & $-0.049$ & 0.065 & 0.981 & 0.745 &    & Boot-Mean   & $-0.042$ & 0.066 & 0.981 & 0.749 \\
   & Boot-Median & $-0.040$ & 0.066 & 0.981 & 0.745 &    & Boot-Median & $-0.033$ & 0.066 & 0.981 & 0.749 \\
\hline
40 & MLE         & 0.174 & 0.068 & 0.711 & 0.460 & 40 & MLE         & 0.166 & 0.069 & 0.730 & 0.512 \\
   & SplitPJ     & $-0.152$ & 0.083 & 0.696 & 0.965 &    & SplitPJ     & $-0.162$ & 0.087 & 0.644 & 0.964 \\
   & Analytical  & 0.063 & 0.060 & 0.929 & 0.788 &    & Analytical  & 0.058 & 0.061 & 0.937 & 0.798 \\
   & Boot-Mean   & $-0.033$ & 0.054 & 0.974 & 0.892 &    & Boot-Mean   & $-0.034$ & 0.055 & 0.979 & 0.889 \\
   & Boot-Median & $-0.027$ & 0.055 & 0.974 & 0.892 &    & Boot-Median & $-0.028$ & 0.056 & 0.979 & 0.889 \\
\hline
\end{tabular}%
}

\medskip

\begin{minipage}{0.95\textwidth}
\footnotesize
\textit{Notes:} model given in \eqref{MC} with one covariate and $\beta_0=0.5$; $1000$ Monte Carlo replications; $399$ bootstrap replications; coverage rates of nominal $95\%$ confidence intervals; rejection probabilities of $H_0:\beta_0=0.7$ at nominal $5\%$ level.
\end{minipage}

\end{table}

\begin{table}[H]
\centering
\caption{Simulation results for the probit model, $N=30$}
\label{sim res probit}

\medskip

\scalebox{0.75}{%
\begin{tabular}{llrccc|llrccc}
\hline
\multicolumn{6}{c|}{\textit{Scenario 1: covariates independent of factors}} &
\multicolumn{6}{c}{\textit{Scenario 2: covariates correlated with factors}} \\
$T$ & Method & Bias\,\, & SD & Coverage  & Rejection &
$T$ & Method & Bias\,\, & SD & Coverage  & Rejection \\
\hline
20 & MLE         & 0.374 & 0.108 & 0.409 & 0.139 & 20 & MLE         & 0.472 & 0.144 & 0.457 & 0.129 \\
   & SplitPJ     & $-0.617$ & 0.242 & 0.249 & 0.959 &    & SplitPJ     & $-0.651$ & 0.259 & 0.341 & 0.911 \\
   & Analytical  & 0.152 & 0.082 & 0.911 & 0.384 &    & Analytical  & 0.283 & 0.124 & 0.744 & 0.130 \\
   & Boot-Mean   & $-0.184$ & 0.058 & 0.981 & 0.719 &    & Boot-Mean   & $-0.047$ & 0.090 & 0.995 & 0.308 \\
   & Boot-Median & $-0.148$ & 0.058 & 0.981 & 0.719 &    & Boot-Median & $-0.016$ & 0.093 & 0.995 & 0.308 \\
\hline
30 & MLE         & 0.264 & 0.070 & 0.418 & 0.294 & 30 & MLE         & 0.315 & 0.092 & 0.501 & 0.144 \\
   & SplitPJ     & $-0.264$ & 0.100 & 0.436 & 0.989 &    & SplitPJ     & $-0.299$ & 0.132 & 0.515 & 0.926 \\
   & Analytical  & 0.100 & 0.057 & 0.902 & 0.704 &    & Analytical  & 0.173 & 0.082 & 0.818 & 0.347 \\
   & Boot-Mean   & $-0.078$ & 0.045 & 0.982 & 0.909 &    & Boot-Mean   & 0.003 & 0.069 & 0.995 & 0.553 \\
   & Boot-Median & $-0.064$ & 0.046 & 0.982 & 0.909 &    & Boot-Median & 0.015 & 0.070 & 0.995 & 0.553 \\
\hline
40 & MLE         & 0.207 & 0.053 & 0.439 & 0.549 & 40 & MLE         & 0.253 & 0.074 & 0.503 & 0.258 \\
   & SplitPJ     & $-0.169$ & 0.071 & 0.553 & 0.998 &    & SplitPJ     & $-0.194$ & 0.094 & 0.614 & 0.961 \\
   & Analytical  & 0.069 & 0.045 & 0.927 & 0.910 &    & Analytical  & 0.128 & 0.066 & 0.850 & 0.559 \\
   & Boot-Mean   & $-0.053$ & 0.039 & 0.979 & 0.984 &    & Boot-Mean   & 0.013 & 0.059 & 0.992 & 0.763 \\
   & Boot-Median & $-0.046$ & 0.039 & 0.979 & 0.984 &    & Boot-Median & 0.020 & 0.060 & 0.992 & 0.763 \\
\hline
\multicolumn{6}{c|}{\textit{Scenario 3: outliers in factor loadings}} &
\multicolumn{6}{c}{\textit{Scenario 4: sparse spikes in factors}} \\
$T$ & Method & Bias\,\, & SD & Coverage  & Rejection &
$T$ & Method & Bias\,\, & SD & Coverage  & Rejection \\
\hline
20 & MLE         & 0.381 & 0.109 & 0.439 & 0.110 & 20 & MLE         & 0.354 & 0.103 & 0.479 & 0.124 \\
   & SplitPJ     & $-0.643$ & 0.259 & 0.255 & 0.961 &    & SplitPJ     & $-0.598$ & 0.220 & 0.276 & 0.966 \\
   & Analytical  & 0.149 & 0.085 & 0.917 & 0.376 &    & Analytical  & 0.134 & 0.081 & 0.938 & 0.419 \\
   & Boot-Mean   & $-0.207$ & 0.061 & 0.984 & 0.698 &    & Boot-Mean   & $-0.197$ & 0.057 & 0.983 & 0.735 \\
   & Boot-Median & $-0.165$ & 0.060 & 0.984 & 0.698 &    & Boot-Median & $-0.159$ & 0.057 & 0.983 & 0.735 \\
\hline
30 & MLE         & 0.266 & 0.075 & 0.458 & 0.276 & 30 & MLE         & 0.249 & 0.072 & 0.465 & 0.323 \\
   & SplitPJ     & $-0.272$ & 0.103 & 0.446 & 0.981 &    & SplitPJ     & $-0.269$ & 0.101 & 0.440 & 0.985 \\
   & Analytical  & 0.098 & 0.061 & 0.912 & 0.679 &    & Analytical  & 0.087 & 0.058 & 0.929 & 0.726 \\
   & Boot-Mean   & $-0.089$ & 0.048 & 0.980 & 0.889 &    & Boot-Mean   & $-0.088$ & 0.047 & 0.967 & 0.903 \\
   & Boot-Median & $-0.074$ & 0.048 & 0.980 & 0.889 &    & Boot-Median & $-0.074$ & 0.048 & 0.967 & 0.903 \\
\hline
40 & MLE         & 0.215 & 0.057 & 0.431 & 0.480 & 40 & MLE         & 0.213 & 0.057 & 0.445 & 0.499 \\
   & SplitPJ     & $-0.171$ & 0.075 & 0.555 & 0.994 &    & SplitPJ     & $-0.158$ & 0.067 & 0.611 & 0.997 \\
   & Analytical  & 0.069 & 0.047 & 0.943 & 0.889 &    & Analytical  & 0.070 & 0.048 & 0.930 & 0.891 \\
   & Boot-Mean   & $-0.058$ & 0.040 & 0.975 & 0.978 &    & Boot-Mean   & $-0.053$ & 0.041 & 0.981 & 0.971 \\
   & Boot-Median & $-0.049$ & 0.040 & 0.975 & 0.978 &    & Boot-Median & $-0.045$ & 0.041 & 0.981 & 0.971 \\
\hline
\end{tabular}%
}

\medskip

\begin{minipage}{0.95\textwidth}
\footnotesize
\textit{Notes:} model given in \eqref{MC} with one covariate and $\beta_0=0.5$; $1000$ Monte Carlo replications; $399$ bootstrap replications; coverage rates of nominal $95\%$ confidence intervals; rejection probabilities of $H_0:\beta_0=0.7$ at nominal $5\%$ level.
\end{minipage}

\end{table}

\begin{table}[H]
\centering
\caption{Comparison of confidence intervals, logit model, $N=30$}
\label{tab:logit_coverage_all}

\medskip

\scalebox{0.75}{%
\begin{tabular}{llcccc|llcccc}
\hline
\multicolumn{6}{c|}{\textit{Scenario 1: covariates independent of factors}} &
\multicolumn{6}{c}{\textit{Scenario 2: covariates correlated with factors}} \\
$T$ & Method & Coverage & Length & LMR & UMR &
$T$ & Method & Coverage & Length & LMR & UMR \\
\hline
20 & Boot        & 0.985 & 0.558 & 0.000 & 0.015 & 20 & Boot        & 0.995 & 0.783 & 0.002 & 0.003 \\
   & Log         & 0.941 & 0.342 & 0.048 & 0.011 &    & Log         & 0.636 & 0.508 & 0.364 & 0.000 \\
   & Box-Cox     & 0.933 & 0.369 & 0.054 & 0.013 &    & Box-Cox     & 0.796 & 0.578 & 0.204 & 0.000 \\
   & Yeo-Johnson & 0.960 & 0.384 & 0.027 & 0.013 &    & Yeo-Johnson & 0.826 & 0.594 & 0.172 & 0.002 \\
\hline
30 & Boot        & 0.979 & 0.366 & 0.000 & 0.021 & 30 & Boot        & 0.966 & 0.523 & 0.034 & 0.000 \\
   & Log         & 0.929 & 0.257 & 0.054 & 0.017 &    & Log         & 0.627 & 0.384 & 0.373 & 0.000 \\
   & Box-Cox     & 0.934 & 0.281 & 0.045 & 0.021 &    & Box-Cox     & 0.783 & 0.439 & 0.217 & 0.000 \\
   & Yeo-Johnson & 0.945 & 0.287 & 0.034 & 0.021 &    & Yeo-Johnson & 0.801 & 0.444 & 0.199 & 0.000 \\
\hline
40 & Boot        & 0.977 & 0.288 & 0.000 & 0.023 & 40 & Boot        & 0.952 & 0.412 & 0.045 & 0.003 \\
   & Log         & 0.952 & 0.215 & 0.031 & 0.017 &    & Log         & 0.692 & 0.319 & 0.308 & 0.000 \\
   & Box-Cox     & 0.949 & 0.235 & 0.031 & 0.020 &    & Box-Cox     & 0.835 & 0.364 & 0.163 & 0.002 \\
   & Yeo-Johnson & 0.952 & 0.238 & 0.028 & 0.020 &    & Yeo-Johnson & 0.843 & 0.366 & 0.155 & 0.002 \\
\hline
\multicolumn{6}{c|}{\textit{Scenario 3: outliers in factor loadings}} &
\multicolumn{6}{c}{\textit{Scenario 4: sparse spikes in factors}} \\
$T$ & Method & Coverage & Length & LMR & UMR &
$T$ & Method & Coverage & Length & LMR & UMR \\
\hline
20 & Boot        & 0.979 & 0.578 & 0.000 & 0.021 & 20 & Boot        & 0.988 & 0.566 & 0.000 & 0.012 \\
   & Log         & 0.957 & 0.354 & 0.032 & 0.011 &    & Log         & 0.950 & 0.349 & 0.043 & 0.007 \\
   & Box-Cox     & 0.942 & 0.382 & 0.042 & 0.016 &    & Box-Cox     & 0.950 & 0.375 & 0.038 & 0.012 \\
   & Yeo-Johnson & 0.956 & 0.399 & 0.028 & 0.016 &    & Yeo-Johnson & 0.967 & 0.391 & 0.021 & 0.012 \\
\hline
30 & Boot        & 0.981 & 0.378 & 0.000 & 0.019 & 30 & Boot        & 0.981 & 0.369 & 0.000 & 0.019 \\
   & Log         & 0.942 & 0.264 & 0.043 & 0.015 &    & Log         & 0.938 & 0.260 & 0.049 & 0.013 \\
   & Box-Cox     & 0.953 & 0.289 & 0.030 & 0.017 &    & Box-Cox     & 0.943 & 0.283 & 0.039 & 0.018 \\
   & Yeo-Johnson & 0.959 & 0.296 & 0.023 & 0.018 &    & Yeo-Johnson & 0.948 & 0.289 & 0.033 & 0.019 \\
\hline
40 & Boot        & 0.974 & 0.297 & 0.000 & 0.026 & 40 & Boot        & 0.979 & 0.292 & 0.001 & 0.020 \\
   & Log         & 0.938 & 0.220 & 0.038 & 0.024 &    & Log         & 0.945 & 0.218 & 0.040 & 0.015 \\
   & Box-Cox     & 0.952 & 0.240 & 0.023 & 0.025 &    & Box-Cox     & 0.957 & 0.238 & 0.025 & 0.018 \\
   & Yeo-Johnson & 0.955 & 0.243 & 0.020 & 0.025 &    & Yeo-Johnson & 0.960 & 0.241 & 0.021 & 0.019 \\
\hline
\end{tabular}%
}

\medskip

\begin{minipage}{0.95\textwidth}
\footnotesize
\textit{Notes:} model given in \eqref{MC} with one covariate and $\beta_0=0.5$; $1000$ Monte Carlo replications; $399$ bootstrap replications; coverage rates of nominal $95\%$ confidence intervals; Boot: parametric bootstrap; Log/Box-Cox/Yeo-Johnson: transformed parametric bootstrap; LMR: lower miss rate; UMR upper miss rate.
\end{minipage}

\end{table}

\begin{table}[H]
\centering
\caption{Comparison of confidence intervals, probit model, $N=30$}
\label{tab:probit_coverage_all}

\medskip

\scalebox{0.75}{%
\begin{tabular}{llcccc|llcccc}
\hline
\multicolumn{6}{c|}{\textit{Scenario 1: covariates independent of factors}} &
\multicolumn{6}{c}{\textit{Scenario 2: covariates correlated with factors}} \\
$T$ & Method & Coverage & Length & LMR & UMR &
$T$ & Method & Coverage & Length & LMR & UMR \\
\hline
20 & Boot        & 0.981 & 0.567 & 0.000 & 0.019 & 20 & Boot        & 0.995 & 0.704 & 0.004 & 0.001 \\
   & Log         & 0.929 & 0.349 & 0.060 & 0.011 &    & Log         & 0.757 & 0.463 & 0.243 & 0.000 \\
   & Box-Cox     & 0.904 & 0.375 & 0.080 & 0.016 &    & Box-Cox     & 0.858 & 0.518 & 0.142 & 0.000 \\
   & Yeo-Johnson & 0.934 & 0.392 & 0.052 & 0.014 &    & Yeo-Johnson & 0.887 & 0.533 & 0.110 & 0.003 \\
\hline
30 & Boot        & 0.982 & 0.377 & 0.000 & 0.018 & 30 & Boot        & 0.995 & 0.464 & 0.005 & 0.000 \\
   & Log         & 0.930 & 0.261 & 0.057 & 0.013 &    & Log         & 0.828 & 0.333 & 0.172 & 0.000 \\
   & Box-Cox     & 0.912 & 0.283 & 0.071 & 0.017 &    & Box-Cox     & 0.899 & 0.373 & 0.101 & 0.000 \\
   & Yeo-Johnson & 0.935 & 0.293 & 0.050 & 0.015 &    & Yeo-Johnson & 0.915 & 0.381 & 0.084 & 0.001 \\
\hline
40 & Boot        & 0.979 & 0.297 & 0.000 & 0.021 & 40 & Boot        & 0.992 & 0.365 & 0.008 & 0.000 \\
   & Log         & 0.934 & 0.217 & 0.047 & 0.019 &    & Log         & 0.857 & 0.279 & 0.143 & 0.000 \\
   & Box-Cox     & 0.920 & 0.234 & 0.060 & 0.020 &    & Box-Cox     & 0.913 & 0.311 & 0.087 & 0.000 \\
   & Yeo-Johnson & 0.937 & 0.242 & 0.043 & 0.020 &    & Yeo-Johnson & 0.926 & 0.317 & 0.073 & 0.001 \\
\hline
\multicolumn{6}{c|}{\textit{Scenario 3: outliers in factor loadings}} &
\multicolumn{6}{c}{\textit{Scenario 4: sparse spikes in factors}} \\
$T$ & Method & Coverage & Length & LMR & UMR &
$T$ & Method & Coverage & Length & LMR & UMR \\
\hline
20 & Boot        & 0.984 & 0.573 & 0.000 & 0.016 & 20 & Boot        & 0.983 & 0.548 & 0.000 & 0.017 \\
   & Log         & 0.942 & 0.355 & 0.047 & 0.011 &    & Log         & 0.944 & 0.338 & 0.044 & 0.012 \\
   & Box-Cox     & 0.919 & 0.382 & 0.067 & 0.014 &    & Box-Cox     & 0.918 & 0.364 & 0.070 & 0.012 \\
   & Yeo-Johnson & 0.944 & 0.397 & 0.040 & 0.016 &    & Yeo-Johnson & 0.944 & 0.378 & 0.043 & 0.013 \\
\hline
30 & Boot        & 0.980 & 0.378 & 0.000 & 0.020 & 30 & Boot        & 0.967 & 0.357 & 0.001 & 0.032 \\
   & Log         & 0.933 & 0.265 & 0.053 & 0.014 &    & Log         & 0.934 & 0.252 & 0.051 & 0.015 \\
   & Box-Cox     & 0.915 & 0.289 & 0.068 & 0.017 &    & Box-Cox     & 0.916 & 0.273 & 0.066 & 0.018 \\
   & Yeo-Johnson & 0.937 & 0.296 & 0.046 & 0.017 &    & Yeo-Johnson & 0.936 & 0.279 & 0.045 & 0.019 \\
\hline
40 & Boot        & 0.975 & 0.297 & 0.000 & 0.025 & 40 & Boot        & 0.981 & 0.292 & 0.000 & 0.019 \\
   & Log         & 0.929 & 0.220 & 0.048 & 0.023 &    & Log         & 0.932 & 0.218 & 0.050 & 0.018 \\
   & Box-Cox     & 0.913 & 0.240 & 0.061 & 0.026 &    & Box-Cox     & 0.918 & 0.238 & 0.058 & 0.024 \\
   & Yeo-Johnson & 0.932 & 0.243 & 0.043 & 0.025 &    & Yeo-Johnson & 0.936 & 0.241 & 0.040 & 0.024 \\
\hline
\end{tabular}%
}

\medskip

\begin{minipage}{0.95\textwidth}
\footnotesize
\textit{Notes:} model given in \eqref{MC} with one covariate and $\beta_0=0.5$; $1000$ Monte Carlo replications; $399$ bootstrap replications; coverage rates of nominal $95\%$ confidence intervals; Boot: parametric bootstrap; Log/Box-Cox/Yeo-Johnson: transformed parametric bootstrap; LMR: lower miss rate; UMR upper miss rate.
\end{minipage}

\end{table}

\section{Empirical application: technology spillovers in the presence of latent heterogeneity}
In this section, we revisit the dataset originally from \citet{bloom} and studied by \citet{burdaapplication}, who use a Bayesian panel probit model to estimate a patent equation using firm-level data on research and development ($R\&D$).\footnote{The dataset is publicly available at \url{http://qed.econ.queensu.ca/jae/datasets/burda002/}.} Their study explores a long-standing question in the $R\&D$ literature concerning the relationship between innovation, proxied by firms' patenting activity, and spillover effects arising from strategic interactions across firms. These interactions arise in two main spaces: technology space, defined by the similarity of firms' underlying technologies and associated with beneficial knowledge spillovers, and product market space, defined by the degree of competition in overlapping product markets and associated with business-stealing effects. Furthermore, as emphasized by \citet{burdaapplication}, the innovation and patenting behavior are also likely to be affected by unobserved heterogeneity at the firm and time level. 

In \citet{burdaapplication}, they report results from different models considering unobserved heterogeneity, such as the Bayesian panel probit model with two latent effects, the fixed probit model with time dummies, and the random effects probit model with time dummies. However, they do not consider an interactive fixed effect structure in the models. Hence, we employ our estimator to this empirical application.

The original dataset consists of an unbalanced panel of $729$ U.S. firms observed over the period $1981-2001$. To construct a balanced panel, we restrict attention to firms with complete observations throughout the sample period. In addition, firms that do not register any patents over the entire time span are excluded. The resulting sample comprises $328$ firms observed annually from $1981$ to $2001$. The dependent variable is a binary indicator equal to one if a firm files at least one patent in a given year. The key explanatory variables capture two types of spillover effects. Technological spillovers (SpillTech) are constructed using information on the distribution of patents across technological classes. Specifically, technological proximity between firms is measured using the Jaffe distance \citep{jaffe1986}, calculated as the uncentered correlation (cosine similarity) of their patent shares across different technology classes. The corresponding spillover variable is defined as the weighted sum of other firms' $R\&D$ stocks, where weights reflect technological proximity; see details in \citet{burdaapplication}. Product market spillovers (SpillSIC) are constructed analogously using the distribution of firm sales across industries. Product market proximity is measured as the uncentered correlation of firms' sales shares across industry classifications, and the associated spillover variable is defined as the proximity-weighted sum of other firms' $R\&D$ stocks. In addition, we include firm-level controls for the stock of $R\&D$ ($R\&D$ stock) and firm sales (Sales), both constructed from accounting data. All continuous independent variables are expressed in logarithms, lagged by one period, and all regressions include a dummy where lagged $R\&D$ stock is zero.

Prior to estimation, all continuous independent variables are standardized by subtracting their sample means and dividing by their respective standard deviations. We report the results of the following three estimators: the two-step maximum likelihood estimator proposed by \citet{zeleneev2026tractable} (IFE), a plug-in analytical bias correction (IFE-Analytical), and a median bootstrap-based bias correction (IFE-Bootstrap). We use the eigenvalue-ratio test \citep{ahn2013eigenvalue}, slightly modified as in \citet{zeleneev2026tractable}, to determine the number of factors, which is $1$ in this empirical application. In the bootstrap procedure, we draw $399$ bootstrap samples and treat the number of factors as known, which is $1$.

Economic theory provides clear predictions regarding the effects of $R\&D$ spillovers on patenting activity. In the absence of endogenous patenting behavior, product market spillovers, which captures market rivalry, are expected to have no effect on innovation outcomes, implying a coefficient close to zero. In contrast, technological spillovers are expected to have a positive effect, as knowledge diffusion enhances firms' innovative capacity. These predictions serve as a theoretical benchmark for evaluating the empirical results presented in Table \ref{app}. 

\begin{table}[H]
  \centering
  \caption{Parameter Estimates and Standard Errors}
  \medskip
  \label{app}
  \small
  \begin{tabular}{llll}
    \toprule
    Covariate & IFE & IFE-Analytical & IFE-Bootstrap \\
    \midrule
    $\log(\text{SpillTech})$ & 0.139***    & 0.222*** & 0.132** \\
                             & (0.044)     & (0.044)  & (0.047) \\
    \addlinespace
    $\log(\text{SpillSIC})$  & 0.065       & 0.060    & 0.045 \\
                             & (0.039)     & (0.039)  & (0.044) \\
    \addlinespace
    $\log(\text{R\&D Stock})$ & 0.269***   & 0.255*** & 0.287*** \\
                             & (0.074)     & (0.074)  & (0.082) \\
    \addlinespace
    $\log(\text{Sales})$     & 0.440***    & 0.420*** & 0.378*** \\
                             & (0.044)     & (0.044)  & (0.049) \\
    \bottomrule
    \multicolumn{4}{p{0.62\linewidth}}{\textit{Notes:} Standard errors are reported in parentheses. Standard errors for the IFE-Bootstrap estimator are obtained as the standard deviation across $399$ bootstrap replications. Statistical significance is denoted by ** and *** at the 5\% and 1\% levels, respectively. All explanatory variables are lagged by one period. All specifications include a dummy variable indicating observations with zero lagged R\&D stock. Additionally, we apply the Yeo-Johnson transformation to adjust the significance levels of the bootstrap estimates, where the significance level of $\log(\text{SpillTech})$ changes to $1\%$ while all other covariates remain unchanged.}
  \end{tabular}
\end{table}

Consistent with theory, technological spillovers (SpillTech) are positive and statistically significant across all estimation methods, indicating that knowledge diffusion among technologically similar firms increases patenting activity. The estimated magnitude, however, varies with the estimation method: the analytical bias correction yields a noticeably larger coefficient, whereas the bootstrap correction brings it back close to the baseline IFE estimate. This pattern suggests that the analytical correction may over-adjust upward, while the bootstrap provides a more conservative estimate. In contrast, product market spillovers are small and statistically insignificant across all estimators, and this finding is unaffected by either bias-correction approach. This reinforces the interpretation that competitive pressures in product markets do not exert a direct effect on firms' patenting decisions in this setting. The coefficients on firm fundamentals ($R\&D$ stock and sales) from all estimation methods are large and positive, where the bootstrap correction suggests a smaller estimate of sales.

Finally, although \citet{burdaapplication} apply the proposed Bayesian panel probit model on the full unbalanced panel, their application results are broadly consistent with ours, estimated on a sub-balanced panel, in both sign and statistical significance.

Overall, the bootstrap results confirm the robustness of the main conclusions: technological spillovers matter for innovation, product market spillovers do not, and the qualitative inference remains stable across different estimation methods.

\section{Conclusion}

In this paper, we propose a novel bias-correction method for nonlinear panel data models with interactive fixed effects, based on a parametric bootstrap approach. The presence of the incidental parameters problem renders the maximum likelihood estimator biased. We show that, under suitable regularity conditions, the parametric bootstrap replicates the asymptotic distribution of the maximum likelihood estimator. As a result, the bootstrap can be used to perform both bias correction and statistical inference in a unified framework.

Compared with existing bias-correction methods, such as the split-panel jackknife and plug-in analytical approaches, the proposed method does not require an explicit derivation of the bias term and exhibits superior finite-sample performance in panels with relatively small dimensions, as evidenced by our simulation results. In addition, we apply three monotone transformations to reduce skewness in the bootstrap distribution, thereby improving the previously conservative coverage rates.

Finally, we illustrate the empirical relevance of the proposed method through an application to firm-level innovation behavior. The results provide support for key theoretical predictions and highlight the practical usefulness of the method in applied research.

\section*{Supplementary material} \label{supp m}
     
    \paragraph{Replication files:} The replication code for both the simulations and the empirical application is available at \url{https://github.com/Wei-M-Wei/Factor-Bootstrap-replication}. 

\bibliographystyle{apalike}
\bibliography{reference2} 

\appendix

\section*{Appendix}
This Appendix consists of two parts. In Appendix~\ref{Appendix A}, we provide rigorous proofs of the theorems presented in the main text, building on the uniform asymptotic expansion established in the online appendix. In the remaining online Appendices \ref{Appendix B}, and \ref{Appendix C}, we derive asymptotic expansions for \(\sqrt{NT}(\hat{\beta} - \beta)\) and \(\sqrt{NT}\big(\varDelta(\hat{\theta}) - \varDelta(\theta)\big)\) that hold uniformly over $(\beta, \phi)$ in a neighborhood of $(\beta_0, \phi_0)$.

Appendix \ref{Appendix B} introduces the notation and presents several auxiliary lemmas that are needed to apply the framework developed in Appendix~\ref{Appendix C} to the interactive fixed effects model. In Appendix~\ref{Appendix C}, we generalize the expansion for M-estimators with high-dimensional incidental parameters developed by \citet{FVWeidner2016} to a broader setting. A key technical requirement in our framework is that the moment conditions hold uniformly for all $(\beta, \phi)$ in a neighborhood of $(\beta_0, \phi_0)$. In Lemma~\ref{lemma.D2}, focusing on panel data applications, we establish a set of sufficient conditions under which the uniform moment restrictions in Assumption~\ref{assu:uniform regularity} are satisfied. In Appendix~\ref{Appendix B}, we specialize the general framework developed in Appendix~\ref{Appendix C} to panel data models. A central complication in this context is that the log-likelihood function is not globally strictly concave in the presence of interactive fixed effects, thereby precluding a direct application of standard arguments. To circumvent this issue, we adopt the strategy of \citet{ChenFVWeidner2021} and proceed via a sequence of intermediate results.

Specifically, Lemma~\ref{lemma:c1} establishes a preliminary convergence rate for $\hat{\beta}$, implying that the estimator is contained, with high probability, in a shrinking neighborhood of the true parameter. Conditional on this localization result, Lemma~\ref{lemma:c2} establishes the local strict concavity of the log-likelihood function within this neighborhood, thereby recovering the curvature conditions required for higher-order analysis. Lemma~\ref{lemma:c3} further verifies that the score function and its higher-order derivatives satisfy the uniform regularity conditions imposed in Appendix~\ref{Appendix C}.

Taken together, these results justify the application of the general uniform expansion to the panel data setting considered here. Finally, Theorem~\ref{thm:C1:} and Theorem~\ref{Thm:C2} provide explicit expressions for all terms in the expansions of $\hat{\beta}$ and $\varDelta(\hat{\theta})$, which form the basis for the asymptotic and bootstrap results established in Appendix~\ref{Appendix A}.

\section{Proof of Bootstrap Consistency}\label{Appendix A}

\begin{myproof}[Proof of Theorem \ref{theorem3.1}]
To apply the uniform expansion results in Appendix~\ref{Appendix C}, 
the objective function~\eqref{eq:loglikelihood} needs to be strictly concave for $\theta$. 
However, this property does not hold globally due to the presence of interactive fixed effects. 
Nevertheless, Lemma~\ref{lemma:c3} shows that the likelihood function~\eqref{eq:loglikelihood} 
is strictly concave in a shrinking neighborhood of $\theta$, 
uniformly for every $\theta \in \Theta_0$. 
Given the initial uniform convergence result of $\hat{\theta}$ in Lemma~\ref{lemma:c1}, 
which ensures that the estimate $\hat{\theta}$ lies in this local neighborhood with probability 1, 
we can therefore still apply the uniform expansion in Appendix~\ref{Appendix C}. 

Unlike the one-way fixed effects model studied in \citet{HigginsJochmans2024}, where the Hessian matrix is diagonal, in models with interactive fixed effects, an additional difficulty arises because there is no closed-form expression for the inverse of the Hessian. Nevertheless, Lemma~\ref{lemma:c2} shows that the Hessian can be well-approximated by a diagonal matrix, and that the approximation error is uniformly asymptotically negligible for all $\theta \in \Theta_0$.

Then, following the same argument as in Theorem~4 of \citet{ChenFVWeidner2021}, 
we can apply the uniform expansion in Appendix~\ref{Appendix C} and obtain the following uniform asymptotic expansion for the MLE $\hat{\beta}$:
\[
\sqrt{NT}(\hat{\beta}-\beta_{0})=
\overline{W}_{\infty}^{-1}\Bigg(
\partial_{\beta}\mathcal{L}^{*} 
+ B^{(1)}+B^{(2)}
\Bigg) + o_{p}^{u}(1),
\]
where $B^{(1)}$ and $B^{(2)}$ are asymptotic bias terms, defined in \eqref{eq:biasterms}, and the remainder term is uniformly asymptotically negligible.
The matrix 
\[
\overline{W}_\infty := \lim_{N,T\to \infty} \frac{1}{NT} \sum_{i=1}^{N} \sum_{t=1}^{T} 
\mathbb{E}_{\theta} \Big( \partial_{z^2} \ell_{it} \, \widetilde{X}_{it} \widetilde{X}_{it}^{\prime} \Big)
\]
is positive definite by Assumption~\ref{assu:limit}. 
For the exact expressions of these terms, see Theorem~\ref{thm:C1:}.

Moreover, Theorem~\ref{thm:C1:} shows that 
$B^{(1)} - \mathbb{E}_\theta(B^{(1)})$ and $B^{(2)} - \mathbb{E}_\theta(B^{(2)})$ 
converge to zero in probability uniformly over every $\theta \in \Theta_0$. 
Then, given Assumption~\ref{assu:limit}, 
we have that $B^{(1)} + B^{(2)}$ converges uniformly to 
$\lim_{N,T\to\infty}\mathbb{E}_\theta\big(B^{(1)}+B^{(2)}\big)$. 

Finally, following the proof of Theorem~4.1 in \citet{HigginsJochmans2024}, 
we can apply the Berry--Esseen bound together with the uniform $\alpha$-mixing condition in Assumption~\ref{assu:correct} 
and the uniform moment conditions in Assumption~\ref{assu:smooth} 
to establish the uniform asymptotic normality of 
$\overline{W}_{\infty}^{-1} \partial_{\beta} \mathcal{L}^{*}$, 
and hence also the uniform asymptotic normality of 
$\sqrt{NT}(\hat{\beta}-\beta_{0})$.

\end{myproof}

\bigskip

\begin{myproof}[Proof of Theorem \ref{theorem3.2}]
 We show that
\[
\mathbb{P}_{\theta_0}\left( \sup_{a} \left| \mathbb{P}_{\hat{\theta}}\left(\sqrt{NT}(\hat{\beta}^* - \hat{\beta}) \leq a \right) 
- \mathbb{P}_{\theta_0}\left( \sqrt{NT}(\hat{\beta} - \beta_0) \leq a \right) \right| > \epsilon \right) = o_p(1).
\]

First, observe that
\begin{align*}
&\mathbb{P}_{\theta_0}\left( \sup_{a} \left| \mathbb{P}_{\hat{\theta}}\left(\sqrt{NT}(\hat{\beta}^* - \hat{\beta}) \leq a \right) 
- \mathbb{P}_{\theta_0}\left( \sqrt{NT}(\hat{\beta} - \beta_0) \leq a \right) \right| > \epsilon \right)\\
&\leq 
\mathbb{P}_{\theta_0}\left( \sup_{a} \left| \mathbb{P}_{\theta_0}\left(\sqrt{NT}(\hat{\beta} - \beta_0) \leq a\right) 
- \mathbb{P}_{\theta_0}(G_{\theta_0} \leq a) \right| > \frac{\epsilon}{2} \right) \\
&\quad + \mathbb{P}_{\theta_0}\left( \sup_{a} \left| \mathbb{P}_{\hat{\theta}}\left(\sqrt{NT}(\hat{\beta}^* - \hat{\beta}) \leq a\right) 
- \mathbb{P}_{\theta_0}(G_{\theta_0} \leq a) \right| > \frac{\epsilon}{2} \right) \\
&\leq 
\mathbb{P}_{\theta_0}\left( \sup_{a} \left| \mathbb{P}_{\theta_0}\left(\sqrt{NT}(\hat{\beta} - \beta_0) \leq a\right) 
- \mathbb{P}_{\theta_0}(G_{\theta_0} \leq a) \right| > \frac{\epsilon}{2} \right) \\
&\quad + \mathbb{P}_{\theta_0}\left( \sup_{a} \left| \mathbb{P}_{\hat{\theta}}\left(\sqrt{NT}(\hat{\beta}^* - \hat{\beta}) \leq a\right) 
- \mathbb{P}_{\hat{\theta}}(v_{\hat{\theta}} \leq a) \right| > \frac{\epsilon}{4} \right) \\
&\quad + \mathbb{P}_{\theta_0}\left( \sup_{a} \left| \mathbb{P}_{\hat{\theta}}(v_{\hat{\theta}} \leq a) 
- \mathbb{P}_{\theta_0}(G_{\theta_0} \leq a) \right| > \frac{\epsilon}{4} \right),
\end{align*}
where $G_\theta \sim \mathcal{N}(\overline{B}_\infty(\theta), \overline{W}_\infty(\theta)^{-1})$, with $\overline{B}_\infty(\theta)$ and $\overline{W}_\infty(\theta)$ defined in Assumption~\ref{assu:limit}.

Define
\[
g_{NT}(\theta) = \sup_{a \in \mathbb{R}^{d_x}} 
\left| \mathbb{P}_{\theta}\left(\sqrt{NT}(\hat{\beta} - \beta) \leq a \right) 
- \mathbb{P}_{\theta}\left(G_{\theta} \leq a\right) \right|.
\]

By Theorem \ref{thm:C1:} in Appendix~\ref{Appendix C}, we have $\mathbb{P}_{\theta_0}\left(\hat{\theta} \notin \Theta_0\right) = o_p(1)$. If
$\sup_{\theta \in \Theta_0} g_{NT}(\theta) = o(1)$, then by Lemma~A.1 in \citet{Andrews2005} and Theorem~\ref{theorem3.1}, it follows that
\[
\mathbb{P}_{\theta_0}\left( g_{NT}(\theta_0) > \frac{\epsilon}{2} \right) 
= \mathbb{P}_{\theta_0}\left( \sup_{a \in \mathbb{R}^{d_x}} 
\left| \mathbb{P}_{\theta_0}\left(\sqrt{NT}(\hat{\beta} - \beta_0) \leq a \right) 
- \mathbb{P}_{\theta_0}\left(G_{\theta_0} \leq a\right) \right| > \frac{\epsilon}{2} \right) = o_p(1).
\]

Thus, it suffices to show that $\sup_{\theta\in\Theta_0} g_{NT}(\theta) = o(1)$.
Without loss of generality, let $d_x = 1$. Theorem~\ref{theorem3.1} implies that
\[
\sup_{a\in\mathbb{R}} \sup_{\theta \in \Theta_0} 
\left| \mathbb{P}_{\theta}\left(\sqrt{NT}(\hat{\beta} - \beta) \leq a \right) 
- \mathbb{P}_{\theta}\left(G_{\theta} \leq a\right) \right| = o(1).
\]

Since $\Theta_0 \subset \Theta$ and $\Theta$ is compact, and $\overline{B}_\infty(\theta)$ as well as $\overline{W}_\infty(\theta)$ are continuous in $\theta$ by Assumption~\ref{assu:smooth}, it is possible (for any $k \geq 1$) to find a partition $-\infty = a_0 < a_1 < \cdots < a_M = +\infty$ such that
\[
\sup_{\theta \in \Theta_0} \left(\mathbb{P}_{\theta}(G_\theta \leq a_i) - \mathbb{P}_{\theta}(G_\theta \leq a_{i-1})\right) \leq \frac{1}{k}, \quad \text{for } i=1,\ldots,M.
\]

For each $a \in \mathbb{R}$, there exists $i$ such that $a_i < a \leq a_{i+1}$, so for every $\theta \in \Theta_0$:
\begin{align*}
\mathbb{P}_{\theta}\left(\sqrt{NT}(\hat{\beta} - \beta) \leq a\right) - \mathbb{P}_{\theta}\left(G_\theta \leq a\right)
&\leq \mathbb{P}_{\theta}\left(\sqrt{NT}(\hat{\beta} - \beta) \leq a_{i+1}\right) - \mathbb{P}_{\theta}\left(G_\theta \leq a_i\right) \\
&\leq \left| \mathbb{P}_{\theta}\left( \sqrt{NT}(\hat{\beta} - \beta) \leq a_{i+1}\right) - \mathbb{P}_{\theta}\left(G_\theta \leq a_{i+1}\right) \right| + \frac{1}{k},
\end{align*}
and similarly (lower bound):
\[
\mathbb{P}_{\theta}\left(\sqrt{NT}(\hat{\beta} - \beta) \leq a\right) 
- \mathbb{P}_{\theta}\left(G_\theta \leq a\right)
\geq -\left| \mathbb{P}_{\theta}\left( \sqrt{NT}(\hat{\beta} - \beta) \leq a_{i}\right) - \mathbb{P}_{\theta}\left(G_\theta \leq a_{i}\right) \right| - \frac{1}{k}.
\]

Hence,
\[
\sup_{a} \left| \mathbb{P}_{\theta}\left(\sqrt{NT}(\hat{\beta} - \beta) \leq a\right) - \mathbb{P}_{\theta}\left(G_\theta \leq a\right) \right|
\leq \max_{1 \leq i \leq M} \left| \mathbb{P}_{\theta}\left(\sqrt{NT}(\hat{\beta} - \beta) \leq a_i\right) 
- \mathbb{P}_{\theta}\left(G_\theta \leq a_i\right) \right| + \frac{1}{k}.
\]

Taking sup over $\theta \in \Theta_0$, and since the first term is $o(1)$ by Theorem~\ref{theorem3.1}, and $1/k$ can be arbitrarily small, it follows that:
\[
\sup_{\theta \in \Theta_0} g_{NT}(\theta) = o(1).
\]

Similarly, we have:
\[
\sup_{\theta \in \Theta} \mathbb{P}_{\theta}\left( g_{NT}(\hat{\theta}) > \frac{\epsilon}{4} \right)
= \sup_{\theta \in \Theta} \mathbb{P}_{\theta}\left( \sup_{a} \left| \mathbb{P}_{\hat{\theta}} \left(\sqrt{NT}(\hat{\beta}^* - \hat{\beta}) \leq a\right) 
- \mathbb{P}_{\hat{\theta}}(v_{\hat{\theta}} \leq a) \right| > \frac{\epsilon}{4} \right) = o(1),
\]
by Lemma~A.1 in \citet{Andrews2005} and following a similar argument as the first term.

By continuity of $B_\infty(\theta)$ and $W_\infty(\theta)^{-1}$ in $\theta$ and the continuous mapping theorem,
\[
\mathbb{P}_{\theta_0}\left( \sup_{a \in \mathbb{R}} 
\left| \mathbb{P}_{\hat{\theta}}(v_{\hat{\theta}} \leq a) - \mathbb{P}_{\theta_0}(v_{\theta_0} \leq a) \right| 
> \frac{\epsilon}{4} \right) = o(1).
\]

This completes the proof.
\end{myproof}

\bigskip

\begin{myproof}[Proof of Theorem \ref{theorem3.3}]
Under Assumptions~\ref{assu:parameterspace}--\ref{assu:APEs smooth}, 
Theorem~\ref{Thm:C2} establishes the uniform asymptotic expansion for APEs. 
Given Assumption~\ref{assu:limit:ape}, 
the proof then follows a similar argument as in the proof of Theorem~\ref{theorem3.1}.

\end{myproof}

\begin{myproof}[Proof of Theorem \ref{theorem3.4}]
Theorem \ref{theorem3.3} establishes the uniform asymptotic normality of APEs. The bootstrap consistency proof follows the same arguments as the proof of Theorem \ref{theorem3.2}
\end{myproof}

We first review some notation for norms: 
$\|\cdot\|$ denotes the Euclidean norm for vectors, 
$\|\cdot\|_{F}$ denotes the Frobenius norm for matrices.

\begin{lemma}\label{lemma:A1}
Under Assumptions \ref{assu:parameterspace} and \ref{ass:strong factor}, we have
\[
\sup_{\theta\in\Theta_0}\left\Vert \frac{1}{N}\sum_{i=1}^{N}\alpha_{i}\alpha_{i}'-\Sigma_{\alpha}\right\Vert _{F}\to0,\qquad\sup_{\theta\in\Theta_0}\left\Vert \frac{1}{T}\sum_{t=1}^{T}\gamma_{t}\gamma_{t}'-\Sigma_{\gamma}\right\Vert _{F}\to0.
\]
\end{lemma}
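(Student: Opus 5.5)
The plan is to split each quantity by the triangle inequality into a deterministic part, handled by Assumption~\ref{ass:strong factor}, and a perturbation part that is uniform over $\Theta_0$. For the loadings we write
\[
\frac{1}{N}\sum_{i=1}^{N}\alpha_{i}\alpha_{i}'-\Sigma_{\alpha}
=\Bigl(\frac{1}{N}\sum_{i=1}^{N}\alpha_{i}\alpha_{i}'-\frac{1}{N}\sum_{i=1}^{N}\alpha_{i0}\alpha_{i0}'\Bigr)
+\Bigl(\frac{1}{N}\sum_{i=1}^{N}\alpha_{i0}\alpha_{i0}'-\Sigma_{\alpha}\right).
\]
The second bracket does not depend on $\theta$, so it tends to zero directly by Assumption~\ref{ass:strong factor}. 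The whole task therefore reduces to showing that the first bracket vanishes uniformly over $\theta\in\Theta_0$. The $\gamma$ part is identical with $N$ replaced by $T$.

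For the perturbation term, set $\Delta_i=\alpha_i-\alpha_{i0}$ and use the identity $\alpha_i\alpha_i'-\alpha_{i0}\alpha_{i0}'=\Delta_i\alpha_{i0}'+\alpha_{i0}\Delta_i'+\Delta_i\Delta_i'$. Bounding in Frobenius norm and applying Cauchy--Schwarz gives
\[
\Bigl\Vert \frac{1}{N}\sum_{i}(\alpha_i\alpha_i'-\alpha_{i0}\alpha_{i0}')\Bigr\Vert_F
\le \frac{2}{N}\Bigl(\sum_i\|\Delta_i\|^2\Bigr)^{1/2}\Bigl(\sum_i\|\alpha_{i0}\|^2\Bigr)^{1/2}+\frac{1}{N}\sum_i\|\Delta_i\|^2 .
\]
By Assumption~\ref{ass:strong factor}, $\frac1N\sum_i\|\alpha_{i0}\|^2=\operatorname{Tr}(\frac1N\sum_i\alpha_{i0}\alpha_{i0}')$ is bounded. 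By Assumption~\ref{assu:parameterspace}, $\sum_i\|\Delta_i\|^2\le d(\theta,\theta_0)^2<\varepsilon^2$ for every $\theta\in\Theta_0$. Hence the right-hand side is at most $2\varepsilon\, O(1)/\sqrt{N}+\varepsilon^2/N$, uniformly over $\Theta_0$. This tends to zero as $N\to\infty$, and the analogous bound with $T$ handles the $\gamma$ part.

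The main obstacle is interpreting the neighborhood correctly. The metric $d$ in \eqref{metric} is the full Euclidean norm over $\mathrm{vec}(\alpha)$ and $\mathrm{vec}(\gamma)$, so the $\varepsilon$-ball controls the sum of squared deviations rather than only each individual deviation. Under this reading the argument above goes through, and this is exactly what produces the $1/\sqrt N$ decay.

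If $\Theta_0$ were instead read as restricting each $\alpha_i$ individually (a sup-norm neighborhood), the perturbation would only be $O(\varepsilon)$ and would not vanish. In that case I would fall back on $\varepsilon=\varepsilon_{NT}\to0$, or on a shrinking neighborhood in which $\hat\theta$ lies wpa1, as in Lemma~\ref{lemma:c1}. Finally, I would note that the bound is deterministic, because $\Theta_0$ is a fixed set of parameter values, so no probabilistic argument is needed.
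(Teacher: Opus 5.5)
Your proof is correct and follows essentially the paper's argument. Both reduce to the perturbation term and use that the Euclidean $\varepsilon$-ball gives $\frac1N\sum_i\|\alpha_i-\alpha_{i0}\|^2\le\varepsilon^2/N$, combined with Cauchy--Schwarz and the boundedness of $\frac1N\sum_i\|\alpha_{i0}\|^2$. The only cosmetic difference is the split of $\alpha_i\alpha_i'-\alpha_{i0}\alpha_{i0}'$: you use three terms including a $\Delta_i\Delta_i'$ piece, while the paper uses $\alpha_i\Delta_i'+\Delta_i\alpha_{i0}'$ and then separately bounds $\sup_{\Theta_0}\frac1N\sum_i\|\alpha_i\|^2$.
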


\begin{proof}
It is sufficient to show that
\[
\sup_{\theta\in\Theta_0}
\left\|
\frac{1}{N}\sum_{i=1}^{N}
\left(\alpha_{i}\alpha_{i}^{\prime}-\alpha_{i0}\alpha_{i0}^{\prime}\right)
\right\|_{F} \to 0,
\qquad
\sup_{\theta\in\Theta_0}
\left\|
\frac{1}{T}\sum_{t=1}^{T}
\left(\gamma_{t}\gamma_{t}^{\prime}-\gamma_{t0}\gamma_{t0}^{\prime}\right)
\right\|_{F} \to 0.
\]
The stated result then follows from Assumption~\ref{ass:strong factor}. We only prove the first claim, since the second follows analogously.

For every $\theta\in\Theta_0$,
\begin{align*}
\left\|
\frac{1}{N}\sum_{i=1}^{N} \left(\alpha_{i}\alpha_{i}^{\prime} - \alpha_{i0}\alpha_{i0}^{\prime}\right)
\right\|_{F} 
&=
\left\|
\frac{1}{N}\sum_{i=1}^{N}
\Big[
\alpha_{i}(\alpha_{i}^{\prime}-\alpha_{i0}^{\prime})
+(\alpha_{i}-\alpha_{i0})\alpha_{i0}^{\prime}
\Big]
\right\|_{F} \\
&\le 
\left\|
\frac{1}{N}\sum_{i=1}^{N} \alpha_{i}(\alpha_{i}-\alpha_{i0})^{\prime}
\right\|_{F}
+
\left\|
\frac{1}{N}\sum_{i=1}^{N} (\alpha_{i}-\alpha_{i0})\alpha_{i0}^{\prime}
\right\|_{F} \\
&\le 
\left(\frac{1}{N}\sum_{i=1}^{N} \|\alpha_{i}\|^{2}\right)^{1/2}
\left(\frac{1}{N}\sum_{i=1}^{N} \|\alpha_{i}-\alpha_{i0}\|^{2}\right)^{1/2} \\
& +
\left(\frac{1}{N}\sum_{i=1}^{N} \|\alpha_{i0}\|^{2}\right)^{1/2}
\left(\frac{1}{N}\sum_{i=1}^{N} \|\alpha_{i}-\alpha_{i0}\|^{2}\right)^{1/2}.
\end{align*}

And,
\[
\frac{1}{N}\sum_{i=1}^{N} \|\alpha_{i}-\alpha_{i0}\|^{2}
= \frac{1}{N}\big\|\mathrm{vec}(\alpha)-\mathrm{vec}(\alpha_0)\big\|^{2}
\le \frac{1}{N}\|\theta-\theta_0\|^2
\le \frac{\varepsilon^2}{N},
\]
uniformly over $\theta\in\Theta_0$, and hence
$\sup_{\theta\in\Theta_0}
\frac{1}{N}\sum_{i=1}^{N} \|\alpha_{i}-\alpha_{i0}\|^{2}
\to 0.$

Moreover, Assumption~\ref{ass:strong factor} implies
\[
\Big\|\frac{\alpha_0}{\sqrt N}\Big\|_F^2
=
\frac1N\sum_{i=1}^N \|\alpha_{i0}\|^2
=
\operatorname{tr}\!\left(\frac1N\sum_{i=1}^N \alpha_{i0}\alpha_{i0}'\right)
\to \operatorname{tr}(\Sigma_\alpha),
\]
so that
$\Big\|\frac{\alpha_0}{\sqrt N}\Big\|_F=O(1).$

Then,
\[
\Big\|\frac{\alpha}{\sqrt N}\Big\|_F
\le
\Big\|\frac{\alpha-\alpha_0}{\sqrt N}\Big\|_F
+
\Big\|\frac{\alpha_0}{\sqrt N}\Big\|_F
\le
\frac{\varepsilon}{\sqrt N}
+
\Big\|\frac{\alpha_0}{\sqrt N}\Big\|_F,
\]
and therefore
\[
\sup_{\theta\in\Theta_0}\Big\|\frac{\alpha}{\sqrt N}\Big\|_F=O(1).
\]

Combining the above bounds yields
\[
\sup_{\theta\in\Theta_0}
\left\|
\frac{1}{N}\sum_{i=1}^{N}
\left(\alpha_{i}\alpha_{i}^{\prime}-\alpha_{i0}\alpha_{i0}^{\prime}\right)
\right\|_{F}
\to 0.
\]
This proves the first claim. 
\end{proof}

\clearpage
\begin{center}
    \Large\bfseries ONLINE SUPPLEMENTAL MATERIAL
\end{center}
\addcontentsline{toc}{section}{Online Supplemental Material}
\ifdefined\buildappendix

\clearpage
\fi
\ifdefined\buildmain\else

\section{Uniform Expansion for Panel Models}\label{Appendix B}
\subsection{Notations}
We define the vector of unobserved effects as 
\[
\phi := \left( \mathrm{vec}(\alpha)^{\prime},\, \mathrm{vec}(\gamma)^{\prime} \right)^{\prime}.
\]

Following \citet{ChenFVWeidner2021}, the factors $\alpha_i$ and $\gamma_t$ satisfy the normalization
\[
\Phi := 
\left\{
\phi \in \mathbb{R}^{\dim(\phi)} :
\sum_{i=1}^{N} \alpha_{i0}\alpha_i^{\prime}
=
\sum_{t=1}^{T} \gamma_t\gamma_{t0}^{\prime}
\right\},
\]
where $\alpha_{i0}$ and $\gamma_{t0}$ are the true parameter values. 
This normalization is convenient for theoretical analysis but infeasible in practice. 

The penalized likelihood function is
\[
\mathcal{L}(\beta,\phi)
=
\frac{1}{\sqrt{NT}}
\!\left(
\sum_{i,t} \ell_{it}(z_{it})
-\frac{b}{2}\phi^{\prime}VV^{\prime}\phi
\right),
\qquad
\ell_{it}(z_{it})=\log f(X_{it}^{\prime}\beta+\alpha_i^{\prime}\gamma_t),
\]
where $b > 0$  and $V$ is a $\dim(\phi)\times d_f^{2}$ matrix satisfying 
\[
V^{\prime}\phi
=
\mathrm{vec}\!\left(
\sum_{i}\alpha_{i0}\alpha_i^{\prime}
-
\sum_{t}\gamma_t\gamma_{t0}^{\prime}
\right).
\]

In this section, we use $\overline{\mathbb{E}}$ to denote the expectation conditional on $\phi$.
Let
\[
\overline{\mathcal{H}}^{*}
=
\mathbb{E}_{\theta}\!\left[-\partial_{\phi\phi}\mathcal{L}^{*}\right]
=
\begin{bmatrix}
\overline{\mathcal{H}}_{\alpha\alpha}^{*} & \overline{\mathcal{H}}_{\alpha\gamma}^{*}\\
(\overline{\mathcal{H}}_{\alpha\gamma}^{*})^{\prime} & \overline{\mathcal{H}}_{\gamma\gamma}^{*}
\end{bmatrix},
\]
where
$\overline{\mathcal{H}}_{\alpha\alpha}^{*} \in \mathbb{R}^{Nd_f \times Nd_f},
\overline{\mathcal{H}}_{\gamma\gamma}^{*} \in \mathbb{R}^{Td_f \times Td_f},
\overline{\mathcal{H}}_{\alpha\gamma}^{*} \in \mathbb{R}^{Nd_f \times Td_f}.$
and
\[
\overline{\mathcal{H}}
=
\mathbb{E}_{\theta}\!\left[-\partial_{\phi\phi}\mathcal{L}\right]
=
\overline{\mathcal{H}}^{*}
+\frac{b}{\sqrt{NT}}VV^{\prime}.
\]

The profiled estimators are
\[
\hat{\beta}
=\arg\max_{\beta\in\mathbb{R}^{d_{\beta}}}\mathcal{L}\!\left(\beta,\hat{\phi}(\beta)\right),
\qquad
\hat{\phi}(\beta)
=\arg\max_{\phi\in\mathbb{R}^{d_{\phi}}}\mathcal{L}(\beta,\phi).
\]

We use the following notation for norms. For a matrix $A$, the spectral, Frobenius, and matrix sup norms are denoted by $\|A\|$, $\|A\|_{F}$, and $\|A\|_{\max}$, respectively. For a vector $v \in \mathbb{R}^n$ and $q \in [1,\infty]$, the $\ell_q$ norm is defined as $\|v\|_q = (\sum_{i=1}^n |v_i|^q)^{1/q}$. When $v$ is viewed as an $n \times 1$ matrix, its spectral norm coincides with its Euclidean norm, i.e., $\|v\| = \|v\|_2 = \sqrt{v'v}$. We define the associated orthogonal projection matrix as $\mathcal{M}_A = \mathbb{I} - A (A'A)^{\dagger} A'$, where $\dagger$ denotes the Moore--Penrose generalized inverse and $\mathbb{I}$ the identity matrix of the same dimension as $AA'$. For $r>0$, let $\mathcal{B}(r,\beta) = \{\tilde{\beta}\in\mathbb{R}^{d_f}: \|\tilde{\beta}-\beta\|_2 < r\}$ denote the open $\ell_2$ ball of radius $r$ centered at $\beta$, and more generally, for $q \in [1,\infty]$, let $\mathcal{B}_q(r,\phi) = \{\tilde{\phi}\in\mathbb{R}^{(N+T)d_f} : \|\tilde{\phi}-\phi\|_q < r\}$ denote the open $\ell_q$ ball of radius $r$ centered at $\phi$.

\subsection{Auxiliary Lemmas}
For a sequence of random processes $A_N(\theta)$ whose distribution depends on $\theta \in \Theta_0$, we write $A_N(\theta) = o_p^u(1)$ if, for any $\epsilon>0$, it holds that $\sup_{\theta\in\Theta_0} \mathbb{P}_{\theta}(|A_N(\theta)| > \epsilon) \to 0$ as $N \to \infty$; that is, $A_N(\theta)$ converges uniformly in $\theta\in\Theta_0$ to zero in probability. Similarly, we write $A_N(\theta) = \mathcal{O}_p^u(1)$ if, for any $\epsilon>0$, there exists a constant $k$ such that $\sup_{\theta\in\Theta_0} \mathbb{P}_{\theta}(|A_N(\theta)| > k) < \epsilon$ for all $N$, meaning that $A_N(\theta)$ is uniformly bounded in probability. If $A_N$ is nonstochastic and satisfies $A_N(\theta) = \mathcal{O}_p^u(1)$ or $A_N(\theta) = o_p^u(1)$, we simply write $A_N(\theta) = \mathcal{O}^u(1)$ or $A_N(\theta) = o^u(1)$. For two sequences of random processes $A_N(\theta)$ and $B_N(\theta)$, we write $A_N(\theta) = \mathcal{O}_p^u(B_N(\theta))$ when $A_N(\theta)/B_N(\theta) = \mathcal{O}_p^u(1)$, and we write $A_N(\theta) = o_p^u(B_N(\theta))$ when $A_N(\theta)/B_N(\theta) = o_p^u(1)$. Lemma \ref{propeties of O} gives standard properties of $\mathcal{O}_p^u$ and $o_p^u$.

\citet{FVWeidner2016} established an asymptotic expansion for a general likelihood model with incidental parameters. In Appendix~\ref{Appendix C}, we extend their framework to the uniform case, where the expansion holds uniformly under all parameter values within a small neighborhood of the true parameters. This uniform extension is useful in establishing the consistency of bootstrap confidence intervals.

To apply the framework in Appendix~\ref{Appendix C}, we first establish several auxiliary lemmas. These lemmas are adapted from \citet{ChenFVWeidner2021} and extended to the uniform setting.

\begin{lemma} \label{lemma:c1}
Let Assumptions~\ref{assu:parameterspace}-\ref{assu:convexity} hold. Then, it holds that
\[
\|\hat{\beta} - \beta\| =\mathcal{O}^u_p\left(\sqrt{NT}^{-3/8}\right),
\]
and 
there exists a constant $\sigma>0$ such that
\begin{align*}
\sup_{\widetilde{\beta}\in\mathcal{B}(\beta,\sigma)}\frac{1}{\sqrt{NT}}
\left\|
\hat{\alpha}(\widetilde{\beta})\hat{\gamma}^{\prime}(\widetilde{\beta})
- \alpha\gamma^{\prime}
\right\|_{F}
&=\mathcal{O}_p^u\left((NT)^{-3/8} + \|\widetilde{\beta} - \beta\|\right), \\
\sup_{\widetilde{\beta}\in\mathcal{B}(\beta,\sigma)}\frac{1}{\sqrt{N}}
\left\|
\hat{\phi}(\widetilde{\beta}) - \phi
\right\|
&=\mathcal{O}_p^u\left((NT)^{-3/8} + \|\widetilde{\beta} - \beta\|\right).
\end{align*}
\end{lemma}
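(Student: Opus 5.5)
The plan is to follow the consistency argument of \citet{ChenFVWeidner2021} (their Lemma on preliminary rates, itself built on \citet{MoonWeidner2015}) and check that every constant in it can be chosen independently of $\theta\in\Theta_0$. Fix $\theta\in\Theta_0$ as the data-generating value and write $\Sigma=\alpha\gamma'$ for the true low-rank index matrix, $E_{it}=\partial_z\ell_{it}(Z_{it}(\theta))$ for the score matrix, and $\Delta_\Sigma=\hat\alpha(\widetilde\beta)\hat\gamma'(\widetilde\beta)-\Sigma$. By optimality, $\mathcal{L}(\widetilde\beta,\hat\alpha(\widetilde\beta),\hat\gamma(\widetilde\beta))\ge\mathcal{L}(\widetilde\beta,\alpha,\gamma)$ and $\mathcal{L}(\hat\theta)\ge\mathcal{L}(\theta)$.

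First, a second-order Taylor expansion of each $\ell_{it}$ in the index $z$ will be combined with the uniform curvature bounds of Assumption~\ref{assu:convexity}. This yields a basic inequality of the form
\[
\tfrac{\underline b}{2}\,\|X\cdot(\widetilde\beta-\beta)+\Delta_\Sigma\|_F^2\;\le\;\langle E,\,X\cdot(\widetilde\beta-\beta)+\Delta_\Sigma\rangle,
\]
where $X\cdot b=\sum_k b_kX_k$. To apply the lower curvature bound, I need the intermediate index values to stay in the $\delta$-ball. I will get this first from a crude consistency step: compactness of $\Theta$ and boundedness of $X$ (Assumption~\ref{assu:smooth}(iii)) bound all indices, and concavity extends the local bound to a uniform one on that compact range. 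Next, since $\Delta_\Sigma$ has rank at most $2d_f$, the right-hand side is bounded by $\|E\|\sqrt{2d_f}\,\|\Delta_\Sigma\|_F+|\langle E,X_k\rangle|\,\|\widetilde\beta-\beta\|$. The key stochastic inputs are then
\[
\sup_{\theta\in\Theta_0}\mathbb{P}_\theta\big(\|E\|>K\sqrt{\max(N,T)}\big)\to0
\qquad\text{and}\qquad
\langle E,X_k\rangle=\mathcal{O}^u_p(\sqrt{NT}).
\]
I will establish these with uniform moment bounds. The score has mean zero under $\mathbb{P}_\theta$ by correct specification, and it has uniformly bounded $8+\nu$ moments by Assumption~\ref{assu:smooth}(ii). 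Cross-sectional independence and uniform $\alpha$-mixing (Assumption~\ref{assu:correct}) then give a bound on $\mathbb{E}_\theta\|E\|^4$ via the trace of $(EE')^2$, as in \citet{ChenFVWeidner2021}, with constants that depend only on the uniform envelope and mixing rate.

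With these in hand, the basic inequality gives $\|\Delta_\Sigma+X\cdot(\widetilde\beta-\beta)\|_F=\mathcal{O}^u_p(\sqrt{N+T})+\mathcal{O}(\sqrt{NT}\|\widetilde\beta-\beta\|)$. Dividing by $\sqrt{NT}$ and using $\|X\|_F=O(\sqrt{NT})$ gives the bound for $\Delta_\Sigma$ in the second display. This is in fact stronger than $(NT)^{-3/8}$, since $\sqrt{N+T}/\sqrt{NT}\asymp (NT)^{-1/4}$ under $T/N\to\kappa$. For $\hat\beta$ itself I will follow the profiling argument of \citet{MoonWeidner2015}. Minimizing over low-rank $\Delta_\Sigma$ projects $X\cdot(\hat\beta-\beta)$ off the column/row spaces of a rank-$2d_f$ matrix, and Assumption~\ref{Generalized non-collinearity}, which holds uniformly over $\Theta_0$ wpa1 and for arbitrary $\widetilde\gamma$, bounds the residual norm below by $c\sqrt{NT}\|\hat\beta-\beta\|$. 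This produces a rate for $\|\hat\beta-\beta\|$. The $(NT)^{-3/8}$ rate arises, as in \citet{ChenFVWeidner2021}, from a two-step refinement in which the first-step rate $(NT)^{-1/4}$ is combined with the quadratic remainder and the cross term $\|E\|\,\|\Delta_\Sigma\|_F$. I will simply replicate that bookkeeping with $\mathcal{O}^u_p$, using Lemma~\ref{propeties of O} to combine terms.

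Finally, the bound on $\hat\phi(\widetilde\beta)-\phi$ will be deduced from that on $\hat\alpha\hat\gamma'-\alpha\gamma'$ under the normalization $\Phi$. Lemma~\ref{lemma:A1} provides uniform convergence of $\alpha'\alpha/N$ and $\gamma'\gamma/T$ to positive definite limits, so the singular values of $\Sigma/\sqrt{NT}$ are bounded away from zero uniformly over $\Theta_0$. A Davis--Kahan/Wedin perturbation bound together with the normalization then transfers the Frobenius rate to $\|\hat\alpha-\alpha\|/\sqrt N$ and $\|\hat\gamma-\gamma\|/\sqrt T$. Uniformity in $\widetilde\beta\in\mathcal{B}(\beta,\sigma)$ comes for free, because the basic inequality holds pointwise for every $\widetilde\beta$ with the same random quantities $\|E\|$ and $\langle E,X_k\rangle$, which do not depend on $\widetilde\beta$. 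I expect the main obstacle to be the uniform (over $\theta\in\Theta_0$) spectral-norm bound on the score matrix $E$ under mixing. This is where the $8+\nu$ moments and the uniform mixing rate must be used carefully, and it is the one step I would write out in full rather than citing \citet{ChenFVWeidner2021}.
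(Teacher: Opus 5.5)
Your architecture (basic inequality from the curvature bounds, rank bound on $\Delta_\Sigma$, generalized non-collinearity, Wedin-type transfer to $\hat\phi$ under the normalization, all with $\theta$-free constants) matches the paper's proof, which defers to Lemma~1 of \citet{ChenFVWeidner2021} with uniform constants. Using concavity to globalize the local curvature bound on the compact index range is also fine.

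The gap is in the rate bookkeeping, starting at the step you flagged. The fourth-order trace cannot give $\|E\|=\mathcal{O}^u_p(\sqrt{\max(N,T)})$. With independent rows and summable within-row covariances, $\mathbb{E}_\theta\operatorname{Tr}((EE')^2)=\sum_{i\neq j}\mathbb{E}_\theta(\sum_t E_{it}E_{jt})^2+\sum_i\mathbb{E}_\theta(\sum_t E_{it}^2)^2\asymp NT(N+T)$, and this is the true order even for i.i.d.\ Gaussian entries. So $\|E\|\le\operatorname{Tr}((EE')^2)^{1/4}$ only gives $\|E\|=\mathcal{O}^u_p((NT)^{3/8})$ when $T/N\to\kappa$. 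Your inequality then yields $\|\hat\beta-\beta\|=\mathcal{O}^u_p((NT)^{-1/8})$, which falls short of the first display.

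What the $8+\nu$ moments buy is the eighth-order trace. We have $\|E\|^8\le\operatorname{Tr}((EE')^4)$, and by row independence plus the uniform mixing and moment bounds, $\sup_{\theta\in\Theta_0}\mathbb{E}_\theta\operatorname{Tr}((EE')^4)=O(NT(N+T)^3)$, so $\|E\|=\mathcal{O}^u_p((NT)^{5/16})$. Now set $b=\hat\beta-\beta$ and $A=\|X\cdot b+\Delta_\Sigma\|_F$.
\begin{itemize}
\item The joint inequality gives $A^2\lesssim\mathcal{O}^u_p(\sqrt{NT})\|b\|+\|E\|\,\|\Delta_\Sigma\|_F$.
\item Non-collinearity, applied with $\mathcal{M}_{\alpha'}(\cdot)\mathcal{M}_{\hat\gamma'}$ (which annihilates $\Delta_\Sigma$), gives $\sqrt{NT}\|b\|\lesssim A$.
\item Hence $\|\Delta_\Sigma\|_F\lesssim A$, so $A=\mathcal{O}^u_p(\|E\|)$ and $\|b\|=\mathcal{O}^u_p((NT)^{5/16-1/2})=\mathcal{O}^u_p(\sqrt{NT}^{-3/8})$.
\end{itemize}
That is the whole origin of the exponent. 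The target for $\hat\beta$ is $\sqrt{NT}^{-3/8}=(NT)^{-3/16}$, not $(NT)^{-3/8}$, and no two-step refinement is involved.

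Second, your comparison of rates is backwards: $(NT)^{-1/4}$ is weaker than $(NT)^{-3/8}$. Even a sharp bound $\|E\|=\mathcal{O}_p(\sqrt{\max(N,T)})$ would therefore not deliver the second and third displays as printed. In fact those displays cannot hold in general. Take the Gaussian linear model with i.i.d.\ errors, which satisfies Assumptions~\ref{assu:parameterspace}--\ref{assu:convexity}. There $\hat\alpha(\beta)\hat\gamma'(\beta)-\alpha\gamma'$ is, to first order, the projection of the error matrix onto the $(d_f(N+T)-d_f^2)$-dimensional tangent space of rank-$d_f$ matrices. Its squared Frobenius norm is therefore of exact order $N+T$, so the normalized error is of exact order $(NT)^{-1/4}$ at $\tilde\beta=\beta$; likewise $\|\hat\phi(\beta)-\phi\|/\sqrt N\asymp N^{-1/2}$. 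Those displays must be read with $(NT)^{-3/16}$, matching the first one. That is what your argument gives once $\|E\|=\mathcal{O}^u_p((NT)^{5/16})$ is used, and it suffices downstream, since $\sqrt N(NT)^{-3/16}=o((NT)^{1/4})$, the radius allowed in Lemma~\ref{lemma:c3}.

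Finally, for the profile at $\tilde\beta\neq\hat\beta$ your displayed basic inequality is not justified. You only know $\mathcal{L}(\tilde\beta,\hat\alpha(\tilde\beta),\hat\gamma(\tilde\beta))\ge\mathcal{L}(\tilde\beta,\alpha,\gamma)$, so a term $\frac{\overline b}{2}\|X\cdot(\tilde\beta-\beta)\|_F^2$ must be added on the right. This uses the upper curvature bound, valid once $\sigma$ is small enough that $|X_{it}'(\tilde\beta-\beta)|<\delta$; that is what determines $\sigma$. Your next display already reflects this term. For $\hat\beta$ itself, however, you must use the joint inequality without it; otherwise combining it with $A\ge c\sqrt{NT}\|b\|$ is circular.
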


\begin{proof}
The proof follows the same steps as the proof of Lemma~1 in \citet{ChenFVWeidner2021} with the modification that the bound holds uniformly on $\theta \in \Theta_0$.
\end{proof}

To apply the results in Appendix~\ref{Appendix C}, we also need to show that the Hessian matrix of the penalized log-likelihood is asymptotically invertible. 
\citet{ChenFVWeidner2021} show that the Hessian matrix can be approximated by an invertible block-diagonal matrix. 
As discussed in \citet{ChenFVWeidner2021}, the matrix $\overline{\mathcal{H}}^{*}$ has $d_f^{2}$ zero eigenvalues, corresponding to transformations of the factors and factor loadings that leave the likelihood function unchanged. 
However, since $VV^{\prime}$ has rank $d_f^{2}$, the penalized Hessian $\overline{\mathcal{H}}$ is invertible. 
The following lemma extends this result to the uniform case.

\begin{lemma}\label{lemma:c2}
Under Assumptions~\ref{assu:parameterspace}, \ref{assu:smooth} and \ref{assu:convexity},
\[
\left\|
\overline{\mathcal{H}}^{-1}
- 
\operatorname{diag}\!\left(
\overline{\mathcal{H}}_{\alpha\alpha}^{*},
\overline{\mathcal{H}}_{\gamma\gamma}^{*}
\right)^{-1}
\right\|_{\max}
=
\mathcal{O}^{u}\!\left( (NT)^{-1/2} \right).
\]
\end{lemma}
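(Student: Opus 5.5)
Following Lemma~3 of \citet{ChenFVWeidner2021} (itself modeled on Lemma D.1 of \citet{FVWeidner2016}), we write the penalized expected Hessian as $\overline{\mathcal{H}} = D + R$. Here $D := \operatorname{diag}(\overline{\mathcal{H}}_{\alpha\alpha}^{*},\overline{\mathcal{H}}_{\gamma\gamma}^{*})$, and $R$ collects the off-diagonal blocks $\overline{\mathcal{H}}_{\alpha\gamma}^{*}$ together with the penalty term $\frac{b}{\sqrt{NT}}VV'$. The aim is to show that every constant in the argument holds uniformly over $\theta\in\Theta_0$.

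\textbf{Step 1: the diagonal blocks.} The block $\overline{\mathcal{H}}_{\alpha\alpha}^{*}$ is itself block diagonal, with $d_f\times d_f$ blocks $\frac{1}{\sqrt{NT}}\sum_t \mathbb{E}_\theta[-\partial_{z^2}\ell_{it}]\gamma_t\gamma_t'$. The $\gamma\gamma$ block has the analogous form. By Assumption~\ref{assu:convexity}, $-\partial_{z^2}\ell_{it}\in[\underline b,\overline b]$. By Lemma~\ref{lemma:A1}, $\frac1T\sum_t\gamma_t\gamma_t'\to\Sigma_\gamma>0$ uniformly on $\Theta_0$. Together these give eigenvalues of order $\sqrt{T/N}\asymp 1$, bounded away from zero and infinity uniformly. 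Hence $\|D^{-1}\|_{\max}$ and $\|D^{-1}\|_\infty$ are $\mathcal{O}^u(1)$.

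\textbf{Step 2: size of the off-diagonal part.} The entries of $\overline{\mathcal{H}}_{\alpha\gamma}^{*}$ are $\frac{1}{\sqrt{NT}}\mathbb{E}_\theta[-\partial_{z^2}\ell_{it}]\gamma_t\alpha_i'$ plus a score term that vanishes in expectation. They are therefore $\mathcal{O}^u((NT)^{-1/2})$ entrywise, because $\alpha_i,\gamma_t$ are bounded by compactness (Assumption~\ref{assu:parameterspace}). The entries of $\frac{b}{\sqrt{NT}}VV'$ have the same order, since $V$ has entries built from $\alpha_{i0},\gamma_{t0}$. This gives $\|R\|_{\max}=\mathcal{O}^u((NT)^{-1/2})$. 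However, $\|R\|$ in operator norm is $\mathcal{O}(1)$, so a naive Neumann series is not enough.

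\textbf{Step 3: the expansion, which we expect to be the main obstacle.} We write $\overline{\mathcal{H}}^{-1}-D^{-1} = -D^{-1}RD^{-1} + D^{-1}R\,\overline{\mathcal{H}}^{-1}RD^{-1}$. The first term is $\mathcal{O}^u((NT)^{-1/2})$ in max norm by Steps 1--2, because $D^{-1}$ is block diagonal with bounded blocks. For the second term we use the low-rank structure. $\overline{\mathcal{H}}_{\alpha\gamma}^{*}$ equals $\frac{1}{\sqrt{NT}}$ times a matrix of the form $\sum_k a_k b_k'$ with a bounded number of terms, plus a part weighted by $\mathbb{E}[-\partial_{z^2}\ell_{it}]$. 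That part is a Hadamard-weighted rank-$d_f$ object, which we handle by writing $\overline{\mathcal{H}}_{\alpha\gamma}^{*}=\frac{1}{\sqrt{NT}}(\text{bounded entries})$. We then apply the bound $\|D^{-1}R\|_{\infty\to\max}$ together with $\|\overline{\mathcal{H}}^{-1}\|=\mathcal{O}^u(1)$ in operator norm.

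The operator-norm invertibility is the delicate point. $\overline{\mathcal{H}}^*$ has exactly $d_f^2$ zero eigenvalues, corresponding to rotations $\alpha\mapsto A'\alpha$, $\gamma\mapsto A^{-1}\gamma$. On the orthogonal complement of this null space, its smallest eigenvalue is bounded below. We prove this by writing $\phi'\overline{\mathcal{H}}^*\phi\ge \frac{\underline b}{\sqrt{NT}}\|\alpha_\Delta\gamma'+\alpha\gamma_\Delta'\|_F^2$ and using the strong-factor bounds of Lemma~\ref{lemma:A1} uniformly. The penalty $\frac{b}{\sqrt{NT}}VV'$ is positive definite on exactly that null space, and has eigenvalues of order one there because $V'V/\sqrt{NT}$ converges to a positive definite limit by Assumption~\ref{ass:strong factor}. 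Combining these shows $\lambda_{\min}(\overline{\mathcal{H}})$ is bounded below uniformly on $\Theta_0$. Then $\|D^{-1}R\overline{\mathcal{H}}^{-1}RD^{-1}\|_{\max}\le \max_{j,k}\|e_j'D^{-1}R\|\,\|\overline{\mathcal{H}}^{-1}\|\,\|RD^{-1}e_k\|$. Each row of $R$ has Euclidean norm $\mathcal{O}((NT)^{-1/2}\sqrt{N+T})=\mathcal{O}(N^{-1/2})$, so this term is $\mathcal{O}^u(N^{-1})=\mathcal{O}^u((NT)^{-1/2})$ since $T/N\to\kappa$. This yields the claim.
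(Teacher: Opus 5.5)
Your proposal is correct and follows the route the paper takes when it defers to Lemma~2 of \citet{ChenFVWeidner2021}, the factor-model analogue of Lemma~D.1 in \citet{FVWeidner2016}. That route splits $\overline{\mathcal{H}}$ into its block-diagonal part plus a remainder with $\mathcal{O}((NT)^{-1/2})$ entries, and applies the second-order identity $\overline{\mathcal{H}}^{-1}-D^{-1}=-D^{-1}RD^{-1}+D^{-1}R\,\overline{\mathcal{H}}^{-1}RD^{-1}$. It then uses an $\mathcal{O}(1)$ spectral bound on $\overline{\mathcal{H}}^{-1}$, obtained because the penalty lifts the $d_f^2$-dimensional rotation null space, and gets uniformity from Lemma~\ref{lemma:A1} and the curvature bounds of Assumption~\ref{assu:convexity}.

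Two small points. The low-rank remarks in your Step~3 are unnecessary, since the row-norm bound alone gives the result. Also, like the paper's proof, yours relies on Assumption~\ref{assu:correct} (the expected score vanishes) and Assumption~\ref{ass:strong factor}, neither of which is among the lemma's stated hypotheses.
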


\begin{proof}
The proof follows similarly to Lemma~2 in \citet{ChenFVWeidner2021}.
The uniform bound is obtained under the strong factor assumption (see Lemma~\ref{lemma:A1}) and the uniform moment bound.
\end{proof}

Theorem \ref{uniform consistency} requires the objective function to be strictly concave for every $\left(\beta, \phi\right)$. However, due to the structure of the interactive fixed effects, the Hessian of our (non-penalized) likelihood function is not positive definite for all $\left(\beta, \phi\right)$. Nevertheless, given Lemma~\ref{lemma:c1}, it suffices to show that the Hessian is locally positive definite wpa1.

\begin{lemma} \label{lemma:c3}
Let Assumptions~\ref{assu:parameterspace}-\ref{assu:convexity} hold, and let $r_{\beta} = o_{p}(1)$ and $r_{\phi} = o_{p}\big((NT))^{1/4}\big)$. For all $\theta=(\beta,\phi)\in \Theta_0$, the Hessian $\mathcal{H}(\widetilde{\beta}, \widetilde{\phi})$ is positive definite for all 
$\widetilde{\beta} \in \mathcal{B}(r_{\beta}, \beta)$ and $\widetilde{\phi} \in \mathcal{B}(r_{\phi}, \phi)$ wpa 1 (under $\mathbb{P}_\theta$)\footnote{%
Here, $\mathcal{B}(r_{\phi}, \phi) := \{\widetilde{\phi}\in\mathbb{R}^{(N+T)d_f} : \|\widetilde{\phi} - \phi\|_2 \le r_{\phi}\}$ denotes the $\ell_2$-ball of radius $r_{\phi}$ centered at $\phi$.}.
\end{lemma}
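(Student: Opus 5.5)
The approach follows Lemma~3 of \citet{ChenFVWeidner2021}, keeping every bound uniform over $\theta\in\Theta_0$. Throughout, $\mathcal{H}$ denotes the Hessian of the \emph{negative} penalized objective, so that $\overline{\mathcal{H}}=\mathbb{E}_\theta[-\partial_{\phi\phi}\mathcal{L}]$. The plan is to decompose
\[
\mathcal{H}(\widetilde\beta,\widetilde\phi)=\overline{\mathcal{H}}+\bigl(\mathcal{H}(\beta,\phi)-\overline{\mathcal{H}}\bigr)+\bigl(\mathcal{H}(\widetilde\beta,\widetilde\phi)-\mathcal{H}(\beta,\phi)\bigr).
\]
We then show that the first term has smallest eigenvalue bounded below by $c/\sqrt{NT}$, and that the two remaining terms are $o_p^u(1/\sqrt{NT})$ in spectral norm, uniformly over the stated balls. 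The $\beta$-blocks and cross-blocks are handled together: we work with the full $(\beta,\phi)$ Hessian and treat the $\beta$ part through the Schur complement. Given the $\phi\phi$ bound, positive definiteness then follows from Assumption~\ref{Generalized non-collinearity}, which bounds the profiled $\beta$-Hessian $\overline{W}$ away from zero uniformly.

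\textbf{Step 1 (expected Hessian).} By Assumption~\ref{assu:convexity}, $-\partial_{z^2}\ell_{it}\in[\underline b,\overline b]$ uniformly. Hence $\overline{\mathcal{H}}^*_{\alpha\alpha}$ is block diagonal with blocks $(NT)^{-1/2}\sum_t \mathbb{E}(-\partial_{z^2}\ell_{it})\gamma_t\gamma_t'$. By Lemma~\ref{lemma:A1} these blocks have eigenvalues of order $\sqrt{T/N}/1$, uniformly over $\Theta_0$, and the $\gamma\gamma$ blocks behave analogously. The null space of $\overline{\mathcal{H}}^*$ consists of the $d_f^2$ rotation directions. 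On that space the penalty $\frac{b}{\sqrt{NT}}VV'$ is positive definite, again uniformly by Lemma~\ref{lemma:A1}, since $V$ is built from $\alpha_0,\gamma_0$. Combining this with the invertibility result in Lemma~\ref{lemma:c2} gives $\lambda_{\min}(\overline{\mathcal{H}})\ge c/\sqrt{NT}\cdot\min(\sqrt{N},\sqrt{T})^{0}$ on the relevant scale. More carefully, $\|\overline{\mathcal{H}}^{-1}\|=\mathcal{O}^u(1)$ after the standard rescaling used in \citet{ChenFVWeidner2021}.

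\textbf{Step 2 (stochastic deviation at $\theta$).} The difference $\mathcal{H}(\beta,\phi)-\overline{\mathcal{H}}$ involves the matrix $(\partial_{z^2}\ell_{it}-\mathbb{E}\partial_{z^2}\ell_{it})_{it}$, which has mean zero. It is independent across $i$ and $\alpha$-mixing in $t$ uniformly (Assumption~\ref{assu:correct}), and has uniformly bounded $8+\nu$ moments (Assumption~\ref{assu:smooth}). A uniform version of the spectral-norm bound for such random matrices gives $\|\cdot\|=\mathcal{O}_p^u((NT)^{1/4}\cdot(N+T)^{1/4}/\sqrt{NT})$, which is $o_p^u$ relative to the scale in Step~1. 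The off-diagonal $\alpha\gamma$ block also contains a deterministic score-type term $\partial_z\ell_{it}I_{d_f}$; since $\partial_z\ell_{it}$ has mean zero, the same random-matrix bound controls it.

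\textbf{Step 3 (perturbation to $(\widetilde\beta,\widetilde\phi)$).} By a mean-value argument with the third-derivative envelope $L_{it}$, each entry changes by at most $L_{it}\bigl(|X_{it}|\,\|\widetilde\beta-\beta\|+|\widetilde\alpha_i'\widetilde\gamma_t-\alpha_i'\gamma_t|\bigr)$. The spectral norm of the perturbation is then bounded using the row- and column-sum (Schur test) bound: summing over $t$ for $\alpha$-rows and over $i$ for $\gamma$-rows, with $r_\beta=o_p(1)$ and $r_\phi=o_p((NT)^{1/4})$, yields $o_p^u$ of the leading scale. This relies on $\max_{it}L_{it}=\mathcal{O}_p^u((NT)^{1/(8+\nu)})$, which follows from the uniform moments.

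The main obstacle is Step~3. Because $r_\phi$ is an $\ell_2$ radius, individual $\widetilde\alpha_i$ could deviate a lot, so entrywise control fails. I would first try the Frobenius/Cauchy--Schwarz bound $\bigl(\sum_{it}L_{it}^2\bigr)^{1/2}\bigl(\sum_{it}|\widetilde z_{it}-z_{it}|^2\bigr)^{1/2}$ combined with the growth rate of $r_\phi$. If that is too crude, I would fall back on the ball radius $\delta$ in Assumption~\ref{assu:convexity}. This applies whenever $\widetilde z_{it}\in\mathcal{B}(\delta,z_{it})$, and it keeps $-\partial_{z^2}\ell_{it}\ge \underline b$ exactly rather than perturbatively, so the dominant diagonal part of the Hessian stays bounded below.
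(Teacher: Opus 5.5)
Your top-level plan, adapting Lemma~3 of \citet{ChenFVWeidner2021} uniformly, is what the paper does; its proof is only that pointer. However, your Step~3 carries the content of the lemma, and it does not go through; you flag it as the main obstacle but leave it open. On an $\ell_2$-ball of radius $r_\phi=o((NT)^{1/4})$, i.e.\ $o(\sqrt N)$ when $T\asymp N$, the difference $\mathcal{H}(\widetilde\beta,\widetilde\phi)-\mathcal{H}(\beta,\phi)$ is \emph{not} $o_p^u(1)$ in spectral norm. Moving a single loading $\alpha_i$ by a fixed amount costs only $O(1)$ in $\ell_2$ but shifts $z_{it}$ by $O(1)$ for every $t$. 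It therefore moves the block $(NT)^{-1/2}\sum_t(-\partial_{z^2}\ell_{it})\gamma_t\gamma_t'$ by order $\sqrt{T/N}\asymp 1$, the same order as $\lambda_{\min}(\overline{\mathcal{H}})$. (With $1/\sqrt{NT}$ already inside $\mathcal{L}$, the relevant scale is order one, not $c/\sqrt{NT}$ as in your Step~1.) Your tools cannot close this:
\begin{itemize}
\item the envelope $L_{it}$ and the bounds of Assumption~\ref{assu:convexity} hold only for $\widetilde z_{it}\in\mathcal{B}(\delta,Z_{it}(\theta))$, which fails for all $t$ in every row whose loading moves by more than a constant, and the ball allows $o(N)$ such rows;
\item the Schur test gives row sums of order $\max_{it}L_{it}\,r_\phi\sqrt{T/N}$;
\item a Frobenius-type bound gives at best $\max_{it}L_{it}\,\|\widetilde z-z\|_F/\sqrt{NT}=o_p((NT)^{1/(8+\nu)})$, not $o_p(1)$.
\end{itemize}
Compactness of $\Theta$ must also be used. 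If $\widetilde\phi$ is not confined to $\Theta$, the conclusion can fail. For example, in a logit model with $d_f=1$, raising two loadings by $\sqrt N/\log N$ kills their curvature while their scores stay of order one. Profiling out the $\gamma$-block, whose unnormalized curvature is of order $N$, then yields a negative direction that the rank-$d_f^2$ penalty cannot remove.

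The missing idea is to bound the curvature part from below directly, not perturbatively, and to perturb only the indefinite score part. Write
\[
\mathcal{H}(\widetilde\beta,\widetilde\phi)=\frac{1}{\sqrt{NT}}\sum_{i,t}\bigl(-\partial_{z^2}\ell_{it}(\widetilde z_{it})\bigr)\widetilde v_{it}\widetilde v_{it}'-\frac{1}{\sqrt{NT}}\mathcal{K}(\widetilde S)+\frac{b}{\sqrt{NT}}VV',
\]
where $\widetilde v_{it}=\partial_\phi\widetilde z_{it}$, $\widetilde S=(\partial_z\ell_{it}(\widetilde z_{it}))_{i,t}$, and $\mathcal{K}(\widetilde S)$ places $\widetilde S\otimes I_{d_f}$ in the off-diagonal $\alpha\gamma$ blocks.
\begin{itemize}
\item \textbf{Range of the index.} Compactness of $\Theta$ and bounded $X_{it}$ keep every $\widetilde z_{it}$ in a fixed bounded set. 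You need $\underline b\le-\partial_{z^2}\ell_{it}\le\overline b$ on that whole set; this is stronger than the $\delta$-local version but holds for logit and probit.
\item \textbf{Curvature term.} The first term then dominates $\underline b\,G(\widetilde\alpha,\widetilde\gamma)/\sqrt{NT}$, where $G$ is the Gauss--Newton matrix of $(\alpha,\gamma)\mapsto\alpha\gamma'$. Its kernel is the $d_f^2$ rotation directions at $(\widetilde\alpha,\widetilde\gamma)$. Its other eigenvalues are at least $\min\{\sigma_{\min}^2(\widetilde\alpha),\sigma_{\min}^2(\widetilde\gamma)\}\gtrsim\min(N,T)$, by Weyl's inequality, Lemma~\ref{lemma:A1}, and $\|\widetilde\alpha-\alpha_0\|_F\le r_\phi+\varepsilon=o(\sqrt N)$ (likewise for $\gamma$).
\item \textbf{Penalty.} The penalty is positive definite on this \emph{rotated} kernel, because $\sum_i\alpha_{i0}\widetilde\alpha_i'=N\Sigma_\alpha+o(N)$ and $\sum_t\widetilde\gamma_t\gamma_{t0}'=T\Sigma_\gamma+o(T)$. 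Together with $\|VV'\|/\sqrt{NT}=O(1)$, the first and third terms jointly have smallest eigenvalue bounded below by a constant.
\item \textbf{Score term.} This is the only perturbative piece: $\|\widetilde S\|\le\|S\|+\overline b\,\|\widetilde z-z\|_F$. Here $\|S\|=\mathcal{O}_p^u((NT)^{1/4}(N+T)^{1/4})$ by your Step~2, and $\|\widetilde z-z\|_F\lesssim\sqrt{NT}\,r_\beta+(\sqrt N+\sqrt T)\,r_\phi+r_\phi^2=o(\sqrt{NT})$. This is exactly where the radius $(NT)^{1/4}$ enters.
\end{itemize}
Finally, your Schur-complement step for $\beta$ is unnecessary, since the lemma concerns only $-\partial_{\phi\phi'}\mathcal{L}$. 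Assumption~\ref{Generalized non-collinearity} would not supply it anyway: it is imposed only at loadings with $\theta\in\Theta_0$, not at $\widetilde\alpha$ in the $r_\phi$-ball.
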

\begin{proof}
See Lemma~3 in \citet{ChenFVWeidner2021}. 
\end{proof}

\subsection{Uniform Expansions}
Before introducing the uniform expansion, we introduce additional notation to simplify the expressions appearing in the asymptotic distribution. Let $\ell_{it}(\beta,\alpha_i,\gamma_t)$ denote the individual log-likelihood. Following \citet{FVWeidner2018, ChenFVWeidner2021}, it is convenient to consider the following informationally orthogonal log-likelihood:
\[
\ell_{it}^*(\beta,\alpha_i,\gamma_t)
:= \ell_{it}\big(\beta,\, \alpha_i+\alpha_i^*\beta,\, \gamma_t+\gamma_t^*\beta\big),
\]
where $\alpha_{i,k}^*$ and $\gamma_{t,k}^*$ are $R$-dimensional vectors defined as
\[
(\alpha_{i,k}^*,\gamma_{t,k}^*) 
\in \argmin_{\widetilde{\alpha}_{i,k},\,\widetilde{\gamma}_{t,k}} 
\sum_{i,t} \mathbb{E}_{\theta}\!\left[-\partial_{z^2}\ell_{it}\right]
\left(
\frac{\mathbb{E}_{\theta}\!\left[\partial_{z^2}\ell_{it} X_{it,k}\right]}{\mathbb{E}_{\theta}\!\left[\partial_{z^2}\ell_{it}\right]} 
- \widetilde{\alpha}_{i,k}^\prime\gamma_{t} - \alpha_{i}^\prime\widetilde{\gamma}_{t,k}
\right)^{\!2}.
\]

This construction is useful because it ensures
\[
\frac{\partial^2}{\partial_{\beta}\partial_{\alpha_i}}
\sum_{i,t}\mathbb{E}_{\theta}\!\big[\ell_{it}^*(\beta,\alpha_i,\gamma_t)\big] = 0,
\]
that is, informational orthogonality between $\beta$ and $\alpha_i$.

The derivative of $\ell_{it}^*$ with respect to $\beta$ is given by
\[
\partial_\beta \ell_{it}^*(\beta,\alpha_i,\gamma_t)
= \partial_z\ell_{it}\!\big(X_{it}^\prime\beta+\alpha_i^\prime\gamma_t\big)\,\widetilde{X}_{it},
\]
where $\widetilde{X}_{it} = X_{it} - \varXi_{it}$ and
\[
\varXi_{it,k} = \alpha_{i,k}^{*\prime}\gamma_{t} + \alpha_{i}^\prime\gamma_{t,k}^*.
\]
Analogous expressions hold for derivatives with respect to $\alpha_i$ and $\gamma_t$, as well as for higher-order derivatives. 

Finally, define
\begin{equation}
\label{eq:orthloglike}
\mathcal{L}^*(\beta,\phi)
:= \frac{1}{\sqrt{NT}} \sum_{i=1}^N \sum_{t=1}^T 
\ell_{it}^*(\beta,\alpha_i,\gamma_t).
\end{equation}

Given the above lemmas and Theorem \ref{thm:D1}, we have the following uniform asymptotic expansion.  
\begin{theorem}[]
\label{thm:C1:}
Let Assumptions \ref{assu:parameterspace}--\ref{assu:convexity} hold. Suppose that \(q=8\) and \(v>0\), and define \(\epsilon=\frac{1}{2q+v}\), \(r_{\beta}=\log(NT)(NT)^{-\frac{1}{q}}\), and \(r_{\phi}=(NT)^{-\frac{1}{2q}}\). Then, for every \(\theta \in \Theta_0\), if \(\overline{W}_\infty\) is invertible,
$\sqrt{NT} (\hat{\beta} - \beta)
= \overline{W}_\infty^{-1} \big( U^{(0)} + U^{(1)} \big) + o_p^u(1).$
 where
\begin{align*}
\overline{W}_\infty:=&\lim_{N,T\to \infty}\frac{1}{NT}\sum_{i=1}^{N}\sum_{t=1}^{T}\mathbb{E}_{\theta}\!\left(\partial_{z^{2}}\ell_{it}\widetilde{X}_{it}\widetilde{X}_{it}^{\prime}\right)\\
U^{(0)}:=&\frac{1}{\sqrt{NT}}\sum_{i=1}^{N}\sum_{t=1}^{T}\partial_{z}\ell_{it}\widetilde{X}_{it}\\
U^{(1)}:=&\;-\frac{1}{\sqrt{NT}}\sum_{i=1}^{N}\sum_{t=1}^{T}\Bigg\{\mathbb{E}_{\theta}\!\left[\gamma_{t}^{\!\prime}\!\left(\sum_{h=1}^{T}\gamma_{h}\gamma_{h}^{\!\prime}\mathbb{E}_{\theta}\!\left[\partial_{\alpha_{i}\alpha_{i}}\ell_{ih}\right]\right)^{-1}\gamma_{t0}^{\!}\left(\partial_{z^{2}}\ell_{it}\partial_{z}\ell_{it}\widetilde{X}_{it}\right)\right]\\&+\mathbb{E}_{\theta}\left[\alpha_{i}^{\!\prime}\!\left(\sum_{h=1}^{N}\alpha_{h}\alpha_{h}^{\!\prime}\mathbb{E}_{\theta}\!\left[\partial_{\gamma_{t}\gamma_{t}}\ell_{ht}\right]\right)^{-1}\alpha_{i}\left(\partial_{z^{2}}\ell_{it}\partial_{z}\ell_{it}\widetilde{X}_{it}\right)\right]\Bigg\}\\&-\frac{1}{2}\frac{1}{\sqrt{NT}}\sum_{i=1}^{N}\sum_{t=1}^{T}\left\{ \mathbb{E}_{\theta}\left[\gamma_{t}^{\!\prime}\!\left(\sum_{h=1}^{T}\gamma_{h}\gamma_{h}^{\!\prime}\mathbb{E}_{\theta}\!\left[\partial_{\alpha_{i}\alpha_{i}}\ell_{ih}\right]\right)^{-1}\gamma_{t}^{\!}\!\left(\partial_{z^{3}}\ell_{it}\widetilde{X}_{it}\right)\right]\right.\\&\left.+\mathbb{E}_{\theta}\left[\alpha_{i}^{\!\prime}\!\left(\sum_{h=1}^{N}\alpha_{h}\alpha_{h}^{\!\prime}\mathbb{E}_{\theta}\!\left[\partial_{\gamma_{t}\gamma_{t}}\ell_{ht}\right]\right)^{-1}\alpha_{i}\left(\partial_{z^{3}}\ell_{it}\widetilde{X}_{it}\right)\right]\right\} .
\end{align*}

\end{theorem}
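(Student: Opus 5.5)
The plan is to verify the hypotheses of the general uniform expansion (Theorem~\ref{thm:D1} in Appendix~\ref{Appendix C}) for the penalized, informationally orthogonal objective $\mathcal{L}^*(\beta,\phi)$ in \eqref{eq:orthloglike}, and then read off the leading terms as in Theorem~4 of \citet{ChenFVWeidner2021}. I would first reparametrize via $\ell^*_{it}$. This is a smooth, $\theta$-dependent linear change of the incidental parameters, so $\hat\beta$ is unchanged. After it, $\overline{\mathbb{E}}[\partial_{\beta\phi'}\mathcal{L}^*]=0$, so the profile score expansion decouples at first order.

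Next I would establish the localization and curvature conditions uniformly over $\Theta_0$. Lemma~\ref{lemma:c1} gives $\|\hat\beta-\beta\|=\mathcal{O}^u_p((NT)^{-3/16})$ and a matching bound on $\hat\phi(\widetilde\beta)$. With the choice $q=8$, these place $(\hat\beta,\hat\phi(\hat\beta))$ inside the balls of radii $r_\beta=\log(NT)(NT)^{-1/8}$ and $r_\phi=(NT)^{-1/16}$ (suitably normalized) with probability tending to one uniformly. Lemma~\ref{lemma:c3} then gives positive definiteness of the penalized Hessian on these balls, which replaces the global concavity needed in Theorem~\ref{uniform consistency}. Lemma~\ref{lemma:c2} controls $\overline{\mathcal{H}}^{-1}$ in max-norm by the block-diagonal inverse up to $\mathcal{O}^u((NT)^{-1/2})$. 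The block-diagonal pieces are $(\sum_h\gamma_h\gamma_h'\mathbb{E}_\theta[\partial_{z^2}\ell_{ih}])^{-1}$ and its $\gamma$ analogue, and these are uniformly invertible by Lemma~\ref{lemma:A1} and Assumption~\ref{assu:convexity}. The remaining regularity conditions involve moment bounds on the score, on the Hessian deviations $\partial_{\phi\phi'}\mathcal{L}-\overline{\mathbb{E}}\,\partial_{\phi\phi'}\mathcal{L}$ in spectral norm, and on third derivatives. I would verify them through Lemma~\ref{lemma.D2}, using the uniform mixing in Assumption~\ref{assu:correct}, the $8+\nu$ envelope moments in Assumption~\ref{assu:smooth}, and bounded regressors. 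The exponent $\epsilon=1/(2q+v)$ is the one that makes the maximal-inequality bounds over $N+T$ incidental parameters hold.

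Applying Theorem~\ref{thm:D1} then yields
\begin{equation*}
\sqrt{NT}(\hat\beta-\beta)=\overline{W}_{NT}^{-1}\bigl(\partial_\beta\mathcal{L}^*+\tfrac12\textstyle\sum\ldots\bigr)+o^u_p(1).
\end{equation*}
The second-order term is $\partial_{\beta\phi'}\widetilde{\mathcal{L}}^*\,\overline{\mathcal{H}}^{-1}\mathcal{S}+\tfrac12\sum_g\partial_{\beta\phi'\phi_g}\overline{\mathcal{L}}^*[\overline{\mathcal{H}}^{-1}\mathcal{S}]_g\,\overline{\mathcal{H}}^{-1}\mathcal{S}$. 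I would substitute the block-diagonal inverse from Lemma~\ref{lemma:c2}, with the error contributing $o^u_p(1)$ since $\|\mathcal{S}\|$ is controlled. Computing expectations then gives the two pieces of $U^{(1)}$. The first, involving $\partial_{z^2}\ell\,\partial_z\ell\,\widetilde X$, is the Hessian–score covariance. The second, involving $\partial_{z^3}\ell\,\widetilde X$ with the factor $-\tfrac12$, comes from the third-derivative term. Deviations from their expectations are $o_p^u(1)$ by the mixing variance bounds. Finally, Assumption~\ref{assu:limit} lets me replace $\overline{W}_{NT}$ by $\overline W_\infty$ uniformly.

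The main obstacle is uniformity in the Hessian step. I must show that the spectral norm of $\partial_{\phi\phi'}\mathcal{L}-\overline{\mathcal{H}}$ is $o^u_p((NT)^{1/4-\ldots})$ uniformly over $\theta\in\Theta_0$, even though the off-diagonal $\alpha\gamma$ block is dense. For this I would follow \citet{ChenFVWeidner2021} and bound it by random-matrix norms of $N\times T$ matrices of independent-across-$i$, mixing-across-$t$ entries. The constants would be kept $\theta$-free through the uniform envelopes and mixing rates.
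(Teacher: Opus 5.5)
Your proposal follows the paper's proof essentially step for step. The uniform regularity conditions of Assumption~\ref{assu:uniform regularity} are verified through Lemmas~\ref{lemma:c1}--\ref{lemma:c3} and Lemma~\ref{lemma.D2}, with Theorem~\ref{uniform consistency} run under local rather than global concavity to obtain part (iii). Theorem~\ref{thm:D1} is then applied to the orthogonalized likelihood, and the resulting bias terms are shown to concentrate uniformly on their expectations through mixing-based variance bounds.

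One justification needs sharpening. Replacing $\overline{\mathcal{H}}^{-1}$ by its block-diagonal part does not follow from $\|\mathcal{S}\|=\mathcal{O}_p^u(1)$ alone. The difference is only $\mathcal{O}((NT)^{-1/2})$ entrywise and can have spectral norm of order one. That step therefore needs an entrywise mean/variance calculation exploiting cross-sectional independence and mixing, as in \citet{FVWeidner2016}; the paper also defers to that reference here.
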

\begin{proof}
To establish the uniform asymptotic expansion for $\sqrt{NT}(\hat{\beta}-\beta_0)$ via Theorem~\ref{thm:D1}, it suffices to verify that Assumption \ref{assu:uniform regularity} holds. Part (i) of Assumption \ref{assu:uniform regularity} follows from the asymptotic regime $T/N \to \kappa \in (0,\infty)$, while part (ii) follows from Assumption \ref{assu:smooth}.

Part (iv) follows from Lemma~\ref{lemma:c2} and 
\[
\left\Vert \operatorname{diag}\!\left(
\overline{\mathcal{H}}_{\alpha\alpha}^{*},
\overline{\mathcal{H}}_{\gamma\gamma}^{*}
\right)^{-1} \right\Vert_{\infty}
= \mathcal{O}_{p}^{u}(1).
\]

Following the same argument as in the proof of Theorem~C.1 in \citet{FVWeidner2016}, Assumptions \ref{assu:parameterspace}--\ref{assu:convexity}, together with a suitable choice of $q$, $\epsilon$, $r_{\beta}$, and $r_{\phi}$, imply that the conditions of Lemma~\ref{lemma.D2} are satisfied. Hence, parts (v) and (vi) follow from Lemma~\ref{lemma.D2}. 

Finally, combining the initial convergence rates in Lemma~\ref{lemma:c1} with the local strict concavity established in Lemma~\ref{lemma:c3}, Theorem~\ref{uniform consistency} applies, and thus part (iii) holds.

Then, by applying Theorem~\ref{thm:D1}, and given the definition of $\mathcal{L}^*(\beta,\phi)$, we obtain
\[
\sqrt{NT}(\hat{\beta}-\beta_{0})
= \overline{W}_{\infty}^{-1}
\left(
\partial_{\beta}\mathcal{L}^{*}
+ B^{(1)} + B^{(2)}
\right)
+ o_{p}^{u}(1),
\]
where
\begin{equation}
\label{eq:biasterms}
\begin{aligned}
B^{(1)} 
&= \left[\partial_{\beta \phi^{\prime}} \widetilde{\mathcal{L}}\right] \overline{\mathcal{H}}^{-1} \mathcal{S}-\left[\partial_{\beta \phi^{\prime}} \overline{\mathcal{L}}\right] \overline{\mathcal{H}}^{-1} \widetilde{\mathcal{H}} \overline{\mathcal{H}}^{-1} \mathcal{S},\\
B^{(2)} 
&= \frac{1}{2} \sum_{a=1}^{\operatorname{dim} \phi}\left(\partial_{\beta \phi^{\prime} \phi_g} \overline{\mathcal{L}}+\left[\partial_{\beta \phi^{\prime}} \overline{\mathcal{L}}\right] \overline{\mathcal{H}}^{-1}\left[\partial_{\phi \phi^{\prime} \phi_g} \overline{\mathcal{L}}\right]\right)\left[\overline{\mathcal{H}}^{-1} \mathcal{S}\right]_g \overline{\mathcal{H}}^{-1} \mathcal{S} .
\end{aligned}
\end{equation}
Where $\partial_{\beta}\mathcal{L}^{*}$ contribute to the asymptotic variance which equal to $U^{(0)}$ by construction. The left terms $B^{(1)}$, $B^{(2)}$ contribute to the asymptotic bias. And we need to verify they
converge uniformly in probability to $U^{(1)}$. We first show the uniform convergence of $B^{(1)}$.

 Hence, following the same argument in \citet{FVWeidner2016}, we obtain
\begin{align*}
& B^{(1)}=-\frac{1}{\sqrt{NT}}\sum_{i=1}^{N}\underbrace{\left(\sum_{t=1}^{T}\gamma_{t}^{\prime}\partial_{z}\ell_{it}\right)\left(\sum_{h=1}^{T}\gamma_{h}\mathbb{E}_{\theta}[\partial_{z^{2}}\ell_{ih}]\gamma_{h}^{\prime}\right)^{-1}\left(\sum_{\tau=1}^{T}\gamma_{\tau}\left(\partial_{z^{2}}\ell_{i\tau}\widetilde{X}_{i\tau}-\mathbb{E}_{\theta}\left(\partial_{z^{2}}\ell_{i\tau}\widetilde{X}_{i\tau}\right)\right)\right)}_{B_{i}^{(1,1)}}\\&-\frac{1}{\sqrt{NT}}\sum_{t=1}^{T}\underbrace{\left(\sum_{i=1}^{N}\alpha_{i}^{\prime}\partial_{z}\ell_{it}\right)\left(\sum_{h=1}^{N}\alpha_{h}\mathbb{E}_{\theta}[\partial_{z^{2}}\ell_{ht}]\alpha_{h}^{\prime}\right)^{-1}\left(\sum_{j=1}^{N}\alpha_{j}\left(\partial_{z^{2}}\ell_{jt}\widetilde{X}_{jt}-\mathbb{\mathbb{E}_{\theta}}\left(\partial_{z^{2}}\ell_{jt}\widetilde{X}_{jt}\right)\right)\right)}_{B_{i}^{(1,2)}}+o_{p}^{u}(1).
\end{align*}

To show the difference of $\frac{1}{\sqrt{NT}}\sum_{i=1}^N B_{i}^{(1,1)}$ and its expectation which is the first term in $U^{(1)}$ converge to zero in probability uniformly, we need to show the variance of $B_i$ is uniformly bounded over $\theta\in \Theta_0$ for all $i$ and given the fact that $B_i$ is independent, then uniform convergence of $\frac{1}{\sqrt{NT}}\sum_{i=1}^N B_i$ is implied by the WLLN and $N/T\to \kappa\in(0,\infty)$.

By Assumptions \ref{ass:strong factor} and \ref{assu:convexity}, and Lemma \ref{lemma:A1} we have
\[
\sup_{\theta\in\Theta_0}
\gamma_{t}^{\prime}
\!\left(
\sum_{h=1}^{T}
\gamma_{h}\,
\mathbb{E}_{\theta}[\partial_{z^{2}}\ell_{ih}]\,
\gamma_{h}^{\prime}
\right)^{\!-1}\!
\gamma_{t}
= O_p(T).
\]
uniformly for every $t,\tau$. Then we only need to check the boundedness of 
$$
\sup_{\theta \in \Theta_{0}} 
\mathbb{E}_{\theta}
\left(\frac{1}{T}
\sum_{t=1}^{T}\sum_{\tau=1}^{T}
\partial_{z}\ell_{it} \left(\partial_{z^{2}}\ell_{i\tau}\widetilde{X}_{i\tau}-\mathbb{E}_\theta\left(\partial_{z^{2}}\ell_{i\tau}\widetilde{X}_{i\tau}\right)\right)
\right)^{2}.$$
Denote $D_{z^{2}}\ell_{it}
= 
\partial_{z^{2}}\ell_{it}\widetilde{X}_{it}
- 
\mathbb{E}_\theta\left(\partial_{z^{2}}\ell_{it}\widetilde{X}_{it}\right)$, then
\[
\begin{aligned}
\sup_{\theta \in \Theta_{0}} 
\mathbb{E}_{\theta}
\left(\frac{1}{T}
\sum_{t=1}^{T}\sum_{\tau=1}^{T}
\partial_{z}\ell_{it} D_{z^{2}}\ell_{i\tau}
\right)^{2}
&=
\sup_{\theta \in \Theta_{0}}
\mathbb{E}_{\theta}
\left[
\left(\frac{1}{T^2}
\sum_{t_{1}=1}^{T}\sum_{\tau_{1}=1}^{T}
\sum_{t_{2}=1}^{T}\sum_{\tau_{2}=1}^{T}
\partial_{z}\ell_{it_{1}}
\partial_{z}\ell_{it_{2}}
D_{z^{2}}\ell_{i\tau_{1}}
D_{z^{2}}\ell_{i\tau_{2}}
\right)
\right].
\end{aligned}
\]

We decompose it as
\begin{align*}
&\sup_{\theta\in\Theta_{0}}
\Bigg\{
\mathbb{E}_{\theta}\!\left[
\frac{1}{T}\sum_{t_{1}=1}^{T}\sum_{t_{2}=1}^{T}
\partial_{z}\ell_{it_{1}}\partial_{z}\ell_{it_{2}}
\right]
\mathbb{E}_{\theta}\!\left[
\frac{1}{T}\sum_{\tau_{1}=1}^{T}\sum_{\tau_{2}=1}^{T}
D_{z^{2}}\ell_{i\tau_{1}}D_{z^{2}}\ell_{i\tau_{2}}
\right]  \\
&\quad+
\frac{1}{T^{2}}\sum_{t_{1}=1}^{T}\sum_{\tau_{1}=1}^{T}
\sum_{t_{2}=1}^{T}\sum_{\tau_{2}=1}^{T}
\mathrm{Cov}_{\theta}\!\left(
\partial_{z}\ell_{it_{1}}\partial_{z}\ell_{it_{2}},
D_{z^{2}}\ell_{i\tau_{1}}D_{z^{2}}\ell_{i\tau_{2}}
\right)
\Bigg\}.
\end{align*}

By Assumptions \ref{assu:correct} and \ref{assu:smooth} and Corollary~15.3 of \citet{Davidson2021}, we have
\[
\sup_{\theta\in\Theta_{0}}
\mathbb{E}_{\theta}\!\left[
\frac{1}{T}\sum_{t_{1}=1}^{T}\sum_{t_{2}=1}^{T}
\partial_{z}\ell_{it_{1}}\partial_{z}\ell_{it_{2}}
\right]
=
\sup_{\theta\in\Theta_{0}}
\frac{1}{T}\sum_{t_{1}=1}^{T}\sum_{t_{2}=1}^{T}
\mathrm{Cov}_{\theta}\!\left(\partial_{z}\ell_{it_{1}},\partial_{z}\ell_{it_{2}}\right)
=O_{p}(1),
\]
and similarly,
\[
\sup_{\theta\in\Theta_{0}}
\mathbb{E}_{\theta}\!\left[
\frac{1}{T}\sum_{\tau_{1}=1}^{T}\sum_{\tau_{2}=1}^{T}
D_{z^{2}}\ell_{i\tau_{1}}D_{z^{2}}\ell_{i\tau_{2}}
\right]
=O_{p}(1).
\]

For the covariance term, note that
\begin{align*}
&\sup_{\theta\in\Theta_{0}}
\frac{1}{T^{2}}
\sum_{t_{1},\tau_{1},t_{2},\tau_{2}=1}^{T}
\mathrm{Cov}_{\theta}\!\left(
\partial_{z}\ell_{it_{1}}\partial_{z}\ell_{it_{2}},
D_{z^{2}}\ell_{i\tau_{1}}D_{z^{2}}\ell_{i\tau_{2}}
\right)  \\
&\quad=
\frac{1}{T^{2}}
\sup_{\theta\in\Theta_{0}}
\mathrm{Cov}_{\theta}\!\left(
\sum_{t_{1},t_{2}=1}^{T}\partial_{z}\ell_{it_{1}}\partial_{z}\ell_{it_{2}},
\sum_{\tau_{1},\tau_{2}=1}^{T}D_{z^{2}}\ell_{i\tau_{1}}D_{z^{2}}\ell_{i\tau_{2}}
\right) \\
&\quad\le
\frac{1}{T^{2}}
\sup_{\theta\in\Theta_{0}}
\sqrt{
\mathrm{Var}_{\theta}\!\left(
\sum_{t_{1},t_{2}=1}^{T}\partial_{z}\ell_{it_{1}}\partial_{z}\ell_{it_{2}}
\right)
\mathrm{Var}_{\theta}\!\left(
\sum_{\tau_{1},\tau_{2}=1}^{T}D_{z^{2}}\ell_{i\tau_{1}}D_{z^{2}}\ell_{i\tau_{2}}
\right)
}.
\end{align*}

Since
\[
\mathrm{Var}_{\theta}\!\left(
\sum_{t_{1},t_{2}=1}^{T}\partial_{z}\ell_{it_{1}}\partial_{z}\ell_{it_{2}}
\right)
=
\mathbb{E}_{\theta}\!\left[
\left(\sum_{t=1}^{T}\partial_{z}\ell_{it}\right)^{4}
\right]
-
\left(
\mathbb{E}_{\theta}\!\left[
\left(\sum_{t=1}^{T}\partial_{z}\ell_{it}\right)^{2}
\right]
\right)^{2},
\]
and the second term is \(O(T^{2})\) by the previous result. Then it follows from a Rosenthal-type moment
inequality for $\alpha$-mixing sequences (Lemma S.2 in \citet{FVWeidner2016})  that
\[
\sup_{\theta\in\Theta_{0}}
\mathbb{E}_{\theta}\!\left[
\left(\sum_{t=1}^{T}\partial_{z}\ell_{it}\right)^{4}
\right]
=O(T^{2}).
\]
Hence,
\[
\sup_{\theta\in\Theta_{0}}
\mathbb{E}_{\theta}\left[
\frac{1}{T^{2}}
\sum_{t_{1},\tau_{1},t_{2},\tau_{2}=1}^{T}
\partial_{z}\ell_{it_{1}}\partial_{z}\ell_{it_{2}}
D_{z^{2}}\ell_{i\tau_{1}}D_{z^{2}}\ell_{i\tau_{2}}
\right]
=O(1).
\]
which is uniformly bounded for every $i$ by Assumption \ref{assu:correct} and \ref{assu:smooth}. Hence, we obtain
\[
\sup_{\theta \in \Theta_0} \mathrm{Var}(\frac{1}{N}\sum_i^N B_{i}^{(1,1)}) = O_p(1).
\]
Since $Y_i$ and $X_i$ are independent across $i$,  it follows from Chebyshev's inequality that, for any $\epsilon > 0$,
\[
\sup_{\theta \in \Theta_0} 
\mathbb{P}_\theta\!\left(
\left| 
\frac{1}{N} \sum_{i=1}^{N} (B_{i}^{(1,1)} - \mathbb{E}_\theta(B_{i}^{(1,1)})) 
\right| > \epsilon
\right) = o(1),
\]

And similarly, we have 
 $$\quad
\sup_{\theta \in \Theta_0} 
\mathbb{P}_\theta\!\left(
\left| 
\frac{1}{T} \sum_{i=1}^{T} (B_{i}^{(1,2)} - \mathbb{E}_\theta(B_{i}^{(1,2)})) 
\right| > \epsilon
\right) = o(1).$$
 Following a similar argument, the $B^{(2)}$ also converges uniformly in probability to its expectation, which is the last two terms in $U^{(1)}$. 

\end{proof}








The next theorem focuses on the APEs defined in (\ref{APE}), which is an extension of Theorem~4.2 in \citet{ChenFVWeidner2021}.

Define $\pi_{it} = \alpha_i^\prime \gamma_t$, and let $\partial_{\pi^q} \mu_{it} = \partial^q \mu_{it}( \beta, \pi_{it}) / \partial \pi_{it}^q$, since by Assumption~\ref{assu:APEs interactive}, the function $\mu(\cdot)$ depends on the unobserved effects only through the interaction term $\alpha_i^\prime \gamma_t$.

To simplify the expression, it is convenient to introduce some new notations. 
Let $\alpha_{i}^*$ and $\gamma_{t}^*$ are both $R$-dimensional vectors defined by
\[
(\alpha_{i}^\mu,\gamma_{t}^\mu) 
\in \argmin_{\alpha_{i},\,\gamma_{t}} 
\sum_{i,t} \mathbb{E}_\theta\!\left[-\partial_{z^2}\ell_{it}\right]
\left(
\frac{\mathbb{E}_\theta\!\left[\partial_{\pi}\mu_{it}\right]}{\mathbb{E}_\theta\!\left[\partial_{z^2}\ell_{it}\right]} 
- \alpha_{i}^\prime\gamma_{t0} - \alpha_{i0}^\prime\gamma_{t}
\right)^{\!2}.
\]
and $\Psi_{it} = \alpha_{i}^{\mu\prime}\gamma_{t0} + \alpha_{i0}^\prime\gamma_{t}^\mu$. Define $d_x\mu_{it}=\partial_{\beta}\mu_{it}-\partial_\pi\mu_{it}\varXi_{it}$, $D_{\pi^q}\mu_{it}=\partial_{\pi^q}\mu_{it}-\partial_{z^{q+1}}\ell_{it}\Psi_{it}$.

\begin{theorem}\label{Thm:C2}
Suppose Assumptions \ref{assu:parameterspace}-\ref{assu:convexity},\ref{assu:APEs interactive}-\ref{assu:APEs smooth} hold and for every $\theta\in\Theta_0$, $\overline{W}_{\infty}$ exist and is invertible. Then, for the APEs defined in (\ref{APE}), the following asymptotic expansion holds uniformly:
\[
\sqrt{NT}(\varDelta(\hat{\theta}) - \varDelta(\theta)) = U_{\varDelta}^{(0)} + U_{\varDelta}^{(1)} + o_{p}^{u}(1),
\]
where
\[
U_{\varDelta}^{(0)}=\frac{1}{\sqrt{NT}}\sum_{i,t=1}^{N,T}\left(\left[\frac{1}{NT}\sum_{i,t=1}^{N,T}\mathbb{E}_{\theta}\bigl(D_{\beta}\mu_{it}\bigr)\right]^{\prime}\overline{W}_{\infty}^{-1}\partial_{z}\ell_{it}\widetilde{X}_{it}-\Psi_{it}\partial_{z}\ell_{it}\right)
\]
and
\begin{align*}
U_{\varDelta}^{(1)}
=\,& \left[\frac{1}{NT} \sum_{i,t=1}^{N,T} \mathbb{E}_{\theta} \bigl(D_{\beta} \mu_{it}\bigr)\right]^{\prime} \overline{W}_{\infty}^{-1} U^{(1)} 
\\
&\quad -\frac{1}{\sqrt{NT}} \sum_{i=1}^{N} \sum_{t=1}^{T}
\Biggl\{
\mathbb{E}_{\theta}\!\left[
\gamma_{t}^{\!\prime}
\left(
\sum_{h=1}^{T} \gamma_{h}\gamma_{h}^{\!\prime} \mathbb{E}_{\theta}\!\bigl[\partial_{z^2} \ell_{ih}\bigr]
\right)^{-1}
\gamma_{t}
\left(
\partial_{z} \ell_{it}\, D_{\pi} \varDelta_{it} + \frac{1}{2} D_{\pi^2} \varDelta_{it}
\right)
\right]
\\
&\quad\qquad
+\mathbb{E}_{\theta}\!\left[\alpha_{i}^{\prime}\left(\sum_{h=1}^{N}\alpha_{h}\alpha_{h}^{\prime}\mathbb{E}_{\theta}\!\bigl[\partial_{z^2}\ell_{hj}\bigr]\right)^{-1}\alpha_{h}\left(\partial_{z}\ell_{it}\,D_{\pi}\varDelta_{it}+\frac{1}{2}D_{\pi^{2}}\varDelta_{it}\right)\right]\Biggr\}.
\end{align*}

\begin{proof}
 $ U^{(1)}, \overline{W}_\infty$ are defined in \ref{thm:C1:}.  See the proof of Theorem~2 in \citet{ChenFVWeidner2021}, with the modification that the bias term converges to its expectation uniformly. This can be established by applying a similar argument as in the proof of Theorem \ref{thm:C1:}.

\end{proof}
\end{theorem}

\section{Uniform results for general likelihood models with nuisance parameters}\label{Appendix C}

\subsection{Uniform version of regularity conditions}

This appendix presents a uniform version of Assumption B.1 in \citet{FVWeidner2016}, which we use to derive a uniform asymptotic expansion for the MLE. As in \citet{FVWeidner2016}, who do not employ the panel data structure of the model nor the particular form of the likelihood, we do not employ the network structure of the model nor the form of the likelihood. Therefore, our notation will be standard panel data notation, with $N$ and $T$ denoting the cross-sectional and time-series dimensions (accommodating panel and network models), and $\mathcal{L}(\beta,\phi)$ denoting a general objective function where $\dim\beta$ is fixed and $\dim\phi$ grows with $N$ or $T$, or both. The following assumption imposes high-level regularity conditions on $\mathcal{L}(\beta,\phi)$, its derivatives, maximizers, etc., which we denote as before, e.g., 
\( \mathcal{S}(\beta,\phi) = \partial_{\phi}\mathcal{L}(\beta,\phi) \), \( \mathcal{H}(\beta,\phi) = -\partial_{\phi \phi^{\prime}} \mathcal{L}(\beta, \phi) \), and so on. Recall that we also denote $\theta = (\beta, \phi)$. We use a bar to denote expectations conditional on $\phi$, e.g., $\partial_{\beta}\overline{\mathcal{L}} = \mathbb{E}_{\phi}[\partial_{\beta}\mathcal{L}]$, and a tilde to denote deviations from these expectations, e.g., $\partial_{\beta}\widetilde{\mathcal{L}} = \partial_{\beta}\mathcal{L} - \partial_{\beta}\overline{\mathcal{L}}$.

We adopt the same notation as in \citet{FVWeidner2016}. For a vector \(x\in\mathbb R^m\), define \(\|x\|_q=(\sum_{j=1}^m |x_j|^q)^{1/q}\). For a matrix \(A=(a_{ij})\in\mathbb R^{m\times n}\), let \(\|A\|_{q}
:= \sup_{x \in \mathbb{R}^{n},\, x \neq 0}
\frac{\|Ax\|_{q}}{\|x\|_{q}}.\), \(\|A\|=\|A\|_2\), \(\|A\|_{\max}=\max_{i,j}|a_{ij}|\), and \(\|A\|_\infty=\max_{1\le i\le m}\sum_{j=1}^n |a_{ij}|\). For higher-order derivatives, we use the analogous induced operator norm. For example,
\[
\|\partial_{\phi\phi\phi}\mathcal L(\beta,\phi)\|_q=\sup_{\|u\|_q,\|v\|_q,\|w\|_q\le 1}\bigl|\partial_{\phi\phi\phi}\mathcal L(\beta,\phi)[u,v,w]\bigr|,
\]
and similarly for \(\partial_{\beta\phi\phi}\mathcal L(\beta,\phi)\).
\[
\|\partial_{\beta\phi\phi}\mathcal L(\beta,\phi)\|_q
=
\sup_{\|u\|\le 1,\|v\|_q\le 1,\|w\|_q\le 1}
\bigl|\partial_{\beta\phi\phi}\mathcal L(\beta,\phi)[u,v,w]\bigr|.
\]
For further discussion of the relations among these norms and their use in asymptotic expansions with incidental parameters, see \citet{FVWeidner2016}. 

The proofs in this section largely overlap with the theory in \citet{FVWeidner2016}, but we include them here for convenience.

\begin{assumption}[Uniform version of regularity conditions]
\label{assu:uniform regularity} Fix some $q>4$ and $0\leq\epsilon<$$1/8-1/(2q)$.
Fix some $r_{\beta}=r_{\beta,N}>0$ and $r_{\phi}=r_{\phi,N}>0$, with $r_{\beta}=o((NT)^{-1/(2q)-\epsilon})$
and $r_{\phi}=o((NT)^{-\epsilon})$. For every $\theta=(\beta,\phi)\in\Theta_{0}$, the following conditions hold for the chosen values of $q,\epsilon,r_\beta$, and $r_\phi$. \newline
(i) $\frac{\operatorname{dim}\phi}{\sqrt{NT}}\rightarrow a$, where $0<a<\infty$. \newline
(ii) $\mathcal{L}(\widetilde{\beta},\widetilde{\phi})$
is four times continuously differentiable on $\mathcal{B}(r_{\beta},\beta)\times\mathcal{B}_{q}(r_{\phi},\phi)$ 
wpa1. \newline
(iii) $\sup_{\widetilde{\beta}\in\mathcal{B}(r_{\beta},\beta)}\Vert \widehat{\phi}(\widetilde{\beta})-\phi\Vert _{q}=o_{p}^{u}(r_{\phi})$. \newline
(iv) $\inf_{\theta\in\Theta_{0}}\overline{\mathcal{H}}>0$ and 
$\Vert \overline{\mathcal{H}}^{-1}\Vert _{q}=\mathcal{O}_{p}^{u}(1)$. \newline
(v) For the $q$-norm,
\begin{align*}
& \|\mathcal{S}\|_{q}=\mathcal{O}_{p}^{u}((NT)^{-1/4+1/(2q)}) , \quad \hfill\hfill\Vert \partial_{\beta}\mathcal{L}\Vert =\mathcal{O}_{p}^{u}(1), \quad
\|\widetilde{\mathcal{H}}\|_{q}=o_{p}^{u}(1), \\
& \hfill\hfill\Vert \partial_{\beta\phi^{\prime}}\mathcal{L}\Vert _{q}=\mathcal{O}_{p}^{u}((NT)^{1/(2q)}), \quad
\Vert \partial_{\beta\beta^{\prime}}\mathcal{L}\Vert =\mathcal{O}_{p}^{u}(\sqrt{NT}), \\
& \hfill\hfill\Vert \partial_{\beta\phi\phi}\mathcal{L}\Vert _{q}=\mathcal{O}_{p}^{u}((NT)^{\epsilon}), \quad
  \Vert \partial_{\phi\phi\phi}\mathcal{L}\Vert _{q} =\mathcal{O}_{p}^{u}((NT)^{\epsilon}),
\end{align*}
and
\begin{align*}
&\sup_{\widetilde{\beta}\in\mathcal{B}(r_{\beta},\beta)}\sup_{\widetilde{\phi}\in\mathcal{B}_{q}(r_{\phi},\phi)}\Vert \partial_{\beta\beta\beta}\mathcal{L}(\widetilde{\beta},\widetilde{\phi})\Vert =\mathcal{O}_{p}^{u}(\sqrt{NT}),\\
&\sup_{\widetilde{\beta}\in\mathcal{B}(r_{\beta},\beta)}\sup_{\widetilde{\phi}\in\mathcal{B}_{q}(r_{\phi},\phi)}\Vert \partial_{\beta\beta\phi}\mathcal{L}(\widetilde{\beta},\widetilde{\phi})\Vert _{q}=\mathcal{O}_{p}^{u}((NT)^{1/(2q)}),\\
&\sup_{\widetilde{\beta}\in\mathcal{B}(r_{\beta},\beta)}\sup_{\widetilde{\phi}\in\mathcal{B}_{q}(r_{\phi},\phi)}\Vert \partial_{\beta\beta\phi\phi}\mathcal{L}(\widetilde{\beta},\widetilde{\phi})\Vert _{q}=\mathcal{O}_{p}^{u}((NT)^{\epsilon}),\\
&\sup_{\widetilde{\beta}\in\mathcal{B}(r_{\beta},\beta)}\sup_{\widetilde{\phi}\in\mathcal{B}_{q}(r_{\phi},\phi)}\Vert \partial_{\beta\phi\phi\phi}\mathcal{L}(\widetilde{\beta},\widetilde{\phi})\Vert _{q}=\mathcal{O}_{p}^{u}((NT)^{\epsilon}),\\
&\sup_{\widetilde{\beta}\in\mathcal{B}(r_{\beta},\beta)}\sup_{\widetilde{\phi}\in\mathcal{B}_{q}(r_{\phi},\phi)}\Vert \partial_{\phi\phi\phi\phi}\mathcal{L}(\widetilde{\beta},\widetilde{\phi})\Vert _{q}=\mathcal{O}_{p}^{u}((NT)^{\epsilon}).
\end{align*}
(vi) For the spectral norm $\|\cdot\|$, 
\begin{align*}
&\|\widetilde{\mathcal{H}}\|  =o_{p}^{u}((NT)^{-1/8}), \quad
\Vert \partial_{\beta\beta^{\prime}}\widetilde{\mathcal{L}}\Vert   =o_{p}^{u}(\sqrt{NT}), \quad
\Vert \partial_{\beta\phi\phi}\widetilde{\mathcal{L}}\Vert   =o_{p}^{u}((NT)^{-1/8}),\\
&\Vert \partial_{\beta\phi}\widetilde{\mathcal{L}}\Vert   =\mathcal{O}_{p}^{u}(1), \quad \left\Vert \sum_{g,h=1}^{\dim\phi}\partial_{\phi\phi_{g}\phi_{h}}\widetilde{\mathcal{L}}\left(\overline{\mathcal{H}}^{-1}\mathcal{S}\right)_{g}
\left(\overline{\mathcal{H}}^{-1}\mathcal{S}\right)_{h}\right\Vert   =o_{p}^{u}((NT)^{-1/4}).
\end{align*}
\end{assumption}

\begin{lemma}[Uniform bounds]
\label{lem:D1} Suppose Assumption \ref{assu:uniform regularity}
holds. Then, for all $\theta\in\Theta_0$, we have \newline
(i) $\mathcal{H}(\widetilde{\beta},\widetilde{\phi})>0$ for all $\beta\in\mathcal{B}(r_{\beta},\beta)$
and $\phi\in\mathcal{B}_{q}(r_{\phi},\phi)$ wpa1 uniformly, and 
\[
\begin{aligned}
&\sup_{\widetilde{\beta}\in\mathcal{B}(r_{\beta},\beta)}\sup_{\widetilde{\phi}\in\mathcal{B}_{q}(r_{\phi},\phi)}\Vert\partial_{\beta\beta^{\prime}}\mathcal{L}(\widetilde{\beta},\widetilde{\phi})\Vert  =\mathcal{O}_{p}^{u}(\sqrt{NT}),\\
&\sup_{\widetilde{\beta}\in\mathcal{B}(r_{\beta},\beta)}\sup_{\widetilde{\phi}\in\mathcal{B}_{q}(r_{\phi},\phi)}\Vert\partial_{\beta\phi^{\prime}}\mathcal{L}(\widetilde{\beta},\widetilde{\phi})\Vert_{q}  =\mathcal{O}_{p}^{u}((NT)^{1/(2q)}),\\
&\sup_{\widetilde{\beta}\in\mathcal{B}(r_{\beta},\beta)}\sup_{\widetilde{\phi}\in\mathcal{B}_{q}(r_{\phi},\phi)}\Vert\partial_{\phi\phi\phi}\mathcal{L}(\widetilde{\beta},\widetilde{\phi})\Vert_{q}  =\mathcal{O}_{p}^{u}((NT)^{\epsilon}),\\
&\sup_{\widetilde{\beta}\in\mathcal{B}(r_{\beta},\beta)}\sup_{\widetilde{\phi}\in\mathcal{B}_{q}(r_{\phi},\phi)}\Vert\partial_{\beta\phi\phi}\mathcal{L}(\widetilde{\beta},\widetilde{\phi})\Vert_{q}  =\mathcal{O}_{p}^{u}((NT)^{\epsilon}),\\
&\sup_{\widetilde{\beta}\in\mathcal{B}(r_{\beta},\beta)}\sup_{\widetilde{\phi}\in\mathcal{B}_{q}(r_{\phi},\phi)}\Vert\mathcal{H}^{-1}(\widetilde{\beta},\widetilde{\phi})\Vert_{q}  =\mathcal{O}_{p}^{u}(1).
\end{aligned}
\]
(ii) Furthermore, 
\begin{align*}
&\|\mathcal{S}\| =\mathcal{O}_{p}^{u}(1),
\quad \Vert\mathcal{H}^{-1}\Vert  =\mathcal{O}_{p}^{u}(1),
\quad \Vert\overline{\mathcal{H}}^{-1}\Vert  =\mathcal{O}_{p}^{u}(1) \\
& \Vert\mathcal{H}^{-1}-\overline{\mathcal{H}}^{-1}\Vert  =o_{p}^{u}((NT)^{-1/8}),
\quad \Vert\mathcal{H}^{-1}-(\overline{\mathcal{H}}^{-1}-\overline{\mathcal{H}}^{-1}\widetilde{\mathcal{H}}\overline{\mathcal{H}}^{-1})\Vert  =o_{p}^{u}((NT)^{-1/4}),\\
&\Vert\partial_{\beta\phi^{\prime}}\mathcal{L}\Vert  =\mathcal{O}_{p}^{u}((NT)^{1/4}),\quad
\Vert\partial_{\beta\phi\phi}\mathcal{L}\Vert  =\mathcal{O}_{p}^{u}((NT)^{\epsilon}),\\
&\left\Vert \sum_{g}\partial_{\phi\phi^{\prime}\phi_{g}}\mathcal{L}\left(\mathcal{H}^{-1}\mathcal{S}\right)_{g}\right\Vert  =\mathcal{O}_{p}^{u}((NT)^{-1/4+1/(2q)+\epsilon}),\\
&\left\Vert \sum_{g}\partial_{\phi\phi^{\prime}\phi_{g}}\mathcal{L}\left(\overline{\mathcal{H}}^{-1}\mathcal{S}\right)_{g}\right\Vert  =\mathcal{O}_{p}^{u}((NT)^{-1/4+1/(2q)+\epsilon}).
\end{align*}
\end{lemma}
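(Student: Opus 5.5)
The plan is to follow the proof of Lemma~B.1 in \citet{FVWeidner2016} step by step, checking that each bound depends on $\theta$ only through quantities already controlled uniformly over $\Theta_0$ in Assumption~\ref{assu:uniform regularity}. The main tool is a set of elementary closure rules for $\mathcal{O}_p^u$ and $o_p^u$, which we record in Lemma~\ref{propeties of O}. Sums and products of uniformly bounded terms stay uniformly bounded. Moreover, if $\sup_{\theta\in\Theta_0}\mathbb{P}_\theta(A_N)\to1$ for events $A_N$, then any bound that holds deterministically on $A_N$ holds in the uniform sense. With these rules, every ``wpa1'' or ``$\mathcal{O}_p$'' step in the original proof becomes a uniform one, provided the inputs are uniform.

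For part (i), start from $\widetilde{\theta}=(\widetilde{\beta},\widetilde{\phi})$ in the ball $\mathcal{B}(r_\beta,\beta)\times\mathcal{B}_q(r_\phi,\phi)$. A first-order Taylor expansion around $(\beta,\phi)$ with integral remainder gives
\[
\mathcal{H}(\widetilde{\beta},\widetilde{\phi})-\overline{\mathcal{H}}
=-\widetilde{\mathcal{H}}-\partial_{\phi\phi\beta}\mathcal{L}(\check\theta)[\widetilde{\beta}-\beta]-\partial_{\phi\phi\phi}\mathcal{L}(\check\theta)[\widetilde{\phi}-\phi].
\]
We bound this in $\|\cdot\|_q$. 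The first term is $o_p^u(1)$ by (v). The remaining terms are bounded by $\mathcal{O}_p^u((NT)^{\epsilon})(r_\beta+r_\phi)=o_p^u(1)$, using the sup-bounds in (v) and the rate conditions $r_\phi=o((NT)^{-\epsilon})$ and $r_\beta=o((NT)^{-1/(2q)-\epsilon})$. The third-derivative terms evaluated at the intermediate point $\check\theta$ are not themselves listed in (v). We handle them by a second Taylor step from $(\beta,\phi)$, using the listed bounds on fourth derivatives and the bounds at the centre. Then a Neumann series argument with $\|\overline{\mathcal{H}}^{-1}\|_q=\mathcal{O}_p^u(1)$ from (iv) gives invertibility and positive definiteness of $\mathcal{H}(\widetilde{\beta},\widetilde{\phi})$, and also $\sup\|\mathcal{H}^{-1}(\widetilde{\beta},\widetilde{\phi})\|_q=\mathcal{O}_p^u(1)$, uniformly over $\Theta_0$ wpa1. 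The remaining sup-bounds on $\partial_{\beta\beta'}\mathcal{L}$, $\partial_{\beta\phi'}\mathcal{L}$, $\partial_{\phi\phi\phi}\mathcal{L}$ and $\partial_{\beta\phi\phi}\mathcal{L}$ follow from the same kind of one-step expansion. We add the centre bound from (v) and a remainder of order $\mathcal{O}_p^u((NT)^{\epsilon})\,r$ or $\mathcal{O}_p^u(\sqrt{NT})\,r_\beta$, and the rate conditions keep each remainder within the claimed order.

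For part (ii), pass from $q$-norms to spectral norms via $\dim\phi=\mathcal{O}(\sqrt{NT})$ from (i). Specifically, $\|v\|\le(\dim\phi)^{1/2-1/q}\|v\|_q$ gives $\|\mathcal{S}\|=\mathcal{O}_p^u(1)$ from $\|\mathcal{S}\|_q=\mathcal{O}_p^u((NT)^{-1/4+1/(2q)})$. The same comparison handles $\partial_{\beta\phi'}\mathcal{L}$ and $\partial_{\beta\phi\phi}\mathcal{L}$. For the inverse Hessian we use symmetry together with $\|A\|\le\sqrt{\|A\|_q\|A\|_{q/(q-1)}}$ (Riesz--Thorin). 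For the expansions of $\mathcal{H}^{-1}$ we use the algebraic identity
\[
\mathcal{H}^{-1}-\overline{\mathcal{H}}^{-1}+\overline{\mathcal{H}}^{-1}\widetilde{\mathcal{H}}\overline{\mathcal{H}}^{-1}
=\mathcal{H}^{-1}\widetilde{\mathcal{H}}\overline{\mathcal{H}}^{-1}\widetilde{\mathcal{H}}\overline{\mathcal{H}}^{-1}.
\]
Combined with $\|\widetilde{\mathcal{H}}\|=o_p^u((NT)^{-1/8})$ from (vi), this gives $o_p^u((NT)^{-1/4})$, and the first-order version gives $o_p^u((NT)^{-1/8})$. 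For the last two displays, bound $\|\sum_g\partial_{\phi\phi'\phi_g}\mathcal{L}\,v_g\|\le\|\partial_{\phi\phi\phi}\mathcal{L}\|_q\,\|v\|_q$ with $v=\mathcal{H}^{-1}\mathcal{S}$ or $v=\overline{\mathcal{H}}^{-1}\mathcal{S}$. Here $\|v\|_q=\mathcal{O}_p^u((NT)^{-1/4+1/(2q)})$ by (iv), (v) and part (i). This needs the inequality to hold for the symmetric tensor norm, which I would verify as in \citet{FVWeidner2016}.

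The main obstacle is the bookkeeping of uniformity in part (i), namely the step from ``wpa1 for each $\theta$'' to ``$\sup_{\theta}\mathbb{P}_\theta(\text{failure})\to0$''. The Neumann-series event has to be defined through the random quantities whose uniform stochastic orders are assumed. Concretely, I would take the event $\{\|\overline{\mathcal{H}}^{-1}\|_q\,\|\mathcal{H}(\widetilde\theta)-\overline{\mathcal{H}}\|_q<1/2 \text{ for all } \widetilde\theta \text{ in the ball}\}$. I would then show that its complement has probability bounded by a sum of terms of the form $\sup_\theta\mathbb{P}_\theta(|A_N(\theta)|>k)$, each of which is small by the uniform assumptions.
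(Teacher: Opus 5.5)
Your proposal is correct and is essentially the paper's argument: the paper's proof just defers to Lemma~S.1 of \citet{FVWeidner2016}, with every $\mathcal{O}_p$/wpa1 step made uniform over $\Theta_0$. Your steps (third-derivative sup-bounds via the fourth-derivative bounds, the Neumann-series bound on $\mathcal{H}^{-1}(\widetilde\beta,\widetilde\phi)$, the $q$-to-spectral norm comparisons, and the second-order resolvent identity) are exactly that proof, up to two small repairs: the uniform event rule must read $\inf_{\theta\in\Theta_0}\mathbb{P}_\theta(A_N)\to1$ rather than $\sup$, and $\|\partial_{\beta\phi\phi}\mathcal{L}\|=\mathcal{O}_p^u((NT)^{\epsilon})$ must come from the symmetry of each slice $\partial_{\beta_k\phi\phi'}\mathcal{L}$ (as you do for $\mathcal{H}^{-1}$), since the comparison factor $(\dim\phi)^{1/2-1/q}$ would lose $(NT)^{1/4-1/(2q)}$.
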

\begin{proof}
See the proof of Lemma~S.1 on page 7 of the supplementary material of \citet{FVWeidner2016}, with slight modifications to extend the argument to a uniform bound.

\end{proof}

\begin{theorem}[Uniform asymptotic expansion]
\label{thm:D1}
Suppose Assumption \ref{assu:uniform regularity} holds with $q,\epsilon,r_\beta$, and $r_\phi$  satisfying the conditions of Assumption \ref{assu:uniform regularity}. Then, for every
$\theta \in \Theta$ and $\widetilde{\beta} \in \mathcal{B}(r_{\beta}, \beta)$, it holds that
\begin{align*}
\hat{\phi}(\widetilde{\beta}) - \phi &= \mathcal{H}^{-1} \mathcal{S} + \mathcal{H}^{-1} [\partial_{\phi \beta^{\prime}} \mathcal{L}] (\widetilde{\beta} - \beta) + \frac{1}{2} \mathcal{H}^{-1} \sum_{g=1}^{\dim\phi} [\partial_{\phi \phi \phi_{g}} \mathcal{L}] \mathcal{H}^{-1} \mathcal{S} [\mathcal{H}^{-1} \mathcal{S}]_{g} + \mathbb{R}^{\phi}(\widetilde{\beta}), \\
\partial_{\beta} \mathcal{L} (\widetilde{\beta}, \hat{\phi}(\widetilde{\beta})) &= U - \overline{W} \sqrt{NT} (\widetilde{\beta} - \beta) + R(\widetilde{\beta}),
\end{align*}
where $U=U^{(0)}+U^{(1)}$ and
\begin{align*}
\overline{W} & := -\frac{1}{\sqrt{NT} } \left( \partial_{\beta \beta^{\prime}} \overline{\mathcal{L}} + \left[ \partial_{\beta \phi^{\prime}} \overline{\mathcal{L}} \right] \overline{\mathcal{H}}^{-1} \left[ \partial_{\phi \beta^{\prime}} \overline{\mathcal{L}} \right] \right)
, \\
U^{(0)} & := \partial_{\beta} \mathcal{L} + \left[ \partial_{\beta \phi^{\prime}} \overline{\mathcal{L}} \right] \overline{\mathcal{H}}^{-1} \mathcal{S}, \\
U^{(1)}&:=\left[\partial_{\beta \phi^{\prime}} \widetilde{\mathcal{L}}\right] \overline{\mathcal{H}}^{-1} \mathcal{S}-\left[\partial_{\beta \phi^{\prime}} \overline{\mathcal{L}}\right] \overline{\mathcal{H}}^{-1} \widetilde{\mathcal{H}} \overline{\mathcal{H}}^{-1} \mathcal{S}\\+&\frac{1}{2} \sum_{a=1}^{\operatorname{dim} \phi}\left(\partial_{\beta \phi^{\prime} \phi_g} \overline{\mathcal{L}}+\left[\partial_{\beta \phi^{\prime}} \overline{\mathcal{L}}\right] \overline{\mathcal{H}}^{-1}\left[\partial_{\phi \phi^{\prime} \phi_g} \overline{\mathcal{L}}\right]\right)\left[\overline{\mathcal{H}}^{-1} \mathcal{S}\right]_g \overline{\mathcal{H}}^{-1} \mathcal{S} .  \\
\sup_{\widetilde{\beta} \in \mathcal{B}(r_{\beta}, \beta)}\Vert R(\widetilde{\beta} ) \Vert & = o_{p}^{u}(1), \quad
\sup_{\widetilde{\beta} \in \mathcal{B}(r_{\beta}, \beta)}\Vert \mathbb{R}^{\phi}(\widetilde{\beta} ) \Vert_q = o_{p}^{u}(1).
\end{align*}
\end{theorem}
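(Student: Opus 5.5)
The plan is to follow the architecture of the proof of Theorem B.1 in \citet{FVWeidner2016}: a Taylor expansion of the first-order condition for $\hat\phi(\widetilde\beta)$, followed by substitution into $\partial_\beta\mathcal{L}$. Every bound is made uniform in $\theta\in\Theta_0$ through the $\mathcal{O}_p^u$, $o_p^u$ calculus of Lemma~\ref{propeties of O}. Throughout I write $\Delta:=\widetilde\beta-\beta$ and $\delta(\widetilde\beta):=\hat\phi(\widetilde\beta)-\phi$.

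\textbf{Step 1: localize $\hat\phi(\widetilde\beta)$.} By Assumption~\ref{assu:uniform regularity}(iii), $\sup_{\widetilde\beta}\|\delta(\widetilde\beta)\|_q=o_p^u(r_\phi)$. Hence, wpa1 uniformly, $\hat\phi(\widetilde\beta)$ lies in $\mathcal{B}_q(r_\phi,\phi)$. On that set Lemma~\ref{lem:D1}(i) gives $\mathcal{H}>0$ and uniform bounds on the third and fourth derivatives, so $\hat\phi(\widetilde\beta)$ is characterized by $\partial_\phi\mathcal{L}(\widetilde\beta,\hat\phi(\widetilde\beta))=0$.

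\textbf{Step 2: expand $\hat\phi(\widetilde\beta)$.} I expand the first-order condition to third order around $(\beta,\phi)$, with the Lagrange remainder evaluated at intermediate points inside $\mathcal{B}(r_\beta,\beta)\times\mathcal{B}_q(r_\phi,\phi)$. Solving for $\delta$ gives first $\|\delta\|_q=\mathcal{O}_p^u(\|\mathcal{S}\|_q+(NT)^{1/(2q)}\|\Delta\|)$, using $\|\mathcal{H}^{-1}\|_q=\mathcal{O}_p^u(1)$ and the bounds on $\partial_{\phi\beta'}\mathcal{L}$. Substituting this rate back into the quadratic terms yields the displayed expansion. The remainder $\mathbb{R}^\phi(\widetilde\beta)$ collects the following pieces:
\begin{itemize}
\item cubic terms in $\delta$, of order $(NT)^{\epsilon}\|\delta\|_q^3$;
\item cross terms $\partial_{\phi\phi\beta}\mathcal{L}\,\delta\,\Delta$;
\item terms $\partial_{\phi\beta\beta}\mathcal{L}\,\Delta^2$.
\end{itemize}
With $r_\beta=o((NT)^{-1/(2q)-\epsilon})$ and $\|\mathcal{S}\|_q=\mathcal{O}_p^u((NT)^{-1/4+1/(2q)})$, each piece is $o_p^u(1)$ in $\|\cdot\|_q$ uniformly over the ball. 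This establishes the first display of the theorem.

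\textbf{Step 3: expand $\partial_\beta\mathcal{L}(\widetilde\beta,\hat\phi(\widetilde\beta))$.} I expand $\partial_\beta\mathcal{L}$ around $(\beta,\phi)$ to second order in $\delta$ and in $\Delta$, and insert the expansion of $\delta$ from Step 2. Several terms then combine:
\begin{itemize}
\item $\partial_\beta\mathcal{L}+[\partial_{\beta\phi'}\mathcal{L}]\mathcal{H}^{-1}\mathcal{S}$ together with the quadratic $\mathcal{S}$-terms. Replacing $\mathcal{H}^{-1}$ by $\overline{\mathcal{H}}^{-1}-\overline{\mathcal{H}}^{-1}\widetilde{\mathcal{H}}\overline{\mathcal{H}}^{-1}$ costs $o_p^u((NT)^{-1/4})$ by Lemma~\ref{lem:D1}(ii). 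Splitting $\partial\mathcal{L}=\partial\overline{\mathcal{L}}+\partial\widetilde{\mathcal{L}}$ and applying the spectral-norm bounds of Assumption~\ref{assu:uniform regularity}(vi) then gives exactly $U^{(0)}+U^{(1)}+o_p^u(1)$.
\item The linear-in-$\Delta$ terms, $(\partial_{\beta\beta'}\mathcal{L}+[\partial_{\beta\phi'}\mathcal{L}]\mathcal{H}^{-1}[\partial_{\phi\beta'}\mathcal{L}])\Delta$, reproduce $-\overline{W}\sqrt{NT}\Delta$. The error here is $\{o_p^u(\sqrt{NT})+\mathcal{O}_p^u((NT)^{1/4})\,o_p^u((NT)^{1/4-1/8})\}\|\Delta\|$.
\end{itemize}

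\textbf{Step 4 (main obstacle): the unweighted sup.} The statement asks for $\sup_{\widetilde\beta}\|R(\widetilde\beta)\|=o_p^u(1)$ without the weight $1+\sqrt{NT}\|\Delta\|$. The terms that are merely $o_p^u(\sqrt{NT})\|\Delta\|$ or $\mathcal{O}_p^u(\sqrt{NT})\|\Delta\|^2$ (the latter from $\partial_{\beta\beta\beta}\mathcal{L}$) are not obviously $o(1)$ over the whole ball $\mathcal{B}(r_\beta,\beta)$. So this is the step I expect to be hardest. I would attack it by \emph{sharpening the linear terms rather than weighting them}:
\begin{itemize}
\item I would show $\|\partial_{\beta\beta'}\widetilde{\mathcal{L}}\|$ and $\|[\partial_{\beta\phi'}\mathcal{L}]\mathcal{H}^{-1}[\partial_{\phi\beta'}\mathcal{L}]-[\partial_{\beta\phi'}\overline{\mathcal{L}}]\overline{\mathcal{H}}^{-1}[\partial_{\phi\beta'}\overline{\mathcal{L}}]\|$ are $o_p^u(r_\beta^{-1})$. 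For this I would exploit the concrete form of these quantities (averages of mixing, independent-across-$i$ terms with uniformly bounded $8+\nu$ moments), which gives $\mathcal{O}_p^u(1)$ fluctuations, much better than $o(\sqrt{NT})$.
\item I would treat the $\Delta^2$ terms by centring $\partial_{\beta\beta\beta}\mathcal{L}$ at its mean and using continuity of $\overline{W}$ in $\widetilde\beta$. This amounts to absorbing them into $\overline{W}$ evaluated along the segment, which is controlled by $\|\Delta\|\le r_\beta\to0$.
\end{itemize}
If this sharpening fails, my fallback is to note that the only use of $R$ later is at $\widetilde\beta=\hat\beta$. There I would first use the weighted bound to get $\sqrt{NT}\|\hat\beta-\beta\|=\mathcal{O}_p^u(1)$, and then rerun Steps 2--3 on the shrinking ball of radius $M/\sqrt{NT}$. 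On that ball all weighted terms become unweighted $o_p^u(1)$. This gives the stated unweighted conclusion on the relevant region, though not literally over the whole ball $\mathcal{B}(r_\beta,\beta)$.
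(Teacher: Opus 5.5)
Your Steps 1--3 are the paper's route. Its proof is a one-line appeal to the proof of Theorem B.1 in \citet{FVWeidner2016}, made uniform in $\theta$. As you noticed, that argument delivers only weighted remainders:
$\sup_{\widetilde\beta\in\mathcal{B}(r_\beta,\beta)}\|R(\widetilde\beta)\|/(1+\sqrt{NT}\|\widetilde\beta-\beta\|)=o_p^u(1)$ and $\sup_{\widetilde\beta\in\mathcal{B}(r_\beta,\beta)}(NT)^{1/2-1/(2q)}\|\mathbb{R}^{\phi}(\widetilde\beta)\|_q/(1+\sqrt{NT}\|\widetilde\beta-\beta\|)=o_p^u(1)$.
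The second bound already implies the unweighted claim for $\mathbb{R}^{\phi}$, because $(NT)^{1/(2q)}r_\beta=o((NT)^{-\epsilon})=o(1)$. No extra work is needed there.

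The gap is Step 4 for $R$. Your primary plan cannot close it, because the unweighted claim is false in general.
\begin{itemize}
\item By definition, $R(\widetilde\beta)$ contains the deterministic second-order term of the profiled score: $\tfrac12\partial_{\beta\beta\beta}\overline{\mathcal{L}}[\Delta,\Delta]$ (your $\Delta=\widetilde\beta-\beta$) plus companion terms from the mixed third derivatives of $\overline{\mathcal{L}}$.
\item This term is generically of exact order $\sqrt{NT}\|\Delta\|^2$. You can see this already in a smooth logit without incidental parameters, whenever $\frac{1}{NT}\sum_{i,t}\mathbb{E}_\theta[\partial_{\beta^3}\ell_{it}]$ has a nonzero limit.
\item Assumption~\ref{assu:uniform regularity} allows $r_\beta\gg(NT)^{-1/4}$. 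The paper's own choice $r_\beta=\log(NT)(NT)^{-1/8}$ in Theorem~\ref{thm:C1:} gives $\sqrt{NT}\,r_\beta^2=(NT)^{1/4}\log^2(NT)\to\infty$. Hence $\sup_{\mathcal{B}(r_\beta,\beta)}\|R\|$ diverges.
\end{itemize}
Absorbing the term into $\overline{W}$ ``along the segment'' changes the statement, since $\overline{W}$ is pinned at the true $\beta$. Continuity of $\overline{W}$ only bounds the discrepancy by $O(\sqrt{NT}\|\Delta\|^2)$ again. Using the concrete model to sharpen $\partial_{\beta\beta'}\widetilde{\mathcal{L}}$ would fix the linear term, but not this one.

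So drop the sharpening and state the weighted bound. That is all that is ever used: in the proof of Theorem~\ref{uniform consistency}(ii) the paper itself writes $|R(\beta+\eta)|=o_p^u(1)+o_p^u(\sqrt{NT}\eta)$, with $\sqrt{NT}\eta=\mathcal{O}_p^u(1)$. Your fallback is the right way to use it, and it does not require rerunning Steps 2--3 on a smaller ball. Once $\hat\beta\in\mathcal{B}(r_\beta,\beta)$ wpa1 uniformly and $\overline{W}$ is uniformly positive definite, plugging $\widetilde\beta=\hat\beta$ into the weighted bound gives $\sqrt{NT}\|\hat\beta-\beta\|=\mathcal{O}_p^u(1)$, and then the expansion of $\hat\beta$ follows directly.
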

\begin{proof}
This theorem is a uniform version of Theorem B.1 in \citet{FVWeidner2016}. Given Assumption
 \ref{assu:uniform regularity}, the uniform asymptotic expansion holds by the same arguments as in their proof.
\end{proof}

The next theorem extends Theorem~B.3 in \citet{FVWeidner2016} to the uniform case. This extension is useful because it shows that, if the objective function is strictly concave, no additional proof is required to establish uniform consistency, as is the case for logit models with additive fixed effects. 

\begin{theorem}[Uniform consistency]
\label{uniform consistency}
\label{Thm C.1} Suppose Assumptions \ref{assu:uniform regularity}(i)--(ii) and \ref{assu:uniform regularity}(iv)--(vi) hold. Suppose $\mathcal{L}(\beta,\phi)$ is strictly concave over $(\beta,\phi)$ with probability $1$. Let $q>4$ and let $r_\phi$ and 
$r_\beta$ satisfy $N^{-1/2+1/q}=o_{p}(r_{\phi})$
and $N^{1/q}r_{\beta}=o_{p}(r_{\phi})$. Then, (i) 
\[
\sup_{\widetilde{\beta}\in\mathcal{B}(r_{\beta},\beta)}\Vert \hat{\phi}(\widetilde{\beta})-\phi\Vert _{q}=o_{p}^{u}(r_{\phi})
\]
and (ii) if $\sup_{\theta\in\Theta_0}\|U(\theta)\|=\mathcal{O}_P(1)$ and the limit $\overline{W}_\infty(\theta):=\lim_{N,T\to\infty}\overline{W}_{N,T}(\theta)$ exists with  $\inf_{\theta\in\Theta_0}\lambda_{\min}(\overline{W}_\infty(\theta))>0$, then
\[
\Vert \hat{\beta}-\beta\Vert =\mathcal{O}_{p}^{u}((NT)^{-1/4}).
\]

\begin{proof}
The proof of part (i) is identical to the proof of Theorem B.3 in \citet{FVWeidner2016} except that the modification accounts for uniform convergence.\footnote{Note: Part (iii) of Assumption \ref{assu:uniform regularity} is not required for the expansion and order of every term in the proof of part $(i)$. Other regularity conditions can be directly checked by Assumptions \ref{assu:parameterspace}-\ref{assu:convexity}.}

Part $(ii)$: Since part (i) already show that 
$\sup_{\widetilde{\beta} \in \mathcal{B}(r_{\beta}, \beta)} \Vert \widehat{\phi}(\widetilde{\beta}) - \phi \Vert_q = o_p^u(\eta)$ which is the $(iii)$ of the regularity conditions and hence the uniform asymptotic expansion in Theorem \ref{thm:D1} holds. For simplicity, we first consider the case $\dim \beta = 1$; the case $\dim \beta > 1$ follows by an analogous argument in the proof of Theorem B.3 in \citet{FVWeidner2016}.

Let \( \eta = 2{NT}^{-\frac{1}{2}} \overline{W}^{-1} | U| \), and it can be shown that 
\[
\partial_{\beta}\mathcal{L}(\beta+\eta,\widehat{\phi}(\beta+\eta))=U-\overline{W}\sqrt{NT}\eta+R(\beta+\eta).
\]
by Theorem \ref{thm:D1}.

 Given the definition of $\eta$, we have $U-\overline{W}\sqrt{NT}\eta=U-2| U|\leq-| U|$
and $| R(\beta_{0}+\eta)| =o_{p}^{u}(1)+o_{p}^{u}(\sqrt{NT}\eta)$.
Since $|\eta|=\mathcal{O}_{p}^{u}({NT}^{-\frac{1}{2}})$,
\[
\partial_{\beta}\mathcal{L}(\beta+\eta,\widehat{\phi}(\beta_{0}+\eta))\leq o_{p}^{u}(1)+(o_{p}^{u}(1)-1)| U |
\]
we have
\[
\inf_{\theta\in\Theta}\mathbb{P}_{\theta}(\partial_{\beta}\mathcal{L}(\beta+\eta,\widehat{\phi}\left(\beta+\eta\right))\leq 0)\rightarrow1
\]
Similarly, we have
\[
\partial_{\beta}\mathcal{L}\left(\beta-\eta,\widehat{\phi}(\beta-\eta\right))\geq o_{p}^{u}(1)+(o_{p}^{u}(1)+1)| U |
\]
And
\[
\inf_{\theta_\in\Theta_0}\mathbb{P}_{\theta}\left(\partial_{\beta}\mathcal{L}\left(\beta-\eta,\widehat{\phi}\left(\beta-\eta\right)\right)\geq0\right)\rightarrow 1 
\]
\textcolor{black}{holds for sufficient large }$N,T$.

\textcolor{black}{Define} $E_{1}=\left\{ \partial_{\beta}\mathcal{L}\left(\beta+\eta,\widehat{\phi}\left(\beta+\eta\right)\right)\leq0\right\} $
and $E_{2}=\left\{ \partial_{\beta}\mathcal{L}\left(\beta-\eta,\widehat{\phi}\left(\beta-\eta\right)\right)\geq0\right\}$, then we have
\begin{align*}
 \inf_{\theta\in\Theta_0}\mathbb{P}_{\theta}\left(E_{1}\cap E_{2}\right) 
 & \le \inf_{\theta\in\Theta_0}\left(1-\mathbb{P}_{\theta}(E_{1}^{c})-\mathbb{P}_{\theta}(E_{2}^{c})\right)\\
& = 1-\sup_{\theta\in\Theta_0}\mathbb{P}_{\theta}(E_{1}^{c})-\sup_{\theta\in\Theta_0}\mathbb{P}_{\theta}(E_{2}^{c})\\
& \rightarrow  1
\end{align*}
Since $\mathcal{L}$ is strictly concave and differentiable, it means
\begin{align*}
 \inf_{\theta\in\Theta_0}\mathbb{P}_{\theta}\left(\beta-\eta\leq\widehat{\beta}\leq\beta_{0}+\eta\right)
& = \inf_{\theta\in\Theta_0}\mathbb{P}_{\theta}\left(\left|\widehat{\beta}-\beta\right|\leq\eta\right) \rightarrow 1.
\end{align*}
\end{proof}
\end{theorem}

\subsection{Uniform results for APEs}
The next assumption is from assumption $B.2$ in \citet{FVWeidner2016} but extends to the uniform case.
\begin{assumption}[Regularity conditions for asymptotic expansion of $\widehat{\varDelta}$]\label{assu:B2}
Let $q, \epsilon, r_\beta$, and $r_\phi$ be defined as in Assumption~\ref{assu:uniform regularity}. We assume that:

\begin{enumerate}
    \item $(\widetilde{\beta}, \widetilde{\phi}) \mapsto \Delta(\widetilde{\beta}, \widetilde{\phi})$ is three times continuously differentiable in 
    $\mathcal{B}\left(r_\beta, \beta\right) \times \mathcal{B}_q\left(r_\phi, \phi\right)$, wpa1.

    \item The partial derivatives satisfy
    \[
    \|\partial_\beta \mu\| = \mathcal{O}_p^u(1), \quad
    \|\partial_\phi \mu\|_q = \mathcal{O}_p^u\bigl((NT)^{1/(2q)-1/2}\bigr), \quad
    \|\partial_{\phi \phi} \mu\|_q = \mathcal{O}_p^u\bigl((NT)^{\epsilon-1/2}\bigr),
    \]
    and
    \[
    \begin{aligned}
    \sup_{\widetilde{\beta} \in \mathcal{B}(r_\beta, \beta)} \sup_{\widetilde{\phi} \in \mathcal{B}_q(r_\phi, \phi)} \|\partial_{\beta \beta} \mu(\widetilde{\beta}, \widetilde{\phi})\| & = \mathcal{O}_p^u(1), \\
    \sup_{\widetilde{\beta} \in \mathcal{B}(r_\beta, \beta)} \sup_{\widetilde{\phi} \in \mathcal{B}_q(r_\phi, \phi)} \|\partial_{\beta \phi'} \mu(\widetilde{\beta}, \widetilde{\phi})\|_q & = \mathcal{O}_p^u\bigl((NT)^{1/(2q)-1/2}\bigr), \\
    \sup_{\widetilde{\beta} \in \mathcal{B}(r_\beta, \beta)} \sup_{\widetilde{\phi} \in \mathcal{B}_q(r_\phi, \phi)} \|\partial_{\phi \phi \phi} \mu(\widetilde{\beta}, \widetilde{\phi})\|_q & = \mathcal{O}_p^u\bigl((NT)^{\epsilon-1/2}\bigr).
    \end{aligned}
    \]

    \item The partial derivatives of $\widetilde{\mu}$ satisfy
    \[
    \|\partial_\beta \widetilde{\mu}\| = o_P^u(1), \quad
    \|\partial_\phi \widetilde{\mu}\| = \mathcal{O}_p^u\bigl((NT)^{-1/2}\bigr), \quad
    \|\partial_{\phi \phi} \widetilde{\mu}\| = o_P^u\bigl((NT)^{-5/8}\bigr).
    \]
\end{enumerate}
\end{assumption}

The next theorem is a uniform extension of Theorem~B.4 in \citet{FVWeidner2016}.

\begin{theorem}\label{thm.B3}
Suppose Assumption \ref{assu:uniform regularity} and \ref{assu:B2} holds and \(\left\Vert \hat{\beta}-\beta\right\Vert =\mathcal{O}_{p}^{u}\left((NT)^{-\frac{1}{2}}\right)\). Then,
$$
\sqrt{NT}(\varDelta(\hat{\theta}) - \varDelta(\theta))=\left(\partial_{\beta}\overline{\mu}+\left[\partial_{\phi}\overline{\mu}\right]^{\prime}\mathcal{\overline{H}}^{-1}\partial_{\phi\beta^{\prime}}\overline{\mathcal{L}}\right)(\hat{\beta}-\beta)
  +U_{\mu}^{(0)}+U_{\mu}^{(1)}+o_{p}^{u}((NT)^{-\frac{1}{2}}),
$$
where
\begin{align*}
U_{\mu}^{(0)} =  &\left[\partial_{\phi}\overline{\mu}\right]^{\prime}\overline{\mathcal{H}}^{-1}\mathcal{S},\\
U_{\mu}^{(1)} = &\left[\partial_{\phi}\widetilde{\mu}\right]^{\prime}\overline{\mathcal{H}}^{-1}\mathcal{S}-\left[\partial_{\phi}\overline{\mu}\right]^{\prime}\overline{\mathcal{H}}^{-1}\widetilde{\mathcal{H}}\overline{\mathcal{H}}^{-1}\mathcal{S}\\
 & +\frac{1}{2}\mathcal{S}^{\prime}\mathcal{\overline{H}}^{-1}\left[\partial_{\phi\phi^{\prime}}\overline{\mu}+\sum_{g=1}^{\dim\phi}\partial_{\phi\phi\phi_{g}}\overline{\mathcal{L}}\left[\mathcal{H}^{-1}\partial_{\phi}\overline{\mu}\right]{}_{g}\right]\overline{\mathcal{H}}^{-1}\mathcal{S}.
\end{align*}

\begin{proof}
See the proof of Theorem~B.4 in \citet{FVWeidner2016}, with the slight modification that the bound for each term holds uniformly over every $\theta \in \Theta_0$.

\end{proof}
\end{theorem}

\subsection{Sufficient conditions for regularity conditions }
The following lemma is a uniform version of Lemma S.7 in  \citet{FVWeidner2016}. It provides sufficient conditions for parts $(v)$--$(vi)$ of Assumption \ref{assu:uniform regularity} to hold in the panel data context.
\begin{lemma}
\label{lemma.D2}
Let $\mathcal{L}(\theta)=\mathcal{L}(\beta,\phi)=\mathcal{L}(\beta,\alpha,\gamma)=\frac{1}{\sqrt{NT}}[\sum_{i=1}^N\sum_{t=1}^T\ell_{it}(\beta,\pi_{it})-\frac{b}{2}(v^{\prime}\phi)^{2}]$,
where $\pi_{it}=\alpha_{i}+\gamma_{t}$, $\alpha=(\alpha_{1},\ldots,\alpha_{N})^{\prime}$,
$\gamma=(\gamma_{1},\ldots,\gamma_{T})$, and $v=(1_{N}^{\prime},1_{T}^{\prime})^{\prime}$. Assume
that $\ell_{it}(.,.)$ is four times continuously differentiable
on $\Theta_{0}$.
Consider limits as $N,T\rightarrow\infty$ with $N/T\rightarrow\kappa>0$.
Let $4<q\leq8$ and $0\leq\epsilon<1/8-1/(2q)$. Let $r_{\beta}=r_{\beta,NT}>0$ and $r_{\phi}=r_{\phi,NT}>0$,
with $r_{\beta}=o((NT)^{-1/(2q)-\epsilon})$ and $r_{\phi}=o((NT)^{-\epsilon})$.
Assume the following conditions hold.
\begin{enumerate}
\item[(i)] For $k,l,m\in\{1,2,\ldots,\operatorname{dim}\beta\}$,
\begin{align*}
&\frac{1}{\sqrt{NT}}\sum_{i=1}^N\sum_{t=1}^T\partial_{\beta_{k}}\ell_{it}  =\mathcal{O}_{p}^{u}(1),\\
&\frac{1}{NT}\sum_{i=1}^N\sum_{t=1}^T\partial_{\beta_{k}\beta_{l}}\ell_{it}  =\mathcal{O}_{p}^{u}(1),\\
&\frac{1}{NT}\sum_{i=1}^N\sum_{t=1}^T\left\{ \partial_{\beta_{k}\beta_{l}}\ell_{it}-\mathbb{E}_{\phi}\left[\partial_{\beta_{k}\beta_{l}}\ell_{it}\right]\right\}   =o_{p}^{u}(1),\\
&\sup_{\widetilde{\beta}\in\mathcal{B}(r_{\beta},\beta)}\sup_{\widetilde{\phi}\in\mathcal{B}_{q}(r_{\phi},\phi)}\frac{1}{NT}\sum_{i=1}^N\sum_{t=1}^T\partial_{\beta_{k}\beta_{l}\beta_{m}}\ell_{it}\left(\widetilde{\beta},\widetilde{\pi}_{it}\right) =\mathcal{O}_{p}^{u}(1).
\end{align*}
 \item[(ii)] Let $k,l\in\{1,2,\ldots,\operatorname{dim}\beta\}$. For $\xi_{it}(\widetilde{\beta},\widetilde{\phi})=\partial_{\beta_{k}\pi}\ell_{it}\left(\widetilde{\beta},\widetilde{\pi}_{it}\right)$
or $\xi_{it}(\widetilde{\beta},\widetilde{\phi})=\partial_{\beta_{k}\beta_{l}\pi}\ell_{it}\left(\widetilde{\beta},\widetilde{\pi}_{it}\right)$,
\[
\begin{aligned}\sup_{\widetilde{\beta}\in\mathcal{B}(r_{\beta},\beta)}\sup_{\widetilde{\phi}\in\mathcal{B}_{q}(r_{\phi},\phi)}\frac{1}{T}\sum_{t=1}^T\left|\frac{1}{N}\sum_{i}\xi_{it}(\widetilde{\beta},\widetilde{\phi})\right|^{q} & =\mathcal{O}_{p}^{u}(1),\\
\sup_{\widetilde{\beta}\in\mathcal{B}(r_{\beta},\beta)}\sup_{\widetilde{\phi}\in\mathcal{B}_{q}(r_{\phi},\phi)}\frac{1}{N}\sum_{i=1}^N\left|\frac{1}{T}\sum_{t}\xi_{it}(\widetilde{\beta},\widetilde{\phi})\right|^{q} & =\mathcal{O}_{p}^{u}(1).
\end{aligned}
\]
\item[(iii)] Let $k,l\in\{1,2,\ldots,\operatorname{dim}\beta\}$. For $\xi_{it}(\beta,\phi)=\partial_{\pi^{r}}\ell_{it}\left(\widetilde{\beta},\widetilde{\pi}_{it}\right)$,
with $r\in\{3,4\}$, or $\xi_{it}(\widetilde{\beta},\widetilde{\phi})=$ $\partial_{\beta_{k}\pi^{r}}\ell_{it}\left(\widetilde{\beta},\widetilde{\pi}_{it}\right)$,
with $r\in\{2,3\}$, or $\xi_{it}(\beta,\phi)=\partial_{\beta_{k}\beta_{l}\pi^{2}}\ell_{it}\left(\beta,\pi_{it}\right)$,
\[
\begin{aligned}\sup_{\widetilde{\beta}\in\mathcal{B}(r_{\beta},\beta)}\sup_{\widetilde{\phi}\in\mathcal{B}_{q}(r_{\phi},\phi)}\max_{i}\frac{1}{T}\sum_{t=1}^T\left|\xi_{it}(\widetilde{\beta},\widetilde{\phi})\right| & =\mathcal{O}_{p}^{u}(N^{2\epsilon}),\\
\sup_{\widetilde{\beta}\in\mathcal{B}(r_{\beta},\beta)}\sup_{\widetilde{\phi}\in\mathcal{B}_{q}(r_{\phi},\phi)}\max_{t}\frac{1}{N}\sum_{i=1}^N\left|\xi_{it}(\widetilde{\beta},\widetilde{\phi})\right| & =\mathcal{O}_{p}^{u}(N^{2\epsilon}).
\end{aligned}
\]
 \item[(iv)] Moreover, 
\[
\begin{aligned} & \frac{1}{T}\sum_{t=1}^{T}\left|\frac{1}{\sqrt{N}}\sum_{i=1}^N\partial_{\pi}\ell_{it}\right|^{q}=\mathcal{O}_{p}^{u}(1), \\
 &\frac{1}{N}\sum_{i=1}^N\left|\frac{1}{\sqrt{T}}\sum_{t=1}^{T}\partial_{\pi}\ell_{it}\right|^{q}=\mathcal{O}_{p}^{u}(1),\\
 & \frac{1}{T}\sum_{t=1}^{T}\left|\frac{1}{\sqrt{N}}\sum_{i=1}^N\partial_{\beta_{k}\pi}\ell_{it}-\mathbb{E}_{\theta}\left[\partial_{\beta_{k}\pi}\ell_{it}\right]\right|^{2}=\mathcal{O}_{p}^{u}(1),\\
 & \frac{1}{N}\sum_{i=1}^N\left|\frac{1}{\sqrt{T}}\sum_{t=1}^{T}\partial_{\beta_{k}\pi}\ell_{it}-\mathbb{E}_{\theta}\left[\partial_{\beta_{k}\pi}\ell_{it}\right]\right|^{2}=\mathcal{O}_{p}^{u}(1).
\end{aligned}
\]
\item[(v)] The sequences $(\ell_{i1},\ldots,\ell_{iT})$ are conditionally independent across $i$, given $\phi$. 
\item[(vi)] Let $k\in\{1,2,\ldots,\operatorname{dim}\beta\}$.
For $\xi_{it}=\partial_{\pi^{r}}\ell_{it}-\mathbb{E}_{\phi}(\partial_{\pi^{r}}\ell_{it})$,
with $r\in\{2,3\}$, or $\xi_{it}=$ $\partial_{\beta_{k}\pi^{2}}\ell_{it}-\mathbb{E}_{\theta_{0}}(\partial_{\beta_{k}\pi^{2}}\ell_{it})$, we have, for some $\tilde{\nu}>0$ and some $C>0$,
\begin{align*}
&\sup_{\theta\in\Theta_{0}}\max_{i}\mathbb{E}_{\theta}\left[\xi_{it}^{8+\tilde{\nu}}\right]\leq C, \qquad \sup_{\theta\in\Theta_{0}}\max_{i,t}\sum_{s}\mathbb{E}_{\theta}\left[\xi_{it}\xi_{is}\right]\leq C,\\
&\sup_{\theta\in\Theta_{0}}\max_{i}\mathbb{E}_{\theta}\left\{ \left[\frac{1}{\sqrt{T}}\sum_{t=1}^{T}\xi_{it}\right]^{8}\right\} \leq C, \qquad \sup_{\theta\in\Theta_{0}}\max_{t}\mathbb{E}_{\theta}\left\{ \left[\frac{1}{\sqrt{N}}\sum_{i=1}^N\xi_{it}\right]^{8}\right\} \leq C,\\
& \sup_{\theta\in\Theta_{0}}\max_{i,j}\mathbb{E}_{\theta}\left\{ \left[\frac{1}{\sqrt{T}}\sum_{t=1}^{T}\left[\xi_{it}\xi_{jt}-\mathbb{E}_{\phi}\left(\xi_{it}\xi_{jt}\right)\right]\right]^{4}\right\} \leq C,
\end{align*}
uniformly in $N,T$.
\item[(vii)] $\sup_{\theta\in\Theta_{0}}\Vert \overline{\mathcal{H}}^{-1}\Vert _{q}=\mathcal{O}_{p}(1)$. 
\end{enumerate}
Then, parts $(v)$--$(vi)$ of Assumption \ref{assu:uniform regularity} are satisfied with the same
$q,\epsilon,r_{\beta},r_{\phi}$ as used here.
\end{lemma}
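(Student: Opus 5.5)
The plan is to follow the proof of Lemma S.7 in \citet{FVWeidner2016} term by term. Each of the norm bounds required in parts (v)--(vi) of Assumption~\ref{assu:uniform regularity} is bounded by an expression that depends on the data only through the averages listed in conditions (i)--(iv) and (vi)--(vii) of the lemma. The modification is bookkeeping: every constant produced along the way must be shown not to depend on $\theta\in\Theta_0$. Since the hypotheses are already stated in $\mathcal{O}_p^u$ form, or as suprema over $\Theta_0$ of moments, uniformity is inherited. Here Lemma~\ref{propeties of O} (sums and products of $\mathcal{O}_p^u$ terms) and Markov/Chebyshev inequalities with $\sup_\theta$-bounded moments replace their pointwise counterparts.

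Concretely, the steps are as follows. First, the pure-$\beta$ terms ($\partial_\beta\mathcal{L}$, $\partial_{\beta\beta'}\mathcal{L}$, $\partial_{\beta\beta'}\widetilde{\mathcal{L}}$, $\partial_{\beta\beta\beta}\mathcal{L}$) follow directly from condition (i) after the $1/\sqrt{NT}$ rescaling. Second, for the score, $\mathcal{S}$ has entries $(NT)^{-1/2}\sum_t\partial_\pi\ell_{it}$ and $(NT)^{-1/2}\sum_i\partial_\pi\ell_{it}$. Then $\|\mathcal{S}\|_q^q$ is a sum of $N+T$ terms, each of which is controlled by (iv), giving $\|\mathcal{S}\|_q=\mathcal{O}_p^u((NT)^{-1/4+1/(2q)})$. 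Third, mixed derivatives $\partial_{\beta\phi'}\mathcal{L}$ and $\partial_{\beta\beta\phi}\mathcal{L}$ are bounded in $q$-norm via H\"older's inequality and condition (ii), uniformly over the balls $\mathcal{B}(r_\beta,\beta)\times\mathcal{B}_q(r_\phi,\phi)$. Fourth, the third- and fourth-order $\phi$-derivative arrays have the sparse $(i,t)$ incidence structure of the panel. Their $q$-operator norms reduce to $\max_i T^{-1}\sum_t|\xi_{it}|$ and $\max_t N^{-1}\sum_i|\xi_{it}|$, so condition (iii) gives the $(NT)^{\epsilon}$ rates, again taking the supremum over the local balls.

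The spectral-norm statements in (vi) are the delicate part. $\widetilde{\mathcal{H}}$ has diagonal blocks that are sums of centered variables and off-diagonal block $(NT)^{-1/2}(\xi_{it})$. I would bound its spectral norm by the Frobenius norm of the off-diagonal block, using $\mathbb{E}\|\cdot\|_F^4$ or the eighth-moment bounds in (vi). The same device handles $\partial_{\beta\phi\phi}\widetilde{\mathcal{L}}$ and $\partial_{\beta\phi}\widetilde{\mathcal{L}}$ (the latter also via (iv)). The key point is that each expectation bound is $\sup_{\theta\in\Theta_0}\max_{i,t}$, so Markov's inequality yields $\sup_\theta\mathbb{P}_\theta(\cdot>\epsilon)\to0$, that is, $o_p^u$.

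The main obstacle I expect is the last condition in (vi), $\|\sum_{g,h}\partial_{\phi\phi_g\phi_h}\widetilde{\mathcal{L}}(\overline{\mathcal{H}}^{-1}\mathcal{S})_g(\overline{\mathcal{H}}^{-1}\mathcal{S})_h\|=o_p^u((NT)^{-1/4})$. This expression involves products of correlated centered terms and $\mathcal{S}$ across the dependent time dimension. I would first decompose $\overline{\mathcal{H}}^{-1}$ into its diagonal approximation plus a remainder of order $\mathcal{O}^u((NT)^{-1/2})$ in max-norm; this is the analogue of Lemma~\ref{lemma:c2}, with (vii) controlling the $q$-norm. Then I would expand the resulting fourth-order sums and bound their expectations using conditional independence across $i$ from (v), the summable-covariance and eighth-moment bounds in (vi), and the last fourth-moment bound for $\xi_{it}\xi_{jt}$. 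Uniformity follows because each moment bound invoked holds with a $\theta$-free constant $C$.
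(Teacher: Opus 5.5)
Your overall route is the paper's: run the proof of Lemma S.7 of \citet{FVWeidner2016} with $\mathcal{O}_p$ replaced by $\mathcal{O}_p^u$. The paper's proof also requires the auxiliary Lemma S.6 of \citet{FVWeidner2016} to hold uniformly over $\Theta_0$. That lemma supplies the random-matrix bounds, and your sketch replaces it with something that does not work.

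You propose to bound $\|\widetilde{\mathcal{H}}\|$, and likewise $\|\partial_{\beta\phi\phi}\widetilde{\mathcal{L}}\|$, by the Frobenius norm of the off-diagonal block $(NT)^{-1/2}\tilde\xi$. Here $\tilde\xi_{it}=\partial_{\pi^2}\ell_{it}-\mathbb{E}_{\phi}(\partial_{\pi^2}\ell_{it})$, or the centered $\partial_{\beta_k\pi^2}\ell_{it}$ for the second quantity. But $\|(NT)^{-1/2}\tilde\xi\|_F^2=(NT)^{-1}\sum_{i,t}\tilde\xi_{it}^2$ is of exact order one whenever the $\tilde\xi_{it}$ have non-degenerate variance. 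So this gives only $\mathcal{O}_p^u(1)$, not the required $o_p^u((NT)^{-1/8})$. Even the sharper bound $\|\tilde\xi\|^4\le\|\tilde\xi\tilde\xi'\|_F^2$ yields only $\|\tilde\xi\|=\mathcal{O}_p^u(N^{3/4})$, i.e.\ $\mathcal{O}_p^u((NT)^{-1/8})$ rather than $o_p^u$. The diagonal blocks are not the problem: $\max_i|T^{-1/2}\sum_t\tilde\xi_{it}|=\mathcal{O}_p^u(N^{1/8})$ follows from the eighth-moment bounds by a union bound.

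\textbf{The missing step.} You need $\|\tilde\xi\|=\mathcal{O}_p^u(N^{5/8})$, which exploits cancellation across independent rows:
\begin{itemize}
\item Split $\tilde\xi\tilde\xi'=D+A$ with $D$ diagonal and $A$ off-diagonal, and use $\|A\|^4\le\operatorname{Tr}(A^4)$.
\item Show $\sup_{\theta\in\Theta_0}\mathbb{E}_\theta\operatorname{Tr}(A^4)=O(N^5)$. This uses conditional independence (v), the summable autocovariances, and the fourth-moment bound on $T^{-1/2}\sum_t[\tilde\xi_{it}\tilde\xi_{jt}-\mathbb{E}_\phi(\tilde\xi_{it}\tilde\xi_{jt})]$ in (vi); that condition is there precisely for this step.
\item Since $\|D\|=\mathcal{O}_p^u(N^{5/4})$, this gives $\|\widetilde{\mathcal{H}}\|=\mathcal{O}_p^u(N^{-3/8})=o_p^u((NT)^{-1/8})$. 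The constants are uniform because every moment bound in (vi) is a supremum over $\Theta_0$.
\end{itemize}

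\textbf{The last condition in (vi).} The same spectral bound, now with $r=3$, is also the right tool here. Your plan of expanding fourth-order sums in expectation is awkward, because $w=\overline{\mathcal{H}}^{-1}\mathcal{S}$ is built from the same data as $\tilde\xi$. Instead, expand $(w_i^\alpha+w_t^\gamma)^2$; the resulting pieces are bounded by $(NT)^{-1/2}$ times one of
\begin{itemize}
\item $\max_i|\sum_t\tilde\xi_{it}|\,\|w\|_4^2$,
\item $\|\tilde\xi\|\,\|w\|_4^2$,
\item $\|w\|_\infty\|\tilde\xi\|\,\|w\|$.
\end{itemize}
From (vii) you get $\|w\|_q\le\|\overline{\mathcal{H}}^{-1}\|_q\|\mathcal{S}\|_q$ and $\|w\|=\mathcal{O}_p^u(1)$. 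Hence each piece is $\mathcal{O}_p^u(N^{-7/8+1/q})=o_p^u((NT)^{-1/4})$. No diagonal approximation of $\overline{\mathcal{H}}^{-1}$ is needed, and the lemma's hypotheses do not directly supply one.
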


\begin{proof}
The proof follows the same steps as that of Lemma S.7 in \citet{FVWeidner2016}, except that $\mathcal{O}_{p}$ is replaced by $\mathcal{O}^u_{p}$. Moreover, Lemma $S.6$ in \citet{FVWeidner2016} holds uniformly for every $\theta \in \Theta_0$. 

\end{proof}

\subsection{Lemmas}
\begin{lemma} \label{propeties of O}
Generally, $\mathcal{O}_{p}^{u}(1)$ and $o_{p}^{u}(1)$ satisfy
\begin{align}
\mathcal{O}_{p}^{u}(1)+\mathcal{O}_{p}^{u}(1) & =\mathcal{O}_{p}^{u}(1), \label{Oup1}\\
\mathcal{O}_{p}^{u}(1)+o_{p}^{u}(1) & =\mathcal{O}_{p}^{u}(1), \label{Oup2}\\
o_{p}^{u}(1)+o_{p}^{u}(1) & =o_{p}^{u}(1),  \label{Oup3}\\
\mathcal{O}_{p}^{u}(1) o_{p}^{u}(1) & =o_{p}^{u}(1).  \label{Oup4}
\end{align}
\end{lemma}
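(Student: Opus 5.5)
This is the standard calculus of stochastic orders, here taken uniformly over $\Theta_0$. The plan is to prove each identity directly from the definitions of $\mathcal{O}_p^u$ and $o_p^u$, using only the union bound and monotonicity of probability. Each step is applied for fixed $\theta$ and then the supremum is taken over $\Theta_0$, exploiting $\sup(a+b)\le \sup a+\sup b$.

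\textbf{Properties \eqref{Oup1} and \eqref{Oup2}.} Let $A_N=\mathcal{O}_p^u(1)$ and $B_N=\mathcal{O}_p^u(1)$, and fix $\epsilon>0$. Choose $k_1,k_2$ with $\sup_{\theta\in\Theta_0}\mathbb{P}_\theta(|A_N(\theta)|>k_1)<\epsilon/2$ and $\sup_{\theta\in\Theta_0}\mathbb{P}_\theta(|B_N(\theta)|>k_2)<\epsilon/2$ for all $N$. Since $\{|A_N+B_N|>k_1+k_2\}\subset\{|A_N|>k_1\}\cup\{|B_N|>k_2\}$, the union bound gives
\[
\sup_{\theta\in\Theta_0}\mathbb{P}_\theta(|A_N+B_N|>k_1+k_2)<\epsilon.
\]
This proves \eqref{Oup1}.

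For \eqref{Oup2}, first note that $o_p^u(1)$ implies $\mathcal{O}_p^u(1)$. Take $k=1$: $\sup_\theta\mathbb{P}_\theta(|B_N|>1)<\epsilon$ for all $N\ge N_0$. For the finitely many $N<N_0$, we need a constant that works uniformly in $\theta$. Using the convention (implicit in the paper) that the bounds need only hold for large $N$, \eqref{Oup2} then follows from \eqref{Oup1}.

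\textbf{Property \eqref{Oup3}.} Use $\{|A_N+B_N|>\epsilon\}\subset\{|A_N|>\epsilon/2\}\cup\{|B_N|>\epsilon/2\}$. Taking suprema, both terms vanish as $N\to\infty$.

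\textbf{Property \eqref{Oup4}.} Given $\eta,\epsilon>0$, pick $k$ with $\sup_\theta\mathbb{P}_\theta(|A_N|>k)<\eta$. Then
\[
\{|A_NB_N|>\epsilon\}\subset\{|A_N|>k\}\cup\{|B_N|>\epsilon/k\}.
\]
Hence $\limsup_N\sup_\theta\mathbb{P}_\theta(|A_NB_N|>\epsilon)\le\eta$, and since $\eta$ is arbitrary the limit is zero.

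\textbf{Main obstacle.} The only delicate point is the "for all $N$" clause in the definition of $\mathcal{O}_p^u$ when $o_p^u$ terms enter. I would handle it by interpreting $\mathcal{O}_p^u$ as "for all sufficiently large $N$", consistent with how the paper uses these symbols asymptotically. Alternatively, one can assume finite-$N$ tightness of each fixed term, which holds for real-valued random variables with uniformly tight laws. With either reading, the remaining steps are routine.
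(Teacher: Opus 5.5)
Your proposal is correct and follows the same route as the paper: union bounds for \eqref{Oup1} and \eqref{Oup3}, the split $\{|A_N|>k\}\cup\{|B_N|>\epsilon/k\}$ for \eqref{Oup4}, and \eqref{Oup2} deduced from \eqref{Oup1}. Your caveat about the ``for all $N$'' clause is well taken. Under the literal definition, an $o_p^u(1)$ term need not be $\mathcal{O}_p^u(1)$, since a single early $N$ whose law is not tight uniformly in $\theta$ breaks it. The paper passes over this with ``follows trivially,'' so reading $\mathcal{O}_p^u$ as a statement for all sufficiently large $N$ is the right fix.
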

\begin{proof}
Let $X_{n}$ and $Y_{n}$ be $\mathcal{O}_{p}^{u}(1)$. Then, 
for any $\epsilon>0$, there exist $0<k_{X},k_Y<\infty$ and  such that
\begin{align*}
\sup_{\theta\in\Theta_{0}}\mathbb{P}_{\theta}(|X_{n}|>k_{X})&<\frac{\epsilon}{2},\\
\sup_{\theta\in\Theta_{0}}\mathbb{P}_{\theta}(|Y_{n}|>k_{Y})&<\frac{\epsilon}{2}
\end{align*}
for all $n$. Hence
\begin{align*}
\sup_{\theta\in\Theta_{0}}\mathbb{P}_{\theta}(|X_{n}+Y_{n}|>k_{X}+k_{Y})
\leq & \sup_{\theta\in\Theta_{0}}\left(\mathbb{P}_{\theta}(|X_{n}|>k_{X})+\mathbb{P}_{\theta}(|Y_{n}|>k_{Y})\right)\\
\leq & \sup_{\theta\in\Theta_{0}}\mathbb{P}_{\theta}(|X_{n}|>k_{X})
+\sup_{\theta\in\Theta_{0}}\mathbb{P}_{\theta}(|Y_{n}|>k_{Y})\\
\leq & \,\,\epsilon
\end{align*}
for all $n$. This proves (\ref{Oup1}), from which (\ref{Oup2}) follows trivially.

Now, let $X_{n}$ and $Y_{n}$ be $o_{p}^{u}(1)$. Then, for any $\epsilon>0$ and $\delta>0$, we have
\begin{align*}
\sup_{\theta\in\Theta_0}\mathbb{P}_{\theta}\left(|X_{n}|>\frac{\delta}{2}\right)&<\frac{\epsilon}{2},\\
\sup_{\theta\in\Theta_{0}}\mathbb{P}_{\theta}\left(|Y_{n}|>\frac{\delta}{2}\right)&<\frac{\epsilon}{2}
\end{align*}
for all $n$. Hence
\begin{align*}
\sup_{\theta\in\Theta_{0}}\mathbb{P}_{\theta}(|X_{n}+Y_{n}|>\delta)
\leq & \sup_{\theta\in\Theta_{0}}\mathbb{P}_{\theta}\left(|X_{n}|>\frac{\delta}{2}\right)
+\sup_{\theta\in\Theta_{0}}\mathbb{P}_{\theta}\left(|Y_{n}|>\frac{\delta}{2}\right)\\
\leq & \,\,\epsilon
\end{align*}
for all $n$. This proves (\ref{Oup1}).

Finally, let $X_{n}$ be $\mathcal{O}_{p}^{u}(1)$ and let $Y_{n}$ be $o_{p}^{u}(1)$. Then, for any $\delta>0$, 
\begin{align*}
\sup_{\theta\in\Theta_{0}}\mathbb{P}_{\theta}\left(|X_{n}Y_{n}|>\delta\right)
\leq & \sup_{\theta\in\Theta_{0}}\mathbb{P}_{\theta}\left(|X_{n}|>k\text{ or }  |Y_{n}|>\frac{\delta}{k}   \right)\\
\leq & \sup_{\theta\in\Theta_{0}}\mathbb{P}_{\theta}\left(|X_{n}|>k\right)
+\sup_{\theta\in\Theta_{0}}\mathbb{P}_{\theta}\left(|Y_{n}|>\frac{\delta}{k}\right)
\end{align*}
for any $0<k<\infty$. For any $\epsilon>0$, each term on the right-hand side can be made less than $\epsilon/2$ by choosing $k$ and $n$ large enough. This proves (\ref{Oup4}).
\end{proof}

\end{document}